\pdfoutput=1
\documentclass[10pt,pra,aps,superscriptaddress,nofootinbib,twocolumn,longbibliography]{revtex4-1}
\usepackage{amsmath,bbm}
\usepackage{amsthm}
\usepackage{amsmath}
\usepackage{latexsym}
\usepackage{amssymb}
\usepackage{float}
%\restylefloat{table}
\usepackage{graphicx}           
\usepackage{color}
\usepackage{xcolor}
\usepackage{mathpazo}
\usepackage{comment}
\usepackage{enumitem}
\usepackage{multirow}
\usepackage{setspace}
\usepackage{graphicx}
\usepackage{caption}
\usepackage{subcaption}

\usepackage[colorlinks=true,linkcolor=blue,citecolor=magenta,urlcolor=blue]{hyperref}
\usepackage{svg}

\newcommand{\be}{\begin{equation}}
\newcommand{\ee}{\end{equation}}
\newcommand{\bea}{\begin{eqnarray}}
\newcommand{\eea}{\end{eqnarray}}

\def\squareforqed{\hbox{\rlap{$\sqcap$}$\sqcup$}}
\def\qed{\ifmmode\squareforqed\else{\unskip\nobreak\hfil
\penalty50\hskip1em\null\nobreak\hfil\squareforqed
\parfillskip=0pt\finalhyphendemerits=0\endgraf}\fi}
\def\endenv{\ifmmode\;\else{\unskip\nobreak\hfil
\penalty50\hskip1em\null\nobreak\hfil\;
\parfillskip=0pt\finalhyphendemerits=0\endgraf}\fi}

\newcommand{\tr}{\text{Tr}}

\newcommand{\w}{\omega}

\newcommand{\ket}[1]{|#1\rangle}
\newcommand{\bra}[1]{\langle#1|}

\newcommand{\la}{\langle}
\newcommand{\ra}{\rangle}

\newcommand{\blk}{\color{black}}

\makeatletter
\newtheorem*{rep@theorem}{\rep@title}
\newcommand{\newreptheorem}[2]{%
\newenvironment{rep#1}[1]{%
 \def\rep@title{#2 \ref{##1}}%
 \begin{rep@theorem}}%
 {\end{rep@theorem}}}
\makeatother

\newtheorem{thm}{Theorem}%[section]
\newreptheorem{thm}{Theorem}
\newtheorem{lemma}{Lemma}
\newtheorem{definition}{Definition}

\newtheorem{coro}{Corollary}

\newtheorem{example}{Example}

\begin{document}

%%%%%%%%%%%%%%%%%%%%%%%%%%%%%%%%%%%%%%%%%%%%%%%%%%%%%%%%%%%%%%%%%%%

\title{Failure of epistemic models to explain the antidistinguishability of quantum mixed preparations}

%No epistemic model can explain anti-distinguishability of quantum mixed preparations}

%%%%%%%%%%%%%%%%%%%%%%%%%%%%%%%%%%%%%%%%%%%%%%%%%%%%%%%%%%%%%%%%%%%

\author{Sagnik Ray}
\affiliation{School of Physics, Indian Institute of Science Education and Research Thiruvananthapuram, Kerala 695551, India}
\author{Visweshwaran R}
\affiliation{School of Physics, Indian Institute of Science Education and Research Thiruvananthapuram, Kerala 695551, India}
\author{Debashis Saha}
%\email{saha@iisertvm.ac.in}
\affiliation{School of Physics, Indian Institute of Science Education and Research Thiruvananthapuram, Kerala 695551, India}
\affiliation{Department of Physics, School of Basic Sciences, Indian Institute of Technology Bhubaneswar, Odisha 752050, India}

%%%%%%%%%%%%%%%%%%%%%%%%%%%%%%%%%%%%%%%%%%%%%%%%%%%%%%%%%%%%%%%%%%%

\begin{abstract}
 We investigate the limits of epistemic models in reproducing the empirical predictions of general quantum preparations. This involves comparing the \textit{common quantum overlap} determined by the anti-distinguishability of a set of preparations with the \textit{common epistemic overlap} of the probability distribution over the ontic states describing these preparations. We show that no epistemic model with non-zero common overlap can reproduce the predictions of quantum theory for mixed preparations, even for qubit systems. We explicitly provide sets of distinct mixed preparations, starting from qutrit systems, that give rise to identical quantum mixed states for which the common epistemic overlap must be zero. Finally, we demonstrate the strongest form of \textit{preparation contextuality} by presenting pairs of mixed preparations that yield identical quantum mixed states while their epistemic overlap must vanish.
\end{abstract}

\maketitle

%%%%%%%%%%%%%%%%%%%%%%%%%%%%%%%%%%%%%%%%%%%%%%%%%%%%%%%%%%%%%%%%%%% 

\section{Introduction} 

One of the central unresolved queries within quantum theory revolves around the existence of an underlying theory where the quantum description of physical systems represents epistemic knowledge \cite{Leifer-review,PBR,Leifer-Maroney,maroney2012statistical,Barrett,Leifer,Branciard,Spekkens-toy,Renner,hance2022wave}. While initially appearing metaphysical, the ontological models framework offers a comprehensive approach to tackle such questions \cite{harrigan2010einstein}. This framework posits the existence of underlying ontic states (representing reality) where the actual description of a pure quantum system is expressed through a probability distribution over these ontic states. If a single ontic state can correspond to multiple quantum preparations, the model is termed epistemic. Interestingly, a well-known no-go theorem for epistemic models was presented by Pusey-Barrett-Rudolph \cite{PBR}. However, this result relies on auxiliary assumptions, without which successful epistemic models are constructed \cite{Lewis,Aaronson}. Nonetheless, as demonstrated by Barrett \textit{et al.} \cite{Barrett}, Leifer \cite{Leifer}, and Branciard \cite{Branciard}, all epistemic models fail to explain the distinguishability of quantum states in an asymptotic way. %Specifically, they present instances where the epistemic overlap (probability distribution overlap over ontic states) must approach zero as the number of states increases, while the comparable measure of distinguishability between two quantum states remains non-zero.

Notably, in these prior works, the epistemic models are probed to explain the distinguishability of two quantum pure preparations. 
This research delves into a more fundamental level, exploring whether any epistemic model can accurately reproduce the empirical predictions of quantum mixed preparations.
For an arbitrary number of quantum preparations, the epistemic overlap is determined by the common overlap between the probability distribution over ontic states. As we show here, the operational counterpart of epistemic overlap, referred to as quantum overlap, is intricately linked to the average probability of anti-distinguishing a set of preparations. A set of quantum preparations is considered \textit{non-epistemic} if the epistemic overlap is zero for all possible ontological models while the quantum overlap remains non-zero. In its strongest form, a set is \textit{fully non-epistemic} if the epistemic overlap vanishes for all ontological models while the quantum overlap reaches its maximum value of one. A weaker form, termed \textit{non-maximally epistemic}, implies the epistemic overlap must be strictly less than the quantum overlap.

This work uncovers several fundamental yet unexplored aspects of quantum theory. Our key findings are as follows. First, we identify several sets of quantum mixed preparations in qubit systems that are non-epistemic concerning the common overlap between three preparations. Second, we present a class of quantum mixed preparations in qutrit and ququart systems that are fully non-epistemic with respect to the common overlap between four and three mixed preparations, respectively. Additionally, we show that fully non-epistemic mixed preparations cannot exist in qubit systems when considering overlaps between an arbitrary number of preparations. 

Our final set of results focuses on the overlap between two mixed states. Here, we establish two remarkable findings: First, we prove that mixed states composed of mutually unbiased bases become fully non-epistemic as the dimension of the Hilbert space increases. Second, we show that in any Hilbert space with a dimension greater than four, there exist mixed states for which the epistemic overlap asymptotically approaches zero relative to the quantum overlap. The former result represents the strongest manifestation of no-go theorems for epistemic models in the context of mixed preparations and signifies the most compelling form of preparation contextuality \cite{Spekkens2005}. In general, preparation contextuality implies that a given pair of mixed preparations is non-maximally epistemic; however, the reverse is not necessarily true. To explicitly demonstrate this implication, we obtain a relation between the quantum-to-epistemic overlap gap and the success metric of the parity-oblivious multiplexing task, which serves as a witness for preparation contextuality in the simplest possible scenario. Besides, we determine the exact maximum possible value of the difference between quantum and epistemic overlap for pairs of qubit mixed preparations. 

All these results are obtained by establishing upper bounds on the epistemic overlap using empirically observed quantities, particularly the anti-distinguishability. Consequently, these findings are robustly testable in the prepare-and-measure experiments. 
It is worth mentioning that not every non-epistemic proof for two preparations leads to the rejection of epistemic models for pure preparations, namely, $\psi-$epistemic models \cite{PBR,Leifer-review}. We identify the instances where such refutation occurs and note that certain examples presented in this study fall within this category. 

The paper is structured as follows. In the next section, we introduce the common epistemic and quantum overlap for an arbitrary number of general quantum preparations. The following section introduces different notions of non-epistemicity and their relationship to previously proposed notions of non-classicality. The subsequent section contains the main results, which are divided into two subsections: the first addresses the common overlap between three and four preparations, and the second concerns the overlap between two preparations. 
%In the following section, we discuss the connection of non-epistemic explanation with the refutations of $\psi$-epistemic models and preparation non-contextual models. 
In the final section, we provide insights into our findings and outline directions for potential future work.

\section{Anti-distinguishability and common overlap between epistemic states}

As an \textit{operational theory}, quantum theory lays down the description of \textit{preparation procedures} denoted in general by density operators, \textit{transformation procedures} described by CPTP (completely positive trace-preserving) maps, and \textit{measurement procedures} described by POVM (positive operator-valued measurements) operators. Once all these procedures are specified, the operational theory provides a rule using which one is capable of predicting the likelihood of the measurement outcomes --- in this case, Born's rule. However, such a theory does not specify much about the system itself, since its first and foremost purpose is to predict experimental outcomes. An ontological model, on the other hand, seeks to explain the results of the operational theory in terms of attributes inherent to the system.

In the framework of ontological models, the complete specification of observer-independent attributes of a physical system, or the `real state of affairs' \cite{cs} of that system, is described by an \textit{ontic state} $\lambda$, which belongs to an \textit{ontic state space} $\Lambda$. Once a preparation procedure is done, the system occupies a specific $\lambda$. However, owing to the lack of fine-tuning of the preparation process and/or the inherent randomness present in nature, the quantum description $\rho$ of the system corresponds to many probable ontic states. Thus, $\rho$ is associated with an \textit{epistemic state} $\{\mu(\lambda|\rho)\}$, a distribution over the ontic state space.

Quantum mixed preparations are convex mixtures of pure states.  For a given mixed state $\varrho$, let one of its decompositions be $\{p_i;\psi_i\}_i\equiv \rho=\sum_ip_i|\psi_i\rangle\!\langle\psi_i|$, where $\{p_i\}_i$ are the convex coefficients and $\{\psi_i\}_i$ are the pure states. It is natural to assume that the corresponding epistemic state preserves this convexity, that is,
\be \label{fcm}
\forall \lambda, \ \ \mu(\lambda|\rho)=\sum_ip_i\ \mu(\lambda|\psi_i).
\ee 
Violation of this assumption would imply that the choice of preparing a mixed state is correlated with the ontic state $\lambda.$ Since a mixed state can be decomposed in infinitely many different ways, every mixed state has a corresponding set of epistemic states. Thus, here we make an important distinction between \textit{quantum mixed states} and \textit{quantum mixed preparations}. A quantum mixed state $\varrho$ is an operator that acts on the system's Hilbert space and can have many different convex decompositions. On the other hand, a quantum mixed preparation denotes a particular convex decomposition of the mixed state, and thus corresponds to a fixed set of convex coefficients and pure states that feature in that decomposition. Henceforth in this article, we shall follow the aforementioned notation, where a mixed state shall be denoted by $\varrho$, and any particular decomposition of it would be denoted by $\rho$ followed by either an explicit mention of the set of convex coefficients and pure states $\{p_i;\psi_i\}_i$ or an equation of the form $\rho=\sum_ip_i|\psi_i\rangle\!\langle\psi_i|$.
%When dealing with mixed states in quantum theory, this distinction is of no consequence but becomes important in the underlying ontological models.
%This assumption is consistent with the normalization condition on epistemic states, $\int_{\Lambda}\mu(\lambda|\rho)d\lambda=1$.

 Measurement in this framework is described by a set of conditional probabilities known as \textit{response functions}  $\{\xi(k|\mathcal{M},\lambda)\}_{k=1}^{n}$, which denotes the probability of obtaining an outcome $k$ given the system is in an ontic state $\lambda$, and the measurement procedure is $\mathcal{M}$. They obey the standard conditions imposed on conditional probabilities, which are $\xi(k|\mathcal{M},\lambda)\geq0$ and $\sum_k\xi(k|\mathcal{M},\lambda)=1$. Thus, the probability of obtaining an outcome $k$ after performing a measurement $\mathcal{M}=\{M_k\}_{k=1}^n$ on a quantum preparation $\rho$ is
\begin{equation}\label{prob}
    \tr(M_k\rho) =p\left(k|\mathcal{M},\rho\right)=\int_{\Lambda}\mu(\lambda|\rho)\xi(k|\mathcal{M},\lambda)d\lambda.
\end{equation}
In this article, we focus on a particular operational quantity known as the \textit{anti-distinguishability} of a set of quantum preparations, $\{\rho_x\}_x$. Anti-distinguishability is a prepare-and-measure game, where the player picks out any preparation from the set at random, and tries to guess which one they have \textit{not} picked out. If this task can be performed without error for all the preparations in the set, we call that set anti-distinguishable. Our goal is to explain anti-distinguishability using the framework of ontological models and identify sets of preparations where epistemic models fail to explain anti-distinguishability.

\begin{definition}[Anti-distinguishability]
    \textit{Anti-distinguishability} of a set of $n$ quantum preparations $\{\rho_x\}_{x=1}^n$ is defined as the average over the probabilities of outcomes of a $n-$outcome measurement procedure $\mathcal{M}=\{M_k\}_{k=1}^n$, such that the outcomes $k\neq x$ when the quantum preparation $\rho_x$ is being measured. The quantity is made measurement independent by choosing that $n-$outcome measurement procedure for which this average is maximum,
\begin{equation}\label{anti-d}
    \begin{split}
    A_Q^{[n]}=&\frac{1}{n}\max_{\{\mathcal{M}\}}\left(\sum_{x,k=1}^np\left(k\neq x|\rho_x,\mathcal{M}\right)\right)\\
    =&1-\frac{1}{n}\min_{\{\mathcal{M}\}}\left(\sum_{x=1}^np(x|\rho_x,\mathcal{M})\right)\\
    = & 1-\frac{1}{n}\min_{\{\mathcal{M}\}}\left(\sum_{x=1}^n\tr(M_x\rho_x)\right).
    \end{split}
\end{equation}
\end{definition}

Anti-distinguishability of a set of preparations can be connected to their epistemic states by expressing the probabilities of the outcomes by \eqref{prob}. As it turns out, this leads to a quantity known as the \textit{common epistemic overlap}. Note that we have used the superscript $[n]$ to denote the number of preparations under consideration.

\begin{definition}[Common epistemic overlap of $n$ preparations]
     The \textit{common epistemic overlap} of $n$ quantum preparations $\{\rho_x\}_{x=1}^n$ is defined as the common overlap between the epistemic states corresponding to the quantum preparations,
     \begin{equation}\label{epis_overlap}
         \omega_E^{[n]}\left(\rho_1,\cdots,\rho_n\right)=\int_{\Lambda}\min\left(\mu(\lambda|\rho_1),\cdots,\mu(\lambda|\rho_n)\right)d\lambda.
     \end{equation}
\end{definition}
The connection between $\omega_E^{[n]}$ and $A_Q^{[n]}$ is made by replacing the probabilities in \eqref{anti-d} with those defined in \eqref{prob},
\begin{equation} \label{we-aq}
    \begin{split}
        \sum_{x=1}^np(x|\rho_x,\mathcal{M})&=\sum_{x=1}^n\left(\int_{\Lambda}\mu(\lambda|\rho_x)\xi(x|\lambda,\mathcal{M})d\lambda\right)\\
        &=\int_{\Lambda}\left(\sum_{x=1}^n\mu(\lambda|\rho_x)\xi(x|\lambda,\mathcal{M})\right)d\lambda\\
        &\geq\int_{\Lambda}\min(\mu(\lambda|\rho_1),\cdots,\mu(\lambda|\rho_n))d\lambda\\
    &=\omega_E^{[n]}(\rho_1,\cdots,\rho_n).
    \end{split}
\end{equation}
The third line in \eqref{we-aq} follows from the fact that $\sum_{x=1}^n\xi(x|\lambda,\mathcal{M})=1$ for any given $\mathcal{M}$ and for all $\lambda \in \Lambda$. From \eqref{we-aq}, it becomes evident that
\begin{equation}\label{AQ_epis}
  \omega_E^{[n]}\leq n\left(1-A_Q^{[n]}\right).
\end{equation}
That there exists a relationship between $\omega_E^{[n]}$ and $A_Q^{[n]}$ is evident from \eqref{AQ_epis}. To better capture its essence, we introduce a new operational quantity known as the \textit{common quantum overlap}, which contains the same amount of information as $A_Q^{[n]}$, but whose interpretation is subtly different and better suited to establish a connection with the underlying ontological model.

\begin{definition}[Common quantum overlap of $n$ preparations]
    \textit{Common quantum overlap} of $n$ quantum preparations is defined in accordance with (\ref{AQ_epis}), and it encapsulates the degree to which the preparations are not anti-distinguishable,
    \begin{equation}\label{Q_overlap}
        \omega_Q^{[n]}=n\left(1-A_Q^{[n]}\right).
    \end{equation}
\end{definition}
 For $n$ quantum preparations, at most $A_Q^{[n]}=1$ and at least $1-\frac{1}{n}$, which implies $\omega_Q^{[n]}\in[0,1]$. From \eqref{AQ_epis} and \eqref{Q_overlap},
 \begin{equation}\label{we<wq}
     \omega_E^{[n]}(\rho_1,\cdots,\rho_n)\leq\omega_Q^{[n]}(\rho_1,\cdots,\rho_n).
 \end{equation}
 In the case of an anti-distinguishable set of quantum preparations, $\omega_E^{[n]}=\omega_Q^{[n]}=0$. 
The above relation \eqref{we<wq} represents the degree to which an epistemic theory can explain the prediction of the quantum theory. Further elaboration on this has been provided in the next section. Here, we offer a brief discussion on \textit{distinguishability} and how it is relevant for our purposes.
 
It is not difficult to see that for $n=2$, anti-distinguishability \eqref{anti-d} is equal to minimum-error-state discrimination \cite{SchmidPRX}, also referred to as distinguishability \cite{cs}. Let us consider two quantum preparations, $\rho_1$ and $\rho_2$. Distinguishability is then an optimization over every two-outcome measurement $\mathcal{M}=\{M_1,M_2\}$ such that
\begin{equation} \label{dist_def}
\begin{split}
    D_Q(\rho_1,\rho_2)=&\frac{1}{2}\max_{\{\mathcal{M}\}}\left(\tr(M_1\rho_1)+\tr(M_2\rho_2)\right)\\
    =&\frac{1}{2}\max_{\{\mathcal{M}\}}\left(\tr(\tilde{M}_2\rho_1)+\tr(\tilde{M}_1\rho_2)\right)\\
    =& A_Q(\rho_1,\rho_2),
    \end{split}
\end{equation} 
where we have relabelled $M_1\rightarrow \tilde{M}_2$ and $M_2\rightarrow \tilde{M}_1$.
Thus for $n=2$,
\be \label{Lqn2}
\omega_Q = 2 (1 - D_Q).
\ee
Eq. \eqref{dist_def} has a closed-form expression,
\begin{equation} \label{distin}
    D_Q(\rho_1,\rho_2)=\frac{1}{2}\left(1+T(\rho_1,\rho_2)\right),
\end{equation}
where $T(\rho_1,\rho_2)$ is known as the \textit{trace distance} between $\rho_1$ and $\rho_2$. It serves as a metric on the space of density matrices and is defined as,
\begin{equation}
\begin{split}
    T(\rho_1,\rho_2) =\frac{1}{2}\tr(|\rho_1-\rho_2|) = \frac{1}{2}\sum_{i=1}^r|\alpha_i|,
    \end{split}
\end{equation}
 where $\alpha_i$ is the $i-$th eigenvalue of $\rho_1-\rho_2$ and $r$ is its rank. For pure states $\ket{\psi}$ and $\ket{\phi}$ it turns out that $T(\psi,\phi)=\sqrt{1-|\bra{\psi}\phi\rangle|^2}$ and consequently,
\begin{equation} \label{distin_pure}
    D_Q(\psi,\phi)=\frac{1}{2}\left(1+\sqrt{1-|\bra{\psi}\phi\rangle|^2}\right).
\end{equation}
These definitions will become important in Section \ref{sec3_2}, where we have looked into the epistemic overlap of two mixed preparations.

Note that we have omitted the superscript $[n]$ while dealing with cases where $n=2$, which is the convention we have followed throughout this article.

\section{Notions of non-epistemicity for mixed preparations} \label{sec3-1}

Quantum overlap $\omega_Q^{[n]}$ of a set of preparations is a characteristic of that set, and quantum theory, being an operational theory, provides a prescription to predict its value. However, to seek a realist explanation for why this value takes a particular form, one must turn to ontological models.
An epistemic explanation for quantum overlap suggests that some ontic states appear in the epistemic states corresponding to multiple preparations, leading to non-zero epistemic overlap $\omega_E^{[n]}$ which in turn leads to non-zero $\omega_Q^{[n]}$. If the epistemic overlap fully accounts for the anti-distinguishability of a set of preparations, meaning $\omega_E^{[n]}$ precisely matches $\omega_Q^{[n]}$, then we have a maximally epistemic explanation.
\begin{definition}[Maximally epistemic]
  A set of $n$ preparations $\{\rho_x\}_{x=1}^n$ is maximally epistemic if there exists at least one ontological model in which the epistemic overlap $\omega_E^{[n]}$ is equal to the quantum overlap $\omega_Q^{[n]}$, 
        $\omega_E^{[n]} = \omega_Q^{[n]}.$
\end{definition}
If no ontological model satisfies $\omega_E^{[n]} = \omega_Q^{[n]}$ and instead $\omega_E^{[n]}<\omega_Q^{[n]}$ for all possible ontological models, the set is non-maximally epistemic. 
%This represents a failure of epistemic models to a certain extent in explaining anti-distinguishability.
%
\begin{definition}[Non-maximally epistemic]
  A set of $n$ mixed preparations $\{\rho_x\}_{x=1}^n$ is non-maximally epistemic if the epistemic overlap $\omega_E^{[n]}$ is strictly less than the quantum overlap $\omega_Q^{[n]}$ for all possible ontological models, $\omega_E^{[n]}<\omega_Q^{[n]}$.
\end{definition}
Nonetheless, in this case, anti-distinguishability has an epistemic explanation but is limited by coarse-grained measurements that fail to identify individual ontological states. This fact is well encapsulated by Eq.~\eqref{we<wq}. However, if the epistemic overlap $\omega_E^{[n]}$ is zero compared to the quantum overlap $\omega_Q^{[n]}$ for all possible ontological models, then no ontic state appears in all the preparations.  This signifies the absence of any epistemic explanation for the quantum overlap, leading to the following definition.
\begin{definition}[Non-epistemic]
 A set of $n$ mixed preparations $\{\rho_x\}_{x=1}^n$ is non-epistemic if $\omega_E^{[n]}=0$ for all possible ontological models while $\omega_Q^{[n]}>0$.
\end{definition}
The most extreme case of non-epistemicity occurs when the quantum overlap reaches its maximum possible value, meaning that the mixed preparations yield identical quantum states. Such a scenario can arise exclusively when dealing with mixed preparations. We define this case as follows.
\begin{definition}[Fully non-epistemic]
    A set of $n$ mixed preparations $\{\rho_x\}_{x=1}^n$ is fully non-epistemic if the epistemic overlap $\omega_E^{[n]}=0$ while the quantum overlap $\omega_Q^{[n]}=1$ for all possible ontological models.
\end{definition}
In this work, non-maximally epistemic cases are briefly addressed in Section \ref{sec3_2}. However, the primary focus is on identifying and analyzing non-epistemic and fully non-epistemic cases.

It is worth noting that the concept of a maximally epistemic model for pure preparations has been previously proposed, and its limitations have been studied in \cite{Leifer-Maroney,Leifer-review,Leifer,Barrett,Branciard}. For two mixed preparations, non-maximally epistemicity is referred to as `excess ontological distinctness' in \cite{cs}, and its existence has been shown. On the other hand, preparation contextuality, introduced by Spekkens \cite{Spekkens2005}, is a specific instance of non-maximally epistemic mixed preparations. Preparation non-contextuality, the negation of this concept, asserts that any two mixed preparations yielding identical quantum mixed states must correspond to the same epistemic distributions. Note that for two identical mixed preparations $\rho_1$ and $\rho_2$, the distinguishability is 1/2, which is equivalent to $\omega_Q(\rho_1,\rho_2)=1$ due to \eqref{Lqn2}. Under preparation non-contextuality, one therefore has $\omega_E(\rho_1,\rho_2)=1$. Conversely, any violation of preparation non-contextuality implies $\omega_E(\rho_1,\rho_2) < \omega_Q(\rho_1,\rho_2)=1.$ Preparation contextuality has been established as a cornerstone of the nonclassical features of quantum theory \cite{Spekkens2005,Review,PuseyPRA2018,PuseyPRL,Spekkens2008prl,interference,UR,mate-prl-2023,Chaturvedi2021characterising,Plavala-prl,Shin2021QuantumContextual,XuPRA} and plays a crucial role in quantum information processing and communication \cite{Spekkens2009,SikoraNJP,SahaNJP,sahaPRA,PanPRA,SchmidPRL22,SchmidPRX,LostaglioPRL,Randomness_certify,State_discri_2}.

\begin{definition}[Preparation contextual]
    A pair of mixed preparations $\{\rho_1,\rho_2\}$ is said to be preparation contextual, if $\omega_Q(\rho_1,\rho_2)=1$ but $\omega_E(\rho_1,\rho_2)<1$.
\end{definition}

%While, preparation contextuality requires $\omega_E(\rho_1,\rho_2)<\omega_Q(\rho_1,\rho_2)$, 
 As discussed in Section \ref{sec3_2}, there exist pairs of non-maximally epistemic mixed states that nevertheless do not exhibit preparation contextuality. In particular, we demonstrate the existence of non-identical qubit states $\rho_1$ and $\rho_2$ such that $\omega_E(\rho_1,\rho_2)<\omega_Q(\rho_1,\rho_2)<1$.  

The strongest manifestation of preparation contextuality arises when the epistemic overlap $\omega_E(\rho_1,\rho_2)$ vanishes, which is equivalent to fully non-epistemicity for two mixed preparations. We introduce this formally as follows.
\begin{definition}[Fully preparation contextual or fully non-epistemic pairs of preparations]
    A pair of mixed preparations $\{\rho_1,\rho_2\}$ is said to be fully preparation contextual if $\omega_E(\rho_1,\rho_2) = 0$ while $\omega_Q(\rho_1,\rho_2) = 1$.
\end{definition}
We will show later that fully preparation contextual mixed preparations exist as the dimension of the states increases.

\section{Non-epistemic mixed preparations} \label{sec3-2}

Following, we present the theorem that forms the backbone of this article, from which many of the results follow.

\begin{widetext}
 \begin{thm} \label{t1}
Consider a set of $n$ mixed preparations $\{\rho_k\}_{k=1}^n$, each of which is a convex mixture of $m$ pure states $\{\ket{\psi_{i_k|k}}\}_{i_k=1}^m$ with arbitrary convex coefficients $\{p_{i_k|k}\}_{i_k=1}^m$, where subscript $k$ denotes the mixed preparation and the subscript $i_k|k$ denotes the pure state $\ket{\psi_{i_k|k}}$ belonging to the decomposition of $\rho_k$. Thus, $\rho_k=\sum_{i_k=1}^{m}p_{i_k|k}|\psi_{i_k|k}\rangle\!\langle\psi_{i_k|k}|$. Then the following relation holds for this set of mixed preparations,
    \begin{subequations} \label{wr<ws}
    \begin{align}
        \omega_E^{[n]}\left(\rho_1,\rho_2,\cdots,\rho_n\right) &\leq\sum_{i_1,i_2,\cdots,i_n=1}^{m}\max\left(p_{i_1|1},p_{i_2|2}\cdots,p_{i_n|n}\right)\omega_{E}^{[n]}\left(\psi_{i_1|1},\psi_{i_2|2},\cdots,\psi_{i_n|n}\right) \label{sub_eq1}\\
        &\leq n\sum_{i_1,i_2,\cdots,i_n=1}^{m}\max\left(p_{i_1|1},p_{i_2|2}\cdots,p_{i_n|n}\right)\Big(1-A_{Q}^{[n]}\left(\psi_{i_1|1},\psi_{i_2|2},\cdots,\psi_{i_n|n}\right)\Big). \label{sub_eq2}
        \end{align}
    \end{subequations}
 Moreover, if the left-hand-side of \eqref{sub_eq1} is zero, then the inequality becomes equality, that is, each term on the right-hand-side is zero.
\end{thm}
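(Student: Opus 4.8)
The plan is to reduce both inequalities, for each fixed ontic state $\lambda$, to a single elementary combinatorial inequality about minima of sums, and then integrate over $\Lambda$.

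First I would apply the convexity assumption \eqref{fcm} to write $\mu(\lambda|\rho_k)=\frac{1}{\beta_k}\sum_{i=1}^{m}\alpha_{i|k}\,\mu(\lambda|\psi_{i|k})$ for every $k$. Putting $L=\lcm(\beta_1,\dots,\beta_n)$ and $c_{i|k}=(L/\beta_k)\,\alpha_{i|k}$, the arithmetic hypotheses guarantee that each $c_{i|k}$ is a nonnegative integer with $\sum_i c_{i|k}=L$; hence $\mu(\lambda|\rho_k)=\frac1L\sum_{j=1}^{L}a^{(k)}_j(\lambda)$, where $\{a^{(k)}_j(\lambda)\}_{j=1}^{L}$ denotes the multiset in which the value $\mu(\lambda|\psi_{i|k})$ occurs with multiplicity $c_{i|k}$. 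A routine bookkeeping step, using $\prod_k c_{i_k|k}=(L^{n}/\prod_k\beta_k)\prod_k\alpha_{i_k|k}$ together with the definition \eqref{epis_overlap}, then shows that the right-hand side of \eqref{sub_eq1} equals $\frac1L\int_{\Lambda}\big(\sum_{j_1,\dots,j_n=1}^{L}\min_{k}a^{(k)}_{j_k}(\lambda)\big)\,d\lambda$, whereas the left-hand side is $\omega_E^{[n]}(\rho_1,\dots,\rho_n)=\frac1L\int_{\Lambda}\min_{k}\big(\sum_{j=1}^{L}a^{(k)}_j(\lambda)\big)\,d\lambda$.

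Thus \eqref{sub_eq1} follows once I prove the pointwise inequality
\[
\min_{1\le k\le n}\Big(\sum_{j=1}^{L}a^{(k)}_j\Big)\ \le\ \sum_{j_1,\dots,j_n=1}^{L}\ \min_{1\le k\le n}a^{(k)}_{j_k}
\]
for arbitrary nonnegative reals $a^{(k)}_j$. I expect this to be the crux of the argument; notably the naive attempt of retaining only the diagonal terms $j_1=\dots=j_n$ fails, since those can sum to strictly less than $\min_k\sum_j a^{(k)}_j$, so the off-diagonal terms must be exploited. I would prove it by induction on $n$, the case $n=1$ being trivial. For the step, relabel so that $k=n$ attains the outer minimum and regroup the right-hand sum over $j_n$; writing $b=a^{(n)}_{j_n}$ and using $\min(x_1,\dots,x_{n-1},b)=\min\big(\min(x_1,b),\dots,\min(x_{n-1},b)\big)$, the first $n-1$ arrays can be replaced by their truncations $\tilde a^{(k)}_j=\min(a^{(k)}_j,b)$, and the induction hypothesis bounds the inner sum below by $\min_{k\le n-1}\sum_j\tilde a^{(k)}_j$. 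The proof then closes with the one-line fact that $\sum_j\min(x_j,b)\ge b$ whenever $x_j\ge0$ and $\sum_j x_j\ge b$ (if some $x_{j_0}\ge b$ its term already equals $b$; otherwise every truncation is inert and the sum is unchanged), applied with $b=a^{(n)}_{j_n}\le\sum_j a^{(n)}_j\le\sum_j a^{(k)}_j$ for each $k\le n-1$.

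Given \eqref{sub_eq1}, the bound \eqref{sub_eq2} is immediate by applying $\omega_E^{[n]}\le n(1-A_Q^{[n]})$ — that is, \eqref{we<wq} with \eqref{Q_overlap}, equivalently \eqref{AQ_epis} — to each pure-state term on its right-hand side. For the equality clause, suppose $\omega_E^{[n]}(\rho_1,\dots,\rho_n)=0$. Since $\min_k\mu(\lambda|\rho_k)\ge0$, it vanishes for almost every $\lambda$, so for a.e.\ $\lambda$ some $\rho_k$ has $\mu(\lambda|\rho_k)=0$, and then by \eqref{fcm} (a sum of nonnegative terms) $\mu(\lambda|\psi_{i|k})=0$ for every $i$ with $\alpha_{i|k}>0$. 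Hence for any tuple $(i_1,\dots,i_n)$ with all $\alpha_{i_k|k}>0$ we get $\min_k\mu(\lambda|\psi_{i_k|k})=0$ for a.e.\ $\lambda$, i.e.\ $\omega_E^{[n]}(\psi_{i_1|1},\dots,\psi_{i_n|n})=0$; tuples with some $\alpha_{i_k|k}=0$ contribute a vanishing coefficient. Therefore every summand on the right-hand side of \eqref{sub_eq1} is zero, so both sides of \eqref{sub_eq1} equal zero and the inequality is saturated.
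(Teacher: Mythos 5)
Your proof is correct. The overall reduction is the same as the paper's: use the convexity assumption \eqref{fcm}, rescale by $L=\lcm(\beta_1,\dots,\beta_n)$ so that each $\mu(\lambda|\rho_k)$ becomes $\frac1L$ times a sum of $L$ nonnegative terms with integer multiplicities $c_{i|k}=(L/\beta_k)\alpha_{i|k}$, prove a pointwise ``min of sums $\leq$ sum of mins'' inequality, integrate over $\Lambda$, and then pass to \eqref{sub_eq2} via $\omega_E^{[n]}\leq n(1-A_Q^{[n]})$. Where you genuinely diverge is in the proof of the crux combinatorial inequality (the paper's Lemma 1): the paper fixes the array attaining the outer minimum and runs a fairly intricate counting argument, partitioning the $r^{n-1}$ terms into ``batches'' and tallying multiplicities $\#a_{i|k}=(r-i)^{k-2}(r-i+1)^{n-k}$ to show that if no term returns $a_{i_1|1}$ then some full sum $\sum_{i_k}a_{i_k|k}$ must be contained in the expansion. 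Your induction on $n$, using $\min(x_1,\dots,x_{n-1},b)=\min\bigl(\min(x_1,b),\dots,\min(x_{n-1},b)\bigr)$ to truncate the first $n-1$ arrays at $b=a^{(n)}_{j_n}$ and the elementary fact that $\sum_j\min(x_j,b)\geq b$ when $\sum_j x_j\geq b$, reaches the same conclusion with less bookkeeping and is arguably more transparent; the paper's argument, in exchange, makes explicit the combinatorial structure of which terms in the expanded sum reproduce a full marginal sum. Your treatment of the equality clause is also slightly different but equivalent: you argue directly that $\omega_E^{[n]}(\rho_1,\dots,\rho_n)=0$ forces $\min_k\mu(\lambda|\rho_k)=0$ almost everywhere and hence, by nonnegativity of the convex decomposition, kills every pure-state overlap with nonvanishing coefficient, whereas the paper folds this into the equality statement of its Lemma 1; both are sound.
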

\end{widetext}
The proof of this theorem is provided in Appendix \ref{app:pot1}. This theorem allows us to establish a relation between the epistemic overlap of $n$ mixed preparations and the epistemic overlap of their decompositions. Notice that the theorem also applies to sets of mixed preparations with different numbers of pure states in their decompositions, even though in the theorem statement, the number is the same throughout, which is $m$. This is possible because we can assume $m$ to be the maximum number of pure states among all the decompositions of the mixed preparations, and for those preparations that contain pure states less than $m$, we can do the following. Suppose every other preparation is made out of $m=3$ pure states and one is made out of two, $\rho=p\ket{\psi_1}\!\bra{\psi_1}+(1-p)\ket{\psi_2}\!\bra{\psi_2}$. One can rewrite $\rho$ as $\rho=p/2\ket{\psi_1}\!\bra{\psi_1}+p/2\ket{\psi_1}\!\bra{\psi_1}+(1-p)\ket{\psi_2}\!\bra{\psi_2}$. The first two pure states in the decomposition are the same with modified convex coefficients, making $m=3$ for this preparation as well. Such a trick has been used while applying Theorem \ref{t1} to Example \ref{ex_1} in Section \ref{sec_A_1}.

A given mixed state corresponds to infinitely many different preparations, and this inequality holds for all such preparations. That leads to some interesting features, such as the scenario where we have $n$ maximally mixed preparations, each one being decomposed into a complete orthonormal set of pure states distinct from the other, leading to consequences relevant to preparation contextuality (discussed in detail in Section \ref{sec5}), or cases where the decompositions are perfectly anti-distinguishable, making the mixed preparations' epistemic overlap vanish. 

 It is important to note that since the epistemic overlap of the mixed preparations is bounded by the sum of the quantum overlap of its decompositions \eqref{sub_eq2}, which are operational quantities, it is an experimentally robust way to obtain an upper bound on the epistemic overlap.

Below, we have provided a corollary of Theorem \ref{t1}, which deals with epistemic and quantum overlaps of $n$ maximally mixed preparations. 

 \begin{coro} \label{c1}
    Consider a set of $n$ maximally mixed preparations $\{\rho_k\}_{k=1}^n$, all of which act on the Hilbert space $\mathbbm{C}^d$. Each one can be decomposed into a complete orthonormal set of pure states $\{\ket{\psi_{i_k|k}}\}_{i_k=1}^d$ such that $\rho_k=(1/d)\sum_{i_k=1}^{d}|\psi_{i_k|k}\rangle\!\langle\psi_{i_k|k}|=\mathbbm{1}/d$ for all $k$. Evidently, quantum overlap $\omega_Q^{[n]}(\rho_1,\cdots,\rho_n)=1$. Then the following relation holds:
        \bea \label{coro1_eq}
                &  & \omega_E^{[n]}(\rho_1,\cdots,\rho_n) \nonumber \\
                &\leq & \frac{1}{d}\sum_{i_1,\cdots,i_n=1}^{d}\omega_{E}^{[n]}\left(\psi_{i_1|1},\cdots,\psi_{i_n|n}\right)\nonumber \\
                &\leq & \frac{n} {d}\sum_{i_1,\cdots,i_n=1}^{d}\left(1-A_Q^{[n]}\left(\psi_{i_1|1},\cdots,\psi_{i_n|n}\right)\right) .
            \eea 
\end{coro}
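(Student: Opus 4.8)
The plan is to derive Corollary \ref{c1} directly from Theorem \ref{t1} by specializing to the case where every $\rho_k$ is the maximally mixed state on $\mathbbm{C}^d$. First I would set $m=d$, $\beta_k=d$ for all $k$, and $\alpha_{i_k|k}=1$ for all $i_k$ and $k$, since each $\rho_k=(1/d)\sum_{i_k=1}^d \ket{\psi_{i_k|k}}\!\bra{\psi_{i_k|k}}=\mathbbm{1}/d$ is an equal-weight convex combination of an orthonormal basis. These choices manifestly satisfy the hypotheses of Theorem \ref{t1}: the $\beta_k$ are natural numbers, the $\alpha_{i_k|k}$ are nonnegative integers, and $\sum_{i=1}^d \alpha_{i|k} = d = \beta_k$ for each $k$.

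Next I would substitute into \eqref{wr<ws}. Since all $\beta_k$ equal $d$, we have $\lcm(\beta_1,\dots,\beta_n)=d$ and $\prod_{k=1}^n \beta_k = d^n$, so the prefactor collapses: $(\lcm)^{n-1}/\prod_k\beta_k = d^{n-1}/d^n = 1/d$. Because every $\alpha_{i_k|k}=1$, the product $\prod_{k=1}^n \alpha_{i_k|k}=1$ drops out of each summand. This turns \eqref{sub_eq1} into $\omega_E^{[n]}(\rho_1,\dots,\rho_n)\le \frac{1}{d}\sum_{i_1,\dots,i_n=1}^d \omega_E^{[n]}(\psi_{i_1|1},\dots,\psi_{i_n|n})$, and \eqref{sub_eq2} into the corresponding bound with $\omega_E^{[n]}$ replaced by $n(1-A_Q^{[n]})$ — exactly the two inequalities claimed in \eqref{coro1_eq}.

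It then remains only to justify the side remark that $\omega_Q^{[n]}(\rho_1,\dots,\rho_n)=1$. Since each $\rho_k=\mathbbm{1}/d$, all $n$ preparations are identical, so for any $n$-outcome measurement $\mathcal{M}$ we have $\sum_{x=1}^n p(x|\rho_x,\mathcal{M}) = \sum_{x=1}^n p(x|\mathbbm{1}/d,\mathcal{M}) = \sum_{x=1}^n \tr(M_x/d) = (1/d)\tr(\sum_x M_x) = (1/d)\tr(\mathbbm{1}) = 1$, independent of $\mathcal{M}$; hence the minimum in \eqref{anti-d} is $1$, giving $A_Q^{[n]}=1-1/n$ and therefore $\omega_Q^{[n]}=n(1-A_Q^{[n]})=1$ by \eqref{Q_overlap}.

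This corollary is essentially a direct substitution, so there is no real obstacle; the only point requiring a moment's care is verifying that the integrality constraints of Theorem \ref{t1} ($\beta_k\in\mathbbm{N}$, $\alpha_{i_k|k}\in\mathbbm{Z}_{\ge0}$) are met — which they trivially are here — and correctly evaluating the $\lcm$-prefactor, which I expect to be the one place a reader might otherwise stumble.
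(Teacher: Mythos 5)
Your proposal is correct and follows essentially the same route as the paper: a direct substitution of $\beta_k=d$, $\alpha_{i_k|k}=1$ into Theorem \ref{t1}, with the $\lcm$-prefactor correctly evaluating to $1/d$. The extra verification that $\omega_Q^{[n]}=1$ (which the paper simply calls evident) is a fine addition and is computed correctly.
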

  \begin{proof}
    The proof follows from Theorem \ref{t1}. Instead of having $n$ general sets of pure states, we have $n$ complete and orthonormal sets each containing $d$ number of states. The convex coefficients are $p_{i_k|k}=1/d$ for all $i_k \in \{1,\cdots,d\}$ and $k \in \{1,\cdots,n\}$. Substituting these values in \eqref{wr<ws}, we obtain \eqref{coro1_eq}. 
\end{proof} 
    
\subsection{Sets of more than two non-epistemic mixed preparations} \label{sec3_1}

In the following sub-sub-sections, we will use Theorem \ref{t1} and Corollary \ref{c1} to present criteria that can be used to identify non-epistemic and fully non-epistemic quantum preparations.

\subsubsection{Non-epistemic preparations in $d=2$}\label{sec_A_1}
First, we provide an explicit example of a set of non-epistemic preparations. 
\begin{example}[Non-epistemic, $n=3,d=2$] \label{ex_1}
    Consider the following set of qubit preparations, $\rho_1=|0\rangle\!\langle0|,\rho_2=|+\rangle\!\langle+|,\rho_3=\frac{1}{2}\left(|1\rangle\!\langle1|+|-\rangle\!\langle-|\right)$. One can show that, $\omega_E^{[3]}\left(\rho_1,\rho_2,\rho_3\right)=0$ while $\omega_Q^{[3]}\left(\rho_1,\rho_2,\rho_3\right)=0.1161$.
\end{example}

    Invoking Theorem \ref{t1} with respect to the given preparations we see that $p_{1|1}=p_{2|1}=p_{1|2}=p_{2|2}=p_{1|3}=p_{2|3}=1/2$. Substituting these in \eqref{wr<ws} we have, 
    \begin{equation}\label{eqex}
    \begin{split}
\omega_E^{[3]}\left(\rho_1,\rho_2,\rho_3\right)&\leq 6\Big(2-A_Q^{[3]}(|0\rangle,|+\rangle,|1\rangle)\\
& \qquad -A_Q^{[3]}(|0\rangle,|+\rangle,|-\rangle)\Big).
    \end{split}
    \end{equation}
    Consider the set $\{|0\rangle,|+\rangle,|1\rangle\}$. These three states are perfectly anti-distinguishable because there exists a POVM set $\mathcal{M}\equiv\{M_1=|1\rangle\!\langle1|, M_2=\mathbbm{O}, M_3=|0\rangle\!\langle0|\}$ such that $p(1|\mathcal{M},|0\rangle)+p(2|\mathcal{M},|+\rangle)+p(3|\mathcal{M},|1\rangle)=0$. Then according to \eqref{anti-d}, $A_Q^{[3]}\left(|0\rangle,|+\rangle,|1\rangle\right)=1$. A similar argument can be made for the set $\{|0\rangle,|+\rangle,|-\rangle\}$ and hence $A_Q^{[3]}(|0\rangle,|+\rangle,|-\rangle)=1$ as well. So according to \eqref{eqex},  $\omega_E^{[3]}(\rho_1,\rho_2,\rho_3)=0$.
    However, a quick semi-definite optimization yields $A_Q^{[3]}(\rho_1,\rho_2,\rho_3)=0.9613,$ which makes $\omega_Q^{[3]}(\rho_1,\rho_2,\rho_3)=0.1161>0$. Thus, this is an example of a set of non-epistemic preparations.

Now that we have provided an explicit example of a non-epistemic case, we aim to provide a set of criteria that can be used to identify a large class of such examples. To begin, we derive the necessary and sufficient conditions for three qubit states to be perfectly anti-distinguishable. These conditions offer a clear interpretation concerning the orientation of their Bloch vectors within the Bloch sphere. It must be noted that such a set of sufficient conditions exists, as shown in \cite{caves}. In \cite{caves}, the criteria for anti-distinguishability among three qutrit states under projective measurements have been established. Consequently, these criteria are sufficient. However, for qubits, the criteria are also necessary, as any rank-one POVM on qubits can be realized through projective measurements on three-dimensional quantum states.

\begin{thm} \label{bloch_anti_d_cond}
    Three qubit states $\ket{\psi_1}, \ket{\psi_2},$ and $\ket{\psi_3}$ are perfectly anti-distinguishable if and only if all of them lie on a single great circle of the Bloch sphere and the following inequalities hold,
    \begin{equation} \label{bloch_cond}
    \begin{split}
    &\cos^{-1}{(a)} + \cos^{-1}{(b)} \geq  \frac{\pi}{2},\\
    &\cos^{-1}{(a)} + \cos^{-1}{(c)} \geq  \frac{\pi}{2},\\
    &\cos^{-1}{(b)} + \cos^{-1}{(c)} \geq  \frac{\pi}{2}, 
    \end{split}
    \end{equation} 
     where $a=|\la \psi_1|\psi_2\ra|,$ $b=|\la \psi_1|\psi_3\ra|,$ $c=|\la \psi_2|\psi_3\ra|.$ In other words, only those three qubit states are perfectly anti-distinguishable whose Bloch vectors lie on a single great circle, and the sum of every pair of angles between the vectors is greater than or equal to $\pi$.
\end{thm}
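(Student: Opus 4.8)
\emph{Proof proposal.} The plan is to first reduce perfect anti-distinguishability to a condition on a POVM, then use the two-dimensionality of the Hilbert space to recast that condition as elementary geometry of Bloch vectors, and finally translate the geometric criterion back into the overlaps $a,b,c$.

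By the definition \eqref{anti-d}, the qubit states $\ket{\psi_1},\ket{\psi_2},\ket{\psi_3}$ are perfectly anti-distinguishable if and only if there is a three-outcome POVM $\{M_1,M_2,M_3\}$ with $\sum_{x=1}^{3}\la\psi_x|M_x|\psi_x\ra=0$; since each term is non-negative, this forces $\la\psi_x|M_x|\psi_x\ra=0$ for every $x$. On $\mathbb{C}^2$ I would then argue that $M_x\geq0$ together with $\la\psi_x|M_x|\psi_x\ra=0$ forces $M_x=c_x\ket{\psi_x^\perp}\!\bra{\psi_x^\perp}$ for some $c_x\geq0$, where $\ket{\psi_x^\perp}$ is the state orthogonal to $\ket{\psi_x}$: diagonalising $M_x$, every eigenvector with positive eigenvalue must be orthogonal to $\ket{\psi_x}$, hence proportional to $\ket{\psi_x^\perp}$, and there is at most one such eigenvector (this also shows that the anti-distinguishing POVM can be taken rank-one, as anticipated in the discussion preceding the theorem). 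Writing $\ket{\psi_x}\!\bra{\psi_x}=\tfrac12(\I+\hat n_x\cdot\vec\sigma)$ and $\ket{\psi_x^\perp}\!\bra{\psi_x^\perp}=\tfrac12(\I-\hat n_x\cdot\vec\sigma)$, the completeness relation $\sum_x M_x=\I$ is equivalent to the two conditions $\sum_x c_x=2$ and $\sum_x c_x\hat n_x=\vec 0$. Hence perfect anti-distinguishability of the three qubit states is equivalent to the statement that $\vec 0$ lies in the convex hull of the Bloch vectors $\hat n_1,\hat n_2,\hat n_3$.

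The next step is a purely geometric lemma: for unit vectors $\hat n_1,\hat n_2,\hat n_3\in\mathbb{R}^3$, one has $\vec 0\in\mathrm{conv}\{\hat n_1,\hat n_2,\hat n_3\}$ if and only if (i) they are linearly dependent --- equivalently, they lie in a common two-dimensional subspace, i.e.\ on a single great circle of the Bloch sphere --- and (ii) on that great circle, cyclically ordering the three vectors, every arc gap is at most $\pi$, which I would show is equivalent to $\gamma_{ij}+\gamma_{ik}\geq\pi$ for all pairs, where $\gamma_{ij}\in[0,\pi]$ denotes the angle between $\hat n_i$ and $\hat n_j$. Direction (i) is forced because a non-trivial vanishing convex combination cannot exist for three linearly independent vectors. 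For (ii): if some arc gap exceeds $\pi$, the three vectors lie strictly inside an open half-plane through the origin, so a separating line keeps $\vec 0$ out of the convex hull; conversely, if all gaps are $\le\pi$, an explicit convex combination with weights proportional to the opposite gaps yields $\vec 0$. I would treat the degenerate configurations (two Bloch vectors coincident or antipodal) separately, checking that the great-circle statement is then automatic and that the angle inequalities reduce correctly.

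Finally I would translate back to overlaps via $|\la\psi_i|\psi_j\ra|^2=\tfrac12(1+\hat n_i\cdot\hat n_j)=\cos^2(\gamma_{ij}/2)$, so that $\cos^{-1}|\la\psi_i|\psi_j\ra|=\gamma_{ij}/2$; condition (ii) then becomes precisely $\cos^{-1}(a)+\cos^{-1}(b)\geq\pi/2$ together with its two cyclic partners, with $a=|\la\psi_1|\psi_2\ra|$, $b=|\la\psi_1|\psi_3\ra|$, $c=|\la\psi_2|\psi_3\ra|$, while (i) is the great-circle condition. The main obstacle I anticipate is the geometric lemma --- proving both directions of ``$\vec 0\in\mathrm{conv}$ of three coplanar unit vectors $\iff$ all pairwise angle sums are $\ge\pi$'' cleanly, and making sure the boundary and degenerate configurations (antipodal or repeated Bloch vectors, angles exactly $\pi/2$) are handled so that the equivalence is a genuine ``if and only if''.
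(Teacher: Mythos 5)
Your proposal is correct and takes essentially the same route as the paper: perfect anti-distinguishability is reduced to a POVM with elements proportional to $\ket{\psi_x^\perp}\!\bra{\psi_x^\perp}$, completeness becomes $\sum_x c_x=2$ and $\sum_x c_x\hat n_x=\vec 0$, and the geometry is settled by a half-plane/hemisphere separation argument for necessity (the paper's Lemma \ref{lem_bloch}) together with explicit positive weights for sufficiency (the paper's explicit $\gamma_i$). Your convex-hull phrasing, and your explicit justification that positivity plus $\la\psi_x|M_x|\psi_x\ra=0$ forces the rank-one form, are just tidier packagings of steps the paper states as ``without loss of generality.''
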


See Appendix \ref{app:pot2} for the proof of this theorem. With the aid of the above theorem, we present a class of non-epistemic examples involving only four pure qubit preparations.

  \begin{thm} \label{cond_non_epis_thm}
    Three qubit preparations $\rho_1 = |\psi\ra\!\la\psi|,$ $\rho_2 = |\phi\ra\!\la\phi|,$ and $\rho_3 = p|\chi_1\ra\!\la\chi_1| + (1-p)|\chi_2\ra\!\la\chi_2| ,$ for $0<p<1$, are non-epistemic if $\ket{\psi}$, $\ket{\phi}$, $\ket{\chi_1}$ and $\ket{\chi_2}$ lie on a single great circle of the Bloch sphere, $a=|\langle\psi|\phi\rangle|>0$ and the following inequalities hold,
\begin{equation} \label{cond_1_non_epis}
\begin{split}
    &\cos^{-1}(a)+\cos^{-1}(b)\geq\pi/2,\\
    &\cos^{-1}(a)+\cos^{-1}(c)\geq\pi/2,\\
    &\cos^{-1}(b)+\cos^{-1}(c)\geq\pi/2,\\
    &\cos^{-1}(a)+\cos^{-1}(d)\geq\pi/2,\\
    &\cos^{-1}(a)+\cos^{-1}(e)\geq\pi/2,\\
    &\cos^{-1}(d)+\cos^{-1}(e)\geq\pi/2,
\end{split}
\end{equation}
where $b=|\langle\psi|\chi_1\rangle|$, $c=|\langle\phi|\chi_1\rangle|$, $d=|\langle\psi|\chi_2\rangle|$, and $e=|\langle\phi|\chi_2\rangle|$.
\end{thm}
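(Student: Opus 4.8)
The plan is to reduce Theorem~\ref{cond_non_epis_thm} to a direct application of Theorem~\ref{t1}, with Theorem~\ref{bloch_anti_d_cond} supplying the anti-distinguishability inputs. Write $\rho_1=|\psi\rangle\!\langle\psi|$, $\rho_2=|\phi\rangle\!\langle\phi|$, $\rho_3=(\alpha_1/\beta)|\chi_1\rangle\!\langle\chi_1|+(\alpha_2/\beta)|\chi_2\rangle\!\langle\chi_2|$, so that in the notation of Theorem~\ref{t1} we have $n=3$, $m=2$, $\beta_1=\beta_2=1$, $\beta_3=\beta$, $\alpha_{1|1}=\alpha_{2|1}=\alpha_{1|2}=\beta$ would not match; instead I treat $\rho_1,\rho_2$ as degenerate mixtures with $\alpha_{1|1}=\beta_1=1$, $\alpha_{2|1}=0$, and likewise for $\rho_2$, while $\rho_3$ has $\alpha_{1|3}=\alpha_1$, $\alpha_{2|3}=\alpha_2$, $\beta_3=\beta$. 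Then $\mathrm{lcm}(\beta_1,\beta_2,\beta_3)=\beta$ and the prefactor $\left(\mathrm{lcm}\right)^{n-1}/\prod_k\beta_k = \beta^2/\beta = \beta$. Feeding this into \eqref{sub_eq1}, all cross terms involving a vanishing $\alpha$ drop out and only two terms survive, giving
\begin{equation*}
\omega_E^{[3]}(\rho_1,\rho_2,\rho_3) \leq \beta\Big(\alpha_1\,\omega_E^{[3]}(\psi,\phi,\chi_1) + \alpha_2\,\omega_E^{[3]}(\psi,\phi,\chi_2)\Big).
\end{equation*}

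Next I would show that each of the two pure-state triples is perfectly anti-distinguishable, hence has vanishing $\omega_E^{[3]}$. For the triple $\{\psi,\phi,\chi_1\}$, the hypothesis places all of $|\psi\rangle,|\phi\rangle,|\chi_1\rangle$ on a single great circle (this is part of the ``lie on a single great circle'' assumption restricted to these three), and the first three inequalities of \eqref{cond_1_non_epis} are exactly the pairwise-angle conditions \eqref{bloch_cond} of Theorem~\ref{bloch_anti_d_cond} with overlaps $a=|\langle\psi|\phi\rangle|$, $b=|\langle\psi|\chi_1\rangle|$, $c=|\langle\phi|\chi_1\rangle|$. Therefore $A_Q^{[3]}(\psi,\phi,\chi_1)=1$. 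Symmetrically, the last three inequalities of \eqref{cond_1_non_epis}, together with $a>0$ and coplanarity, give $A_Q^{[3]}(\psi,\phi,\chi_2)=1$ via the overlaps $a,d,e$. By \eqref{we<wq} (or directly by the definition of $\omega_E^{[3]}$ combined with \eqref{AQ_epis}), perfect anti-distinguishability of a pure-state triple forces $\omega_E^{[3]}=0$. Plugging both zeros into the displayed bound yields $\omega_E^{[3]}(\rho_1,\rho_2,\rho_3)=0$ for every ontological model.

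To complete the proof I must verify $\omega_Q^{[3]}(\rho_1,\rho_2,\rho_3)>0$, i.e.\ that $\rho_1,\rho_2,\rho_3$ are \emph{not} perfectly anti-distinguishable, so that the example is genuinely non-epistemic rather than trivially orthogonal. Here is where the condition $a=|\langle\psi|\phi\rangle|>0$ does the work: I would argue that if the three mixed preparations were perfectly anti-distinguishable, there would be a three-outcome POVM $\{M_1,M_2,M_3\}$ with $\langle\psi|M_1|\psi\rangle=\langle\phi|M_2|\phi\rangle=0$ and $\mathrm{Tr}(\rho_3 M_3)=0$; the first two conditions force $M_1\propto|\psi^\perp\rangle\!\langle\psi^\perp|$ and $M_2\propto|\phi^\perp\rangle\!\langle\phi^\perp|$, and since $|\psi\rangle$ and $|\phi\rangle$ are non-orthogonal ($a>0$) these two rank-one operators are not proportional, which — combined with $M_1+M_2+M_3=\mathbb{1}$ and $M_3\geq 0$ — leads to a contradiction with $M_3$ annihilating a full-rank $\rho_3$ (note $\rho_3$ has full rank whenever $\alpha_1,\alpha_2>0$; the boundary cases $\alpha_1=0$ or $\alpha_2=0$ reduce to pure states and are handled by the non-orthogonality of the resulting three pure states). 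Hence $A_Q^{[3]}(\rho_1,\rho_2,\rho_3)<1$ and $\omega_Q^{[3]}>0$, so $\omega_E^{[3]}/\omega_Q^{[3]}=0$, which is the definition of a non-epistemic set.

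The main obstacle is the last step: Theorem~\ref{t1} only gives an \emph{upper} bound on $\omega_E^{[3]}$, so vanishing of that bound is immediate, but establishing $\omega_Q^{[3]}>0$ requires a genuinely separate argument ruling out perfect anti-distinguishability of the \emph{mixed} triple, and one must be careful that the geometric conditions \eqref{cond_1_non_epis} (which only constrain angles, all of which can be made compatible with $a>0$) do not secretly force the mixed preparations to also be anti-distinguishable. I expect the cleanest route is the POVM-structure argument sketched above, possibly supplemented by noting that $\rho_3$ lies strictly inside the Bloch disk spanned by the great circle while $\rho_1,\rho_2$ are on its boundary, so no single projective (or rank-one POVM) anti-distinguishing measurement can annihilate all three.
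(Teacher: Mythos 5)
Your proof follows essentially the same route as the paper: apply Theorem~\ref{t1} with $(\beta_1,\beta_2,\beta_3)=(1,1,\beta)$ to get $\omega_E^{[3]}(\rho_1,\rho_2,\rho_3)\leq\beta\big(\alpha_1\,\omega_E^{[3]}(\psi,\phi,\chi_1)+\alpha_2\,\omega_E^{[3]}(\psi,\phi,\chi_2)\big)$, then use Theorem~\ref{bloch_anti_d_cond} to conclude that the hypotheses make both pure triples perfectly anti-distinguishable, so both terms vanish; your prefactor computation and the use of \eqref{we<wq} are correct. Where you go beyond the paper is the final step: the paper merely asserts that $a>0$ gives $\omega_Q^{[3]}(\rho_1,\rho_2,\rho_3)>0$, whereas you spell out the POVM-structure argument ($M_1\propto\ket{\psi^\perp}\!\bra{\psi^\perp}$, $M_2\propto\ket{\phi^\perp}\!\bra{\phi^\perp}$, $M_3=0$ when $\rho_3$ is full rank, and $M_1+M_2=\mathbbm{1}$ impossible for non-orthogonal $\ket{\psi},\ket{\phi}$), which is a genuine improvement in rigor. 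The one incorrect point is your parenthetical handling of the boundary cases $\alpha_1=0$ or $\alpha_2=0$: there $\rho_3$ is pure, and the hypotheses \eqref{cond_1_non_epis} make the surviving triple perfectly anti-distinguishable by Theorem~\ref{bloch_anti_d_cond}, so $\omega_Q^{[3]}=0$; pairwise non-orthogonality does not rescue this, since three pairwise non-orthogonal qubit states can be perfectly anti-distinguished --- that is precisely what Theorem~\ref{bloch_anti_d_cond} characterizes. The statement really requires $\alpha_1,\alpha_2>0$ (and $\ket{\chi_1}\not\propto\ket{\chi_2}$) so that $\rho_3$ is full rank; the paper's own proof silently assumes the same, so this is a shared edge-case gap rather than a defect of your main argument.
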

\begin{proof}
    By invoking Theorem \ref{t1} concerning the preparations described in the theorem statement, and by substituting the convex coefficients in \eqref{wr<ws} we have (it has been assumed without loss of generality that $p>1/2$),
        \bea 
            \omega_E^{[3]}(\rho_1,\rho_2,\rho_3) & \leq & \Big((18p+3)-12pA_Q^{[3]}(\psi,\phi,\chi_1) \nonumber \\
            & & \quad -(6p+3)A_Q^{[3]}(\psi,\phi,\chi_2)\Big).
            \eea
        To have a non-epistemic case, $\{\ket{\psi},\ket{\phi},\ket{\chi_1}\}$ and $\{\ket{\psi},\ket{\phi},\ket{\chi_2}\}$ must be perfectly anti-distinguishable sets. According to Theorem \ref{bloch_anti_d_cond}, the states in the set $\{\ket{\psi},\ket{\phi},\ket{\chi_1}\}$ must lie on a great circle of the Bloch sphere and so do the states in $\{\ket{\psi},\ket{\phi},\ket{\chi_2}\}$. Clearly, then, all four states must lie on the same great circle. Next, they must satisfy the inequalities laid down in \eqref{bloch_cond}. It is quite easy to see that \eqref{cond_1_non_epis} follows from these inequalities. We have also stipulated that $a=|\langle\psi|\phi\rangle|>0$, and that is because we want a non-zero quantum overlap between $\rho_1$, $\rho_2$, and $\rho_3$. Since $\rho_1$ and $\rho_2$ are the pure states $\ket{\psi}$ and $\ket{\phi}$, it is imperative that they are not orthogonal to have $\omega_Q^{[3]}(\rho_1,\rho_2,\rho_3)>0$. 
    \end{proof} 
    
    One must note that Example \ref{ex_1} belongs to the class of preparations described in Theorem \ref{cond_non_epis_thm}. Other such examples can be identified by applying the conditions described therein.

Now that we have shown the existence of non-epistemic cases, naturally, one must wonder about the existence of fully non-epistemic ones. The next theorem negates their existence in Hilbert space of $d=2$. 
\begin{thm}\label{thm_no_fully_epis} There are no sets of mixed qubit preparations that are fully non-epistemic. In other words, there exists a model for qubits such that maximal quantum overlap necessarily implies a nonzero epistemic overlap, i.e., $\omega^{[n]}_Q(\rho_1, \cdots,\rho_n)=1 \implies \omega^{[n]}_E(\rho_1, \cdots,\rho_n) > 0.$
\end{thm}
The proof of this theorem is extensive, and for the sake of clarity in presentation, it has been deferred to Appendix \ref{app:pot3}.
In essence, the above theorem holds due to the existence of an epistemic theory for pure qubit states, known as the Renner-Tavakoli-Quintino (RTQ) model \cite{RTQ}. This model is capable of simulating quantum statistics arising out of general measurements performed on a qubit. Let us provide a brief introduction to this model, as we will refer to this model later on in Appendix \ref{app:pot3}. The ontic state in the RTQ model is a collection of two classical bits and two three-dimensional unit vectors, $\lambda=(c_1,c_2;\vec{r}_1,\vec{r}_2)$, where $c_1,c_2 \in \{0,1\}$ and $\vec{r}_1,\vec{r}_2 \in \mathcal{S}^2$ which implies that $\Lambda=\{0,1\}\otimes\{0,1\}\otimes \mathcal{S}^2\otimes \mathcal{S}^2$ \blk. Let the Bloch representation of a pure qubit preparation be $\vec{v}$, where $|\vec{v}|=1$. Then the corresponding epistemic state is defined as 
\be
\mu(c_1,c_2;\vec{r}_1,\vec{r}_2|\vec{v})=\frac{1}{16\pi^2}\left(\delta_{c_1,\Theta(\vec{v}.\vec{r}_1)}\delta_{c_2,\Theta(\vec{v}.\vec{r}_2)}\right),
    \ee
where $\Theta(.)$ is the Heaviside step function. Thus, $\mu(c_1,c_2;\vec{r}_1,\vec{r}_2|\vec{v})>0$ when $\Theta(\vec{v}.\vec{r}_1)=c_1$ and $\Theta(\vec{v}.\vec{r}_2)=c_2$. For $c_1\neq c_2$, the support of $\mu(c_1,c_2;\vec{r}_1,\vec{r}_2|\vec{v})$ is the entire unit sphere with $\vec{r}_1$ belonging to one hemisphere and $\vec{r}_2$ belonging to the opposite hemisphere. For $c_1=c_2=1$, both $\vec{r}_1$ and $\vec{r}_2$ belong to the hemisphere whose axis is $\vec{v}$ whereas if $c_1=c_2=0$ then $\vec{r}_1$ and $\vec{r}_2$ belong to the hemisphere that is opposite to the one whose axis is $\vec{v}$.

\subsubsection{Fully non-epistemic preparations in $d\geq3$}

%\begin{thm}
    %Considering the bound in Theorem 1, three maximally mixed states in $\mathbbm{C}^3$ cannot show fully non-epistemic explanation.
%\end{thm}

%\blk 

Having shown the existence of non-epistemic cases and the non-existence of fully non-epistemic ones in Hilbert space of $d=2$, we extend our search for fully non-epistemic cases to higher dimensional Hilbert spaces. It turns out, they do exist, as is evident from the following two corollaries of Theorem \ref{t1} presented below.
\begin{coro} \label{n=4,d=3}
    Consider a set of four different preparations $\{\rho_k\}_{k=1}^4$ of the same mixed state $\varrho$ which acts on $\mathbbm{C}^3$. Each is a convex mixture of pure states $\{\ket{\psi_{i_k|k}}\}_{i_k=1}^m$ with convex coefficients $\{p_{i_k|k}\}_{i_k=1}^m$ such that $\rho_k=\sum_{i_k=1}^mp_{i_k|k}\ket{\psi_{i_k|k}}\!\langle\psi_{i_k|k}|$. Evidently, $\omega_Q^{[4]}(\rho_1,\rho_2,\rho_3,\rho_4)=1$. If the pure states obey the condition,
    \begin{equation} \label{sikora_eq}
        |\bra{\psi_{i_k|k}}\psi_{j_{k'}|k'}\rangle|\leq\frac{1}{\sqrt{3}} \textrm{ }\ \forall k\neq k', i_k, j_{k'}
    \end{equation}
    then $\omega_E^{[4]}(\rho_1,\rho_2,\rho_3,\rho_4)=0$. In other words, if the pure states satisfy \eqref{sikora_eq} then the set of mixed preparations is fully non-epistemic.
\end{coro}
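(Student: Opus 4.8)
The plan is to deduce Corollary~\ref{n=4,d=3} directly from Theorem~\ref{t1} specialised to $n=4$, the only substantive external ingredient being the sufficient condition for antidistinguishability expressed by~\eqref{sikora_eq}. First I would dispose of the quantum overlap: since $\rho_1,\dots,\rho_4$ are four preparations of \emph{one and the same} mixed state $\rho$ on $\mathbbm{C}^3$, for every $4$-outcome POVM $\{M_x\}_{x=1}^4$ one has $\sum_{x=1}^4 p(x|\rho_x,\mathcal M)=\sum_{x=1}^4\tr(M_x\rho)=\tr\rho=1$, so the minimum in~\eqref{anti-d} equals $1$, giving $A_Q^{[4]}=3/4$ and hence $\omega_Q^{[4]}(\rho_1,\dots,\rho_4)=4(1-3/4)=1$, as asserted.

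Next I would invoke Theorem~\ref{t1} with $n=4$ and the stated coefficients, so that~\eqref{sub_eq1} reads
\begin{equation*}
\omega_E^{[4]}(\rho_1,\dots,\rho_4)\ \le\ \frac{\bigl(\lcm(\beta_1,\dots,\beta_4)\bigr)^{3}}{\prod_{k=1}^4\beta_k}\ \sum_{i_1,\dots,i_4=1}^{m}\Bigl(\prod_{k=1}^4\alpha_{i_k|k}\Bigr)\,\omega_E^{[4]}\bigl(\psi_{i_1|1},\dots,\psi_{i_4|4}\bigr).
\end{equation*}
All factors $\alpha_{i_k|k}$ and all the overlaps $\omega_E^{[4]}(\cdots)$ are non-negative, so the right-hand side vanishes iff every summand does. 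A summand with $\prod_k\alpha_{i_k|k}=0$ is trivially zero, so I would fix a term with all four coefficients strictly positive: then $\psi_{i_1|1},\psi_{i_2|2},\psi_{i_3|3},\psi_{i_4|4}$ genuinely occur in the respective decompositions of $\rho$, hence are pure states on $\mathbbm{C}^3$, one drawn from each of the four \emph{distinct} decompositions. Therefore every one of the $\binom{4}{2}=6$ pairs has $k\neq k'$, so hypothesis~\eqref{sikora_eq} applies to all of them and $|\bra{\psi_{i_k|k}}\psi_{i_{k'}|k'}\rangle|\le 1/\sqrt3$ for each pair.

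At this point I would apply the criterion behind~\eqref{sikora_eq}: four pure states on $\mathbbm{C}^3$ whose pairwise overlaps are all at most $1/\sqrt3$ are perfectly antidistinguishable, i.e.\ $A_Q^{[4]}(\psi_{i_1|1},\dots,\psi_{i_4|4})=1$, whence $\omega_Q^{[4]}(\psi_{i_1|1},\dots,\psi_{i_4|4})=0$ and, by~\eqref{we<wq}, $\omega_E^{[4]}(\psi_{i_1|1},\dots,\psi_{i_4|4})=0$. Thus every summand is zero, the whole right-hand side above is zero, and, since $\omega_E^{[4]}\ge 0$, this forces $\omega_E^{[4]}(\rho_1,\dots,\rho_4)=0$. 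Together with the first step, $\omega_E^{[4]}/\omega_Q^{[4]}=0$ while $\omega_Q^{[4]}=1$, which is exactly a fully non-epistemic explanation, completing the argument.

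The only content-bearing step is the appeal to~\eqref{sikora_eq}; everything else --- the first step, the application of Theorem~\ref{t1}, and the non-negativity argument --- is routine, so the one real obstacle is the antidistinguishability criterion itself, which is imported as a known result. Making it self-contained would amount to exhibiting, for any four pure states on $\mathbbm{C}^3$ with pairwise overlaps at most $1/\sqrt3$, a POVM $\{M_x\}_{x=1}^4$ with $\sum_x M_x=\mathbbm{1}$ and each $M_x$ supported on the two-dimensional subspace orthogonal to $\ket{\psi_{i_x|x}}$ --- equivalently, proving that the relevant semidefinite feasibility problem built from the Gram matrix of the four vectors is solvable exactly under that overlap bound, which is the source of the dimension-three, four-state threshold $1/\sqrt3$.
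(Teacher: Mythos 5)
Your proposal is correct and follows essentially the same route as the paper: bound $\omega_E^{[4]}$ via Theorem \ref{t1}, then use the anti-distinguishability criterion of \cite{johnston2023tight} (threshold $\frac{1}{\sqrt{2}}\sqrt{\frac{N-2}{N-1}}=1/\sqrt{3}$ for $N=4$) to force every pure-state term to vanish, together with the observation that identical mixed states give $\omega_Q^{[4]}=1$. The only cosmetic difference is that you work from \eqref{sub_eq1} plus \eqref{we<wq} while the paper uses \eqref{sub_eq2} directly, and you spell out the "evident" computation of $\omega_Q^{[4]}$; note also that the $1/\sqrt{3}$ threshold comes from $N=4$ in the cited criterion, not from the dimension being three.
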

\begin{proof}
   The epistemic overlap of the mixed preparations, $\omega_E^{[4]}(\rho_1,\rho_2,\rho_3,\rho_4)$ is upper-bounded according to \eqref{sub_eq2} of Theorem \ref{t1} with $n=4$. One way of making sure that $\omega_E^{[4]}(\rho_1,\rho_2,\rho_3,\rho_4)=0$ is to ensure that every such set $\{\ket{\psi_{i_1|1}},\ket{\psi_{i_2|2}},\ket{\psi_{i_3|3}},\ket{\psi_{i_4|4}}\}$, for all values of $i_1$, $i_2$, $i_3$, and $i_4$ is perfectly anti-distinguishable. We employ the criterion for anti-distinguishability as derived in \cite{johnston2023tight} to ensure this. According to \cite{johnston2023tight}, a set of $N$ pure states $\{\ket{\phi_i}\}_{i=1}^N$ is perfectly anti-distinguishable if,
   \begin{equation} \label{sikora_criterion}
       |\la\phi_i|\phi_j\ra|\leq \frac{1}{\sqrt{2}}\sqrt{\frac{N-2}{N-1}}  \textrm{ }\ \forall i \neq j.
   \end{equation}
   Since we need  $\{\ket{\psi_{i_1|1}},\ket{\psi_{i_2|2}},\ket{\psi_{i_3|3}},\ket{\psi_{i_4|4}}\}$ to be perfectly anti-distinguishable, and every such set has four pure states, we substitute $N=4$ in \eqref{sikora_criterion}. For $N=4$, the upper-bound in \eqref{sikora_criterion} is $1/\sqrt{3}$, and thus, the set of mixed preparations is fully non-epistemic when $|\bra{\psi_{i_k|k}}\psi_{j_{k'}|k'}\rangle|\leq 1/\sqrt{3}$.
\end{proof}

Based on Corollary \ref{n=4,d=3}, we present a very simple example of a set of mixed preparations which are fully non-epistemic.
\begin{example}[Fully non-epistemic, $n=4,d=3$] \label{ex_2}
 Consider the four sets of mutually unbiased bases (MUBs) present in $\mathbbm{C}^3$, $\left\{\left\{|\psi_{i_k|k}\rangle\right\}_{i_k=1}^{3}\right\}_{k=1}^{4}$. Each of these sets will result in a maximally mixed preparation, $\rho_k=(1/3)\sum_{i_k=1}^3|\psi_{i_k|k}\rangle\!\langle\psi_{i_k|k}|=\mathbbm{1}/3$, for $k=1,2,3,4$ such that $\omega_E^{[4]}(\rho_1,\rho_2,\rho_3,\rho_4)=0$ while $\omega_Q^{[4]}(\rho_1,\rho_2,\rho_3,\rho_4)=1$.
 \end{example}
 Since the states are mutually unbiased to each other, and $d=3$, so $|\bra{\psi_{i_k|k}}\psi_{j_{k'}|k'}\rangle|=1/\sqrt{3}$ for all $k \neq k'$, $i_k$ and $j_{k'}$. This is in accordance with \eqref{sikora_criterion}, and hence $\omega_E^{[4]}(\rho_1,\rho_2,\rho_3,\rho_4)=0$. 

\begin{coro} \label{n=3,d=4}
     Consider a set of three different preparations $\{\rho_k\}_{k=1}^3$ of the same mixed state $\varrho$ which acts on $\mathbbm{C}^d$ where $d\geq4$. Each one is a convex mixture of pure states $\{\ket{\psi_{i_k|k}}\}_{i_k=1}^m$ with convex coefficients $\{p_{i_k|k}\}_{i_k=1}^m$ such that $\rho_k=\sum_{i_k=1}^m p_{i_k|k}\ket{\psi_{i_k|k}}\!\langle\psi_{i_k|k}|$. Evidently, $\omega_Q^{[3]}(\rho_1,\rho_2,\rho_3)=1$. For every triplet of the form, $\{\ket{\psi_{i_1|1}},\ket{\psi_{i_2|2}},\ket{\psi_{i_3|3}}\}$ let $x_1=|\la\psi_{i_1|1}|\psi_{i_2|2}\ra|^2$, $x_2=|\la\psi_{i_1|1}|\psi_{i_3|3}\ra|^2$, and $x_3=|\la\psi_{i_2|2}|\psi_{i_3|3}\ra|^2$. We have $m^3$ such tuples $(x_1,x_2,x_3)$. For each such tuple, if the following conditions hold,
     \begin{equation} \label{caves_eq}
         \begin{split}
             &x_1+x_2+x_3<1,\\
             &(x_1+x_2+x_3-1)^2\geq 4x_1x_2x_3,
         \end{split}
     \end{equation}
     then $\omega_E^{[3]}(\rho_1,\rho_2,\rho_3)=0$. In other words, if the pure states satisfy \eqref{caves_eq}, then the set of mixed preparations is fully non-epistemic.
\end{coro}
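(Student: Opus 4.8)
\textbf{Proof proposal for Corollary \ref{n=3,d=4}.}

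The plan is to mirror the structure of the proof of Corollary \ref{n=4,d=3}, replacing the anti-distinguishability criterion of \cite{johnston2023tight} with the exact anti-distinguishability criterion for three pure states established by Caves, Fuchs, and Schack (the result cited as \cite{caves} earlier in the paper). First I would invoke Theorem \ref{t1} for the case $n=3$: the three preparations $\rho_1,\rho_2,\rho_3$ have the decomposition structure demanded by that theorem, so $\omega_E^{[3]}(\rho_1,\rho_2,\rho_3)$ is bounded above by the weighted sum over all triples $\{\ket{\psi_{i_1|1}},\ket{\psi_{i_2|2}},\ket{\psi_{i_3|3}}\}$ of the terms $\big(1-A_Q^{[3]}(\psi_{i_1|1},\psi_{i_2|2},\psi_{i_3|3})\big)$, with nonnegative coefficients. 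Hence it suffices to show that each such triple is perfectly anti-distinguishable, i.e.\ that $A_Q^{[3]}=1$ for every one of the $m^3$ triples; then every term in the bound vanishes and $\omega_E^{[3]}(\rho_1,\rho_2,\rho_3)=0$, while $\omega_Q^{[3]}=1$ is immediate from the preparations being copies of the same mixed state.

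The core step is then to recognize that the two inequalities in \eqref{caves_eq}, written in terms of $x_1=|\la\psi_{i_1|1}|\psi_{i_2|2}\ra|^2$, $x_2=|\la\psi_{i_1|1}|\psi_{i_3|3}\ra|^2$, $x_3=|\la\psi_{i_2|2}|\psi_{i_3|3}\ra|^2$, are exactly the known necessary-and-sufficient condition for three pure states to be perfectly anti-distinguishable. One needs the ambient dimension to be at least $3$ so that the required three-outcome POVM (or projective measurement on a suitable three-dimensional subspace, as in \cite{caves}) can be realized; since $d\geq 4 \geq 3$ this is fine, and any three pure states span a subspace of dimension at most $3$, so the criterion applies verbatim to each triple. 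I would state this as: a triple of pure states with pairwise squared overlaps $(x_1,x_2,x_3)$ is perfectly anti-distinguishable if and only if $x_1+x_2+x_3<1$ and $(x_1+x_2+x_3-1)^2\geq 4x_1x_2x_3$, citing \cite{caves}. Applying this to each of the $m^3$ triples, the hypothesis \eqref{caves_eq} guarantees $A_Q^{[3]}=1$ throughout, so the right-hand side of \eqref{sub_eq2} is zero and, by the last sentence of Theorem \ref{t1}, $\omega_E^{[3]}(\rho_1,\rho_2,\rho_3)=0$, giving the fully non-epistemic conclusion.

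The only genuine subtlety — the part I would be most careful about — is the correct quotation and applicability of the Caves--Fuchs--Schack criterion: making sure the inequalities are stated in the right variables (squared moduli of inner products, not moduli or angles), that the boundary cases (equalities) are included on the anti-distinguishable side as the statement of \eqref{caves_eq} has them, and that the dimension hypothesis $d\geq 4$ (or merely $d\geq 3$) is what is actually needed for the measurement to exist in the given Hilbert space. Everything else is a direct substitution into Theorem \ref{t1} exactly as in the proof of Corollary \ref{n=4,d=3}, so no new estimates are required; the argument is short once the characterization of three-state anti-distinguishability is in hand.
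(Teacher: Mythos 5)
Your proposal is correct and follows essentially the same route as the paper: invoke Theorem \ref{t1} with $n=3$, then use the Caves--Fuchs--Schack criterion \cite{caves} to certify that every one of the $m^3$ triples is perfectly anti-distinguishable, so every term in the bound \eqref{sub_eq2} vanishes and $\omega_E^{[3]}=0$ while $\omega_Q^{[3]}=1$. The only small difference is that the paper quotes \eqref{caves_eq} merely as a sufficient condition (PP-incompatibility via a projective measurement) rather than as necessary and sufficient, which is all the argument requires.
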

\begin{proof}
    This proof is similar to that of Corollary \ref{n=4,d=3}. The upper-bound of $\omega_E^{[3]}(\rho_1,\rho_2,\rho_3)$, as described in \eqref{sub_eq2} for $n=3$ can be made zero by making sure that every such set $\{\ket{\psi_{i_1|1}},\ket{\psi_{i_2|2}},\ket{\psi_{i_3|3}}\}$, for all values of $i_1$, $i_2$, and $i_3$, is perfectly anti-distinguishable. To ensure this, we utilize the criteria provided in \cite{caves}. According to \cite{caves}, a set of three pure, non-orthogonal, and distinct states $\{\ket{\phi_1},\ket{\phi_2},\ket{\phi_3}\}$ is \textit{PP-incompatible}, meaning a projective measurement exists that is capable of anti-distinguishing them perfectly if, $x_1=|\langle \phi_1|\phi_2\rangle|^2$, $x_2=|\langle \phi_2|\phi_3\rangle|^2$ and $x_3=|\langle \phi_1|\phi_3\rangle|^2$ satisfy \eqref{caves_eq}. Applying \eqref{caves_eq} to $\{\ket{\psi_{i_1|1}},\ket{\psi_{i_2|2}},\ket{\psi_{i_3|3}}\}$ completes the proof.
\end{proof}
 Similar to Example \ref{ex_2}, we present the following example that demonstrates the fully non-epistemic nature of certain sets of preparations in Hilbert spaces of $d\geq4$.
 
\begin{example}[Fully non-epistemic, $n=3,d\geq4$] \label{ex_fully_epis}
    Consider three sets MUBs in $\mathbbm{C}^d$, $\left\{\left\{|\psi_{i_k|k}\rangle\right\}_{i_k=1}^{d}\right\}_{k=1}^{3}$ where $d$ is prime-power and $d \geq 4$. Each of these sets will result in a maximally mixed preparation $\rho_k=(1/d)\sum_{i_k=1}^{d}|\psi_{i_k|k}\rangle\!\langle\psi_{i_k|k}|=(\mathbbm{1}/d)$ for $k=1,2,3$ such that $\omega_E^{[3]}(\rho_1,\rho_2,\rho_3)=0$ while $\omega_Q^{[3]}(\rho_1,\rho_2,\rho_3)=1$.
\end{example}

    A prime-power $d$ ensures that at least three MUBs exist because $\mathbbm{C}^d$ where $d$ is prime-power contains a total of $d+1$ MUBs. Since the states are mutually unbiased, $x_1$, $x_2$ and $x_3$, as defined in Corollary \ref{n=3,d=4} are all equal, $x_1=x_2=x_3=1/d$. For $d\geq4$, they satisfy \eqref{caves_eq} and hence $\omega_E^{[3]}(\rho_1,\rho_2,\rho_3)=0$.

\subsection{Sets of two non-maximally epistemic and non-epistemic mixed preparations} \label{sec3_2}

So far, we have discussed non-epistemic and fully non-epistemic instances of sets containing more than two quantum preparations. Now, we divert our attention to the anti-distinguishability of two preparations. Here we would like to point out that for two preparations, anti-distinguishability and distinguishability are the same, and we have used Eqs. \eqref{Lqn2}, \eqref{distin}, and \eqref{distin_pure} wherever required. Apart from focusing on non-epistemic cases, we have also looked into non-maximally epistemic cases, as discussed below. 

\subsubsection{Non-maximally epistemic preparations in $d\geq2$}

\begin{coro} \label{c2}
    Consider any two MUBs in $\mathbbm{C}^d$, $\left\{\left\{|\psi_{i_k|k}\rangle\right\}_{i_k=1}^{d}\right\}_{k=1}^{2}$. Each of them will result in a maximally mixed preparation, $\rho_{k}=(1/d)\sum_{i_k=1}^{d}|\psi_{i_k|k}\rangle\!\langle\psi_{i_k|k}|=(\mathbbm{1}/d)$ for $k=1,2$. Evidently, $\omega_Q(\rho_1,\rho_2)=1$. The following relation holds for their epistemic overlap,
    \begin{equation}\label{wewqd2mubs}
\omega_E(\rho_1,\rho_2)\leq d-\sqrt{d(d-1)} .
    \end{equation}
\end{coro}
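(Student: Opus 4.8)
The plan is to apply Theorem \ref{t1} with $n=2$ to the two maximally mixed preparations, reducing the bound on $\omega_E(\rho_1,\rho_2)$ to a sum over pairwise epistemic overlaps of the MUB vectors, and then to bound each such pairwise overlap using the two-state (pure-state) relations \eqref{we<wq}, \eqref{Lqn2}, and \eqref{distin_pure}. Concretely, with $\beta_1=\beta_2=d$, all $\alpha_{i_k|k}=1$, and $\lcm(d,d)=d$, the prefactor $(\lcm)^{n-1}/\prod\beta_k$ becomes $d/d^2 = 1/d$, so Theorem \ref{t1} (or equivalently Corollary \ref{c1} with $n=2$) gives
\begin{equation}
\omega_E(\rho_1,\rho_2) \leq \frac{1}{d}\sum_{i,j=1}^{d}\omega_E\!\left(\psi_{i|1},\psi_{j|2}\right) \leq \frac{1}{d}\sum_{i,j=1}^{d}\omega_Q\!\left(\psi_{i|1},\psi_{j|2}\right).
\end{equation}

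Next I would evaluate $\omega_Q$ for each pair of pure states. By \eqref{Lqn2} and \eqref{distin_pure}, for any two pure states with overlap modulus $t=|\langle\psi_{i|1}|\psi_{j|2}\rangle|$ we have $\omega_Q = 2(1-D_Q) = 1 - \sqrt{1-t^2}$. Since the two bases are mutually unbiased, $t = 1/\sqrt{d}$ for every pair $(i,j)$, so $\omega_Q(\psi_{i|1},\psi_{j|2}) = 1 - \sqrt{1 - 1/d} = 1 - \sqrt{(d-1)/d}$, independent of $i,j$. There are $d^2$ such pairs, so
\begin{equation}
\omega_E(\rho_1,\rho_2) \leq \frac{1}{d}\cdot d^2 \left(1 - \sqrt{\tfrac{d-1}{d}}\right) = d - d\sqrt{\tfrac{d-1}{d}} = d - \sqrt{d(d-1)},
\end{equation}
which is exactly \eqref{wewqd2mubs}.

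The argument is short and I do not anticipate a genuine obstacle; the only points requiring care are (i) checking that Theorem \ref{t1}'s integrality hypotheses ($\beta_k\in\mathbbm{N}$, $\alpha_{i_k|k}\in\mathbbm{Z}_{\geq0}$, $\sum_i\alpha_{i|k}=\beta_k$) are met here — they are, trivially, since $\beta_k=d$ and each $\alpha=1$ — and (ii) correctly simplifying $d\sqrt{(d-1)/d}=\sqrt{d}\cdot\sqrt{d-1}=\sqrt{d(d-1)}$. It is worth remarking in passing that this shows the pair is non-maximally epistemic for all $d\geq 2$, since $d-\sqrt{d(d-1)} < 1 = \omega_Q(\rho_1,\rho_2)$ strictly (the gap $1-(d-\sqrt{d(d-1)})$ is positive and in fact tends to $1$ as $d\to\infty$), which is the point of placing this corollary in Section \ref{sec3_2}.
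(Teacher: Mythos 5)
Your proposal is correct and follows essentially the same route as the paper: the paper likewise specializes Theorem \ref{t1} (via Corollary \ref{c1} with $n=2$) to get $\omega_E(\rho_1,\rho_2)\leq \frac{2}{d}\sum_{i_1,i_2}\bigl(1-D_Q(\psi_{i_1|1},\psi_{i_2|2})\bigr)$, which is exactly your $\frac{1}{d}\sum_{i,j}\omega_Q(\psi_{i|1},\psi_{j|2})$ by \eqref{Lqn2}, and then inserts $|\langle\psi_{i_1|1}|\psi_{i_2|2}\rangle|^2=1/d$ from mutual unbiasedness to arrive at $d-\sqrt{d(d-1)}$. Your verification of the integrality hypotheses and the closing remark on non-maximal epistemicity are consistent with the paper's subsequent discussion.
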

\begin{proof}
Substituting $n=2$ in Corollary \ref{c1} we have,
    \begin{equation}
        \omega_E(\rho_1,\rho_2)\leq \frac{2}{d}\sum_{i_1,i_2=1}^{d}\Big(1-D_Q(\psi_{i_1|1},\psi_{i_2|2})\Big).
    \end{equation}
   From \eqref{distin_pure}, $D_Q(\psi_{i_1|1},\psi_{i_2|2})=(1/2)\left(1+\sqrt{1-|\bra{\psi_{i_1|1}}\psi_{i_2|2}\rangle|^2}\right).$ Since we have a set of two MUBs, $|\langle\psi_{i_1|1}|\psi_{i_2|2}\rangle|^2=1/d$ for all $i_1,i_2$, so,
    \begin{eqnarray}
       \omega_E(\rho_1,\rho_2)&\leq& \frac{1}{d}\sum_{i_1,i_2=1}^{d}\left(1-\sqrt{1-\frac{1}{d}}\right)\nonumber \\
       &\leq& \frac{d^2}{d}\left(1-\sqrt{1-\frac{1}{d}}\right),
    \end{eqnarray}
    which implies \eqref{wewqd2mubs}.
\end{proof}
For qubits, the epistemic overlap is upper bounded by $2-\sqrt{2}\approx 0.58$. A natural question that arises is whether one can obtain an epistemic overlap lower than this quantity. As it turns out, $2-\sqrt{2}$ is the minimum epistemic overlap that can be achieved when two maximally mixed preparations acting on $\mathbbm{C}^2$ are concerned. 
\begin{thm} \label{theorem_C2}
 Consider any two complete orthonormal sets in $\mathbbm{C}^2$, $\{\{\ket{\psi_{i_k|k}}\}_{i_k=1}^2\}_{k=1}^2$. Each one of them results in a maximally mixed preparation, $\rho_k=(1/2)\sum_{i_k=1}^2|\psi_{i_k|k}\rangle\!\langle\psi_{i_k|k}|=\mathbbm{1}/2$ for $k=1,2,$ such that the minimum exact value of $\omega_E(\rho_1,\rho_2)$ for all possible ontological models is $2-\sqrt{2}$.

\end{thm}
 \begin{proof}
      Redoing some of the steps in Corollary \ref{c1} makes it clear why. Consider the epistemic overlap of $\rho_1$ and $\rho_2$,
     \bea
     && \omega_E(\rho_1,\rho_2)\nonumber \\
     &=& \int_{\Lambda}\min(\mu(\lambda|\rho_1),\mu(\lambda|\rho_2))d\lambda \nonumber \\
         &=& \frac{1}{2}\int\limits_{\Lambda}\min\Big(\sum_{i_1=1}^2\mu(\lambda|\psi_{i_1|1}), \sum_{i_2=1}^2\mu(\lambda|\psi_{i_2|2})\Big)d\lambda. 
         \eea 
For a given $k=1,2$, the epistemic states, $\mu(\lambda|\psi_{1|k})$ and $\mu(\lambda|\psi_{2|k})$ have no overlap since $\ket{\psi_{1|k}}$ and $\ket{\psi_{2|k}}$ are orthogonal. This allows us to bisect the $\Lambda$-space, $\Lambda=\Lambda_{\psi_{1|k}}+\Lambda_{\psi_{2|k}}$, where $\Lambda_{\psi_{i_k|k}}$ denotes the support of $\mu(\lambda|\psi_{i_k|k})$ such that $\mu(\lambda|\psi_{i_k|k})>0$ $\forall \lambda \in \Lambda_{\psi_{i_k|k}}$. The above integral then becomes,
      %\begin{widetext}
      \bea \label{th_6_eq}
         && \omega_E(\rho_1,\rho_2) \nonumber \\
          &=& \frac{1}{2}\int\limits_{\Lambda_{\psi_{1|1}}}\min\left(\mu(\lambda|\psi_{1|1}),\sum_{i_2=1}^2\mu(\lambda|\psi_{i_2|2})\right)d\lambda \nonumber \\
          && +\frac{1}{2}\int\limits_{\Lambda_{\psi_{2|1}}}\min\left(\mu(\lambda|\psi_{2|1}),\sum_{i_2=1}^2\mu(\lambda|\psi_{i_2|2})\right)d\lambda. \nonumber \\
          \eea 
 %\end{widetext}
      A similar bifurcation of $\Lambda-$space can be done in terms of the sub-spaces $\Lambda_{\psi_{1|2}}$ and $\Lambda_{\psi_{2|2}}$. And since $\Lambda_{\psi_{i_k|k}}\cap\Lambda=\Lambda_{\psi_{i_k|k}}$, \eqref{th_6_eq} can be finally written as,
\begin{equation} \label{th_6_eq_2}
      \begin{split}
\omega_E(\rho_1,\rho_2)&=\frac{1}{2}\sum_{i_1,i_2=1}^2\int_{\Lambda}\min\left(\mu(\lambda|\psi_{i_1|1}),\mu(\lambda|\psi_{i_2|2})\right)d\lambda\\
       &=\frac{1}{2}\sum_{i_1,i_2=1}^2\omega_E(\psi_{i_1|1},\psi_{i_2|2}).
        \end{split}
    \end{equation}
    Note that \eqref{th_6_eq_2} is almost similar to \eqref{coro1_eq} (for $n=2$), the difference being that \eqref{coro1_eq} is an inequality whereas we have obtained an equality here. Furthermore, what we have obtained here can be extended to any $d$-dimensional Hilbert space, an improvement on the bound set by Corollary \ref{c1}.
    
   We now consider the Kochen-Specker model \cite{kochen1967problem}. The Kochen-Specker model is maximally epistemic as far as two pure qubit states are concerned. Hence, we replace $\omega_E(\psi_{i_1|1},\psi_{i_2|2})$ with $\omega_Q(\psi_{i_1|1},\psi_{i_2|2})$ in \eqref{th_6_eq_2}, which yields,
    \begin{equation} \label{we_rho_wq_psi}
        \omega_E(\rho_1,\rho_2)=\frac{1}{2}\sum_{i_1,i_2=1}^2\omega_Q(\psi_{i_1|1},\psi_{i_2|2}).
    \end{equation}
$\omega_Q(\psi_{i_1|1},\psi_{i_2|2})$ is related to $D_Q(\psi_{i_1|1},\psi_{i_2|2})$ by \eqref{Lqn2}. Using the expression of $D_Q(\psi_{i_1|1},\psi_{i_2|2})$ as given in \eqref{distin_pure}, we have,
\begin{equation} \label{wq_pure}
    \omega_Q(\psi_{i_1|1},\psi_{i_2|2})=1-\sqrt{1-|\langle\psi_{i_1|1}|\psi_{i_2|2}\rangle|^2}.
\end{equation}
 Using \eqref{wq_pure}, we can show that $\omega_E(\rho_1,\rho_2)$ attains its minimum value when the orthonormal sets are MUBs.
Let the states given by $k=1$ be expressed in terms of the states given by $k=2$,
\begin{equation}
    \begin{split}
        &\ket{\psi_{1|1}}=c_1\ket{\psi_{1|2}}+c_2\ket{\psi_{2|2}}\\
        &\ket{\psi_{2|1}}=c_1'\ket{\psi_{1|2}}+c_2'\ket{\psi_{2|2}},
    \end{split}
\end{equation}
where obviously $|c_1|^2+|c_2|^2=1$ and $|c_1'|^2+|c_2'|^2=1$. Substituting the expression for $\omega_Q(\psi_{i_1|1},\psi_{i_2|2})$ from \eqref{wq_pure} in \eqref{we_rho_wq_psi} and using the constraints on the coefficients we can write $\omega_E(\rho_1,\rho_2)$ as,
\begin{equation} \label{th_6_omega_c}
\begin{split}
        \omega_E(\rho_1,\rho_2)=2-\frac{1}{2}&\Big(\sqrt{1-|c_1|^2}+|c_1|\\
        &+\sqrt{1-|c_1'|^2}+|c_1'|\Big).
        \end{split}
\end{equation}
To minimize \eqref{th_6_omega_c}, we set $\partial \omega_E(\rho_1,\rho_2)/\partial |c_1|=0$ and $\partial \omega_E(\rho_1,\rho_2)/\partial |c_1'|=0$ and it becomes evident that minimization is achieved when $|c_1|=|c_2|=|c_1'|=|c_2'|= 1/\sqrt{2}$. Clearly, the two orthonormal basis sets must be mutually unbiased.

Using these values, the lowest value that $\omega_E(\rho_1,\rho_2)$ can take in the Kochen-Specker model is $2-\sqrt{2}$. So, for any other basis sets, $\omega_E(\rho_1,\rho_2) > 2-\sqrt{2}$.  
    
    This result is interesting because Corollary \ref{c2} claims that for two MUBs in $d=2$, $\omega_E(\rho_1,\rho_2)\leq 2-\sqrt{2}$, irrespective of any particular epistemic model. Both of these results can be reconciled if one realizes that in $d=2$, since we have shown the existence of an epistemic model where $\omega_E(\rho_1,\rho_2)= 2-\sqrt{2}$, then we cannot have any other model where $\omega_E(\rho_1,\rho_2)< 2-\sqrt{2}$, as that will violate Corollary \ref{c2}. Hence the lowest epistemic overlap possible for two maximally mixed qubit preparations is $2-\sqrt{2}$. Although it must have been clear from Theorem \ref{cond_non_epis_thm}, we emphasize here the fact that the Kochen-Specker model is not maximally epistemic for mixed qubit preparations. 
    \end{proof}
    
\subsubsection{Asymptotically non-epistemic and fully non-epistemic preparations}
    Now that we have discussed the non-maximally epistemic nature of two quantum preparations, we divert our attention to the existence of non-epistemic cases. Such preparations do exist, and to see that we have constructed the following theorem presented below.

\begin{thm} \label{th_bran}
 Consider a set of $d(m+1)$ number of pure states, which are further subdivided into $m+1$ subsets, each containing $d$ number of states, $\{|\psi_{i}\rangle\}_{i=1}^{d}$ and $\{\{|\phi_{i_k|k}\rangle\}_{i_k=1}^{d}\}_{k=1}^{m}$. We construct mixed preparations out of these sets,
    \begin{equation*} \label{rhok}
    \rho_0 = \frac{1}{d}\sum_{i=1}^{d}|\psi_{i}\!\rangle\!\langle\psi_{i}| , \textrm{ } \rho_k = \frac{1}{d} \sum_{i_k=1}^{d}|\phi_{i_k|k}\rangle\!\langle\phi_{i_k|k}| \textrm{ } \forall k \in \{1,\cdots,m\}.
    \end{equation*}
    Then the mixed preparations satisfy the following inequality, 
    \begin{equation}\label{th_5_eq}
        \sum_{k=1}^m \w_E(\rho_0,\rho_k) \leq 1 + \frac{3}{2d} \sum_{j,i_k,i'_{k'}=1}^d \left( 1 - A_Q^{[3]}\left(\psi_j,\phi_{i_k|k},\phi_{i'_{k'}|k'}\right) \right).
    \end{equation}
\end{thm}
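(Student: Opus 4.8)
The plan is to run a Branciard-type argument: estimate the sum of the pairwise epistemic overlaps $\omega_E(\rho_0,\rho_k)$ pointwise in $\lambda$ by an elementary inequality for minima, convert the resulting three-preparation overlaps $\omega_E^{[3]}(\rho_0,\rho_k,\rho_{k'})$ into anti-distinguishability of pure-state triples using Theorem~\ref{t1}, and then collect the constants.

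First I would use the definition \eqref{epis_overlap} to write $\sum_{k=1}^m \omega_E(\rho_0,\rho_k) = \int_\Lambda \bigl(\sum_{k=1}^m \min(\mu(\lambda|\rho_0),\mu(\lambda|\rho_k))\bigr)\,d\lambda$. The crux is the pointwise bound
\[
\sum_{k=1}^m \min(a,b_k) \;\le\; a + \sum_{1\le k<k'\le m}\min(a,b_k,b_{k'}),
\]
valid for all nonnegative reals $a,b_1,\dots,b_m$. Since the right-hand side is symmetric in the $b_k$, one may assume $b_1\le\cdots\le b_m$, so that $\min(a,b_k,b_{k'})=\min(a,b_{\min(k,k')})$ and the double sum equals $\sum_{k=1}^{m-1}(m-k)\min(a,b_k)$; the inequality then reduces to $\min(a,b_m)\le a+\sum_{k=1}^{m-1}(m-k-1)\min(a,b_k)$, which holds because $\min(a,b_m)\le a$ and every remaining term is nonnegative. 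It is exactly this step that produces a constant independent of $m$ (the ``$1$'' in \eqref{th_5_eq}) rather than a bound growing linearly with the number of preparations.

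Taking $a=\mu(\lambda|\rho_0)$ and $b_k=\mu(\lambda|\rho_k)$, integrating over $\Lambda$, and using $\int_\Lambda\mu(\lambda|\rho_0)\,d\lambda=1$ together with the definition of the common epistemic overlap of three preparations, I obtain $\sum_{k=1}^m\omega_E(\rho_0,\rho_k)\le 1+\sum_{k<k'}\omega_E^{[3]}(\rho_0,\rho_k,\rho_{k'})$. For each triple $(\rho_0,\rho_k,\rho_{k'})$ I would apply Theorem~\ref{t1} with $n=3$ and convex data $\beta=d$, $\alpha_{\cdot|\cdot}=1$, so that $\bigl(\lcm(d,d,d)\bigr)^{2}/d^{3}=1/d$; the chain \eqref{sub_eq2} then gives
\[
\omega_E^{[3]}(\rho_0,\rho_k,\rho_{k'})\le \frac{3}{d}\sum_{j,i_k,i'_{k'}=1}^d\bigl(1-A_Q^{[3]}(\psi_j,\phi_{i_k|k},\phi_{i'_{k'}|k'})\bigr).
\]
Summing over the unordered pairs $\{k,k'\}$ and rewriting that sum as one half of the sum over ordered pairs $(k,k')$ with $k\neq k'$, as in the statement of \eqref{th_5_eq}, yields the factor $3/(2d)$ and reproduces \eqref{th_5_eq}.

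The main obstacle is isolating the correct pointwise inequality in the second step: a naive route --- summing the three-preparation estimate $\omega_E(\rho_0,\rho_k)+\omega_E(\rho_0,\rho_{k'})\le 1+\omega_E^{[3]}(\rho_0,\rho_k,\rho_{k'})$ over all pairs --- only bounds $\sum_k\omega_E(\rho_0,\rho_k)$ by roughly $m/2$, which is far too weak for the intended asymptotic ($\psi$-epistemicity--refuting) conclusion. Once the sharp minima inequality is in hand, the remainder is bookkeeping with the constants supplied by Theorem~\ref{t1} and by $\omega_Q^{[3]}=3(1-A_Q^{[3]})$.
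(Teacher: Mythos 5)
Your proposal is correct, and it reaches \eqref{th_5_eq} by a genuinely different (though closely related) route than the paper. The paper first reduces to pure states, using Theorem \ref{t1} with $n=2$ to get $\sum_k \w_E(\rho_0,\rho_k) \le \frac{1}{d}\sum_j\sum_{k,i_k}\w_E(\psi_j,\phi_{i_k|k})$, and only then applies an elementary inequality --- namely $\sum_i a_i \le \max_i a_i + \frac12\sum_{i,j}\min(a_i,a_j)$ with the composite index $(k,i_k)$ and $a_{(k,i_k)}=\min(\mu(\lambda|\psi_j),\mu(\lambda|\phi_{i_k|k}))$ --- so that the triple overlaps of \emph{pure} states appear directly, the max term being bounded by $\int\mu(\lambda|\psi_j)\,d\lambda=1$ for each $j$. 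You instead apply your (correct, and correctly proved) pointwise inequality $\sum_k\min(a,b_k)\le a+\sum_{k<k'}\min(a,b_k,b_{k'})$ directly to the \emph{mixed} epistemic states, obtaining the clean intermediate statement $\sum_k\w_E(\rho_0,\rho_k)\le 1+\sum_{k<k'}\w_E^{[3]}(\rho_0,\rho_k,\rho_{k'})$ valid for arbitrary preparations, and only afterwards invoke Theorem \ref{t1} with $n=3$ (your constant bookkeeping, $(\lcm(d,d,d))^2/d^3=1/d$ and the factor $n=3$, is right). Both routes produce the prefactor $3/(2d)$; your sum runs only over $k\neq k'$, which is formally a slightly tighter bound and implies \eqref{th_5_eq} since every term $1-A_Q^{[3]}\ge 0$. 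The trade-off is mainly aesthetic: your version isolates a reusable mixed-state inequality before any decomposition, while the paper's version works entirely at the pure-state level with its single min/max lemma; the hypotheses of Theorem \ref{t1} are satisfied in both orders of application, so either argument stands.
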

\begin{proof}
    Note that we can write 
    \be \label{we_leq_we}
    \sum_{k=1}^m \w_E(\rho_0,\rho_k) \leq \frac{1}{d} \sum_{j=1}^d \sum_{k=1}^m \sum_{i_k=1}^d \w_{E}(\psi_{j},\phi_{i_k|k}),
    \ee 
    which follows from Theorem \ref{t1} for $n=2$, where $p_{i_k|k}=1/d$, after summing over $k$.
    To prove \eqref{th_5_eq}, we start with $\omega_E(\psi_j,\phi_{i_k|k})$ and sum over the indices $k$ and $i_k$,
    \begin{widetext}
    \bea \label{th_7_eq}
    \sum_{k=1}^m\sum_{i_k=1}^d \w_E(\psi_j,\phi_{i_k|k}) &=&  \int_{\Lambda} \sum_{k=1}^m\sum_{i_k=1}^d  \min( \mu(\lambda|\psi_j), \mu(\lambda|\phi_{i_k}|k)) d\lambda \nonumber \\ 
    &\leq & \int_{\Lambda} \max_{k,i_k} \left( \min(\mu(\lambda|\psi_j),\mu(\lambda|\phi_{i_k|k})) \right) d\lambda + \frac12 \int_{\Lambda} \sum_{i_k,i'_{k'}=1}^d \min(\mu(\lambda|\psi_j),\mu(\lambda|\phi_{i_k|k}),\mu(\lambda|\phi_{i'_{k'}|k'}) ) d\lambda  \nonumber \\
    &\leq & 1+\frac{1}{2}\sum_{i_k,i'_{k'}=1}^d\omega_Q^{[3]}\left(\psi_j,\phi_{i_k|k},\phi_{i'_{k'}|k'}\right) \nonumber\\
&= & 1 + \frac32 \sum_{i_k,i'_{k'}=1}^d \left( 1 - A_Q^{[3]}\left(\psi_j,\phi_{i_k|k},\phi_{i'_{k'}|k'}\right) \right) .
\eea 
 \end{widetext}
In the second line, we use the fact that for any set of non-negative numbers $\{a_i\}$,
\be \label{relation1}
\sum_i a_i \leq \max(a_1,\cdots,a_n) + \frac12 \sum_{i,j} \min(a_i,a_j) ,
\ee 
followed by the fact that the second term in the second line is $\sum_{i_{k},i'_{k'}}\omega_E^{[3]}(\psi_j,\phi_{i_k},\phi_{i'_{k'}|k'})$. Also, note that the first term in the second line is always less than or equal to $1$. From there on, arriving at the final expression in \eqref{th_7_eq} is quite straightforward. To finally arrive at \eqref{th_5_eq}, we take \eqref{th_7_eq}, and plug it in \eqref{we_leq_we}. This completes the proof.
\end{proof}

Non-epistemic examples follow from this theorem. It will become apparent through the two theorems presented in the next two sections, which have been modeled after the results of Barrett \textit{et al.} \cite{Barrett}  and Branciard \cite{Branciard} respectively.

\begin{thm}[Fully non-epistemic for two preparations] \label{coro_3}
    Consider the states in $d+1$ number of MUBs present in $\mathbb{C}^d$, $\{|\psi_{i}\rangle\}_{i=1}^{d}$ and $\left\{\left\{|\phi_{i_k|k}\rangle\right\}_{i_k=1}^{d}\right\}_{k=1}^{d}$ where $d$ is prime-power and $d\geq 4$. The maximally mixed preparations constructed out of them,
    \begin{equation*} \label{rhok}
    \begin{split}
    \rho_0 &= \frac{1}{d}\sum_{i=1}^{d}|\psi_{i}\!\rangle\!\langle\psi_{i}|=\frac{\mathbbm{1}}{d}, \\
    \rho_k &= \frac{1}{d} \sum_{i_k=1}^{d}|\phi_{i_k|k}\rangle\!\langle\phi_{i_k|k}|=\frac{\mathbbm{1}}{d} \textrm{ } \forall k \in \{1,\cdots,d\},
    \end{split}
    \end{equation*}
    satisfy the following relations,
    \be \label{cor_3_ineq}
  \frac{1}{d}\sum_{k=1}^d 
 \w_E(\rho_0,\rho_k) \leq \frac{1}{d} \ ,
\ee 
while $ \w_Q(\rho_0,\rho_k) = 1$ for all $k$.
\end{thm}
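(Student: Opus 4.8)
The plan is to reduce the claim to the single inequality $\sum_{k=1}^{d}\omega_E(\rho_0,\rho_k)\le 1$, and then prove that inequality directly from the mutually unbiased structure, along the lines of Branciard's bound and of Theorem~\ref{th_bran}. The reduction is immediate: each of $\rho_0,\rho_1,\dots,\rho_d$ equals $\mathbbm{1}/d$, so in every pair the two preparations coincide, they are as far as possible from being anti-distinguishable, and $\omega_Q(\rho_0,\rho_k)=1$; hence the left-hand side of \eqref{cor_3_ineq} is just $\tfrac1d\sum_{k=1}^{d}\omega_E(\rho_0,\rho_k)$, and \eqref{cor_3_ineq} is equivalent to $\sum_{k=1}^{d}\omega_E(\rho_0,\rho_k)\le 1$. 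A termwise estimate is hopeless here: Corollary~\ref{c2} gives only $\omega_E(\rho_0,\rho_k)\le d-\sqrt{d(d-1)}$, which tends to $\tfrac12$, so summing over $k$ yields a useless bound of order $d/2$. The real content is that the $d$ overlaps cannot all be sizeable at a single ontic state.

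The key input is an anti-distinguishability fact. For $k\ne k'$ the three states $\psi_j$, $\phi_{i_k|k}$, $\phi_{i'_{k'}|k'}$ belong to three \emph{distinct} members of the family of $d+1$ MUBs, hence are pairwise mutually unbiased with all three squared overlaps equal to $1/d$. Applying the criterion \eqref{caves_eq} used in Corollary~\ref{n=3,d=4} with $x_1=x_2=x_3=1/d$, one needs $3/d<1$ and $(3/d-1)^2\ge 4/d^{3}$, i.e. $(d-3)^2 d\ge 4$; both hold precisely for integer $d\ge4$ (at $d=3$ the first already fails, which is exactly where the hypothesis $d\ge 4$ enters). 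So every such triple is perfectly anti-distinguishable, $A_Q^{[3]}(\psi_j,\phi_{i_k|k},\phi_{i'_{k'}|k'})=1$, and therefore by \eqref{AQ_epis} its common epistemic overlap vanishes: $\min\big(\mu(\lambda|\psi_j),\mu(\lambda|\phi_{i_k|k}),\mu(\lambda|\phi_{i'_{k'}|k'})\big)=0$ for almost every $\lambda$, for all $j,i_k,i'_{k'}$ and every $k\ne k'$.

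I would then turn this into a pointwise statement and integrate. By the convexity assumption \eqref{fcm}, $\mu(\lambda|\rho_0)=\tfrac1d\sum_{j}\mu(\lambda|\psi_j)$ and $\mu(\lambda|\rho_k)=\tfrac1d\sum_{i_k}\mu(\lambda|\phi_{i_k|k})$. Fix $\lambda$ outside the union of the finitely many null sets above. If $\mu(\lambda|\rho_0)>0$, then $\mu(\lambda|\psi_j)>0$ for some $j$; and if there were two distinct indices $k\ne k'$ with $\mu(\lambda|\rho_k)>0$ and $\mu(\lambda|\rho_{k'})>0$, then $\mu(\lambda|\phi_{i_k|k})>0$ and $\mu(\lambda|\phi_{i'_{k'}|k'})>0$ for some $i_k,i'_{k'}$, forcing $\min\big(\mu(\lambda|\psi_j),\mu(\lambda|\phi_{i_k|k}),\mu(\lambda|\phi_{i'_{k'}|k'})\big)>0$ — a contradiction. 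Hence at almost every $\lambda$ in the support of $\rho_0$ at most one $\rho_k$ is supported, so $\sum_{k=1}^{d}\min\big(\mu(\lambda|\rho_0),\mu(\lambda|\rho_k)\big)\le\mu(\lambda|\rho_0)$ (the case $\mu(\lambda|\rho_0)=0$ being trivial). Integrating over $\Lambda$ and using $\int_{\Lambda}\mu(\lambda|\rho_0)\,d\lambda=1$ gives $\sum_{k=1}^{d}\omega_E(\rho_0,\rho_k)\le 1$, i.e. \eqref{cor_3_ineq}. This is the same mechanism as in Theorem~\ref{th_bran}, with every ``cross'' contribution killed by anti-distinguishability.

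The crux is the anti-distinguishability input: one must certify that \emph{all} the cross-triples with $k\ne k'$ are simultaneously perfectly anti-distinguishable and isolate $d\ge4$ as the exact threshold via \eqref{caves_eq}. A subtlety worth flagging is that the result should \emph{not} be obtained by plugging the MUBs into \eqref{th_5_eq} and declaring the whole correction term zero: the ``diagonal'' triples $\{\psi_j,\phi_{i_k|k},\phi_{i_k|k}\}$ are \emph{not} perfectly anti-distinguishable (a short calculation gives $A_Q^{[3]}=1-\tfrac13\big(1-\sqrt{1-1/d}\big)<1$), so the support argument above — which only ever invokes the $k\ne k'$ triples — is the clean route. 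The remaining ingredients, namely the normalization $\omega_Q(\rho_0,\rho_k)=1$ and the fact that a finite union of null sets is null, are routine.
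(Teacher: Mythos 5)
Your proposal is correct, and the reduction, the normalization $\omega_Q(\rho_0,\rho_k)=1$, and the key input (perfect anti-distinguishability of the cross triples via \eqref{caves_eq}, with $d\ge 4$ as the threshold) are exactly the paper's ingredients. Where you diverge is the final step: the paper obtains $\sum_{k}\omega_E(\rho_0,\rho_k)\le 1$ precisely by the route you advise against, namely plugging the MUB preparations into \eqref{th_5_eq} of Theorem~\ref{th_bran} and declaring the correction term zero, whereas you rederive the bound by a self-contained pointwise support argument (a.e.\ on the support of $\mu(\cdot|\rho_0)$ at most one $\rho_k$ can be supported, so $\sum_k\min(\mu(\lambda|\rho_0),\mu(\lambda|\rho_k))\le\mu(\lambda|\rho_0)$, then integrate). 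Your caveat about the ``diagonal'' triples is only an issue under a literal reading of the sum in \eqref{th_5_eq} that includes a repeated state: as derived from \eqref{relation1}, the correction involves only pairs of \emph{distinct} states $\phi_{i_k|k}\neq\phi_{i'_{k'}|k'}$, and for $k=k'$, $i_k\neq i'_{k'}$ the two states are orthogonal, so those triples are also perfectly anti-distinguishable (trivially, without \eqref{caves_eq}); hence the paper's application of \eqref{th_5_eq} is sound. What your version buys is robustness to that notational ambiguity and a more elementary argument that never leaves the ontic-state level; what the paper's version buys is uniformity, since Theorem~\ref{th_bran} also handles the case where the triples are only imperfectly anti-distinguishable (as needed for Theorem~\ref{coro_bran}).
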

\begin{proof}
    This result follows from Theorem \ref{th_bran}. We take the $d+1$ maximally mixed preparations and plug them in \eqref{th_5_eq}, noting that $m=d$. Since all the preparations are maximally mixed, $\omega_Q(\rho_0,\rho_k)=1$ for every $k$. We further note that sets such as $\{\ket{\psi_j},\ket{\phi_{i_k|k}},\ket{\phi_{i'_{k'}|k'}}\}$ are perfectly anti-distinguishable, that is $A_Q^{[3]}(\psi_j,\phi_{i_k|k},\phi_{i'_{k'}|k'})=1$ since the states contained in them satisfy \eqref{caves_eq} mentioned in Corollary \ref{n=3,d=4}. Then \eqref{th_5_eq} dictates that $\sum_{k=1}^d\omega_E(\rho_0,\rho_k)\leq1$. Equation \eqref{cor_3_ineq} follows when this quantity is divided by $d$, which signifies the average of $\omega_E(\rho_0,\rho_k)$.
\end{proof}
It is worth mentioning that \eqref{cor_3_ineq} is a better bound than that obtained for pure states by Barrett \textit{et al.} in \cite{Barrett}.

To see how fully non-epistemic instances follow from this theorem, consider a Hilbert space of the form, $\bigotimes_{i=1}^n(\mathbbm{C}^p)_i$, where $p$ is any prime number. The dimension of this Hilbert space $d=p^n$, and hence is prime-power. In the asymptotic limit $n \rightarrow \infty$, the upper bound in \eqref{cor_3_ineq} of Theorem \ref{rhok} approaches 0. Since every term in this average is a non-negative quantity, this implies that as the average approaches $0$, so does at least some of the individual terms, meaning $\omega_E(\rho_0,\rho_k) \rightarrow 0$ for some $k$. Since the preparations are all maximally mixed, $\omega_Q(\rho_0,\rho_k)=1$, which indicates that asymptotically it approaches a fully non-epistemic case.

%In this case, $A_Q\left(\psi_j,\phi_{i_k},\phi_{i'_{k'}}\right) =1$ for all $j,k,i_k$. Moreover, $\w_Q(\rho_0,\rho_k)=1$ as $\rho_0,\rho_k$ are the maximally mixed states. Therefore, the right-hand side of \eqref{we/wqn} becomes $1/d$, which tends to zero. Note that this bound bound is lower than the bound obtained in \cite{Barrett}. Moreover, Eq. \eqref{we/wqn} is robust. 

\begin{thm}[Non-epistemic for two preparations] \label{coro_bran}
    For any Hilbert space $\mathbbm{C}^d$ of dimension $d\geq4$, one can find $nd+1$ number of pure states: $\ket{\psi_0}$ and $\left\{\{\ket{\phi_{i_k|k}}\}_{i_k=1}^d\right\}_{k=1}^n$, such that the pure state $\rho_0=\ket{\psi_0}\!\bra{\psi_0}$ and mixed preparations $\rho_k=(1/d)\sum_{i_k=1}^d\ket{\phi_{i_k|k}}\!\bra{\phi_{i_k|k}}$, $\forall k \in \{1,\cdots,n\}$ satisfy the inequalities, 
    \begin{equation} \label{bran_d_4}
        \frac{1}{n}\sum_{k=1}^n \omega_E(\rho_0,\rho_k) \leq \frac{1}{n} \ ,
    \end{equation}
    and
    \begin{equation} \label{non_epis_th8} \omega_Q(\rho_0,\rho_k) \geq \frac{1}{8(nd)^\frac{1}{d-2}}, \ \forall k.
    \end{equation} 
    
\end{thm}

\begin{proof}
  Branciard has shown in \cite{Branciard} that for any $d\geq3$ and $N>2$, one can always find pure states $\ket{\psi_0}$ and $\{\ket{\phi_i}\}_{i=1}^N$ such that:
 \begin{enumerate}[label=(\roman*)]
     \item $|\bra{\psi_0}\phi_i\rangle|^2\leq \chi=\frac{1}{4}N^{\frac{-1}{d-2}}$
     \item  A set of this form $\{\ket{\psi_0},\ket{\phi_i},\ket{\phi_j}\}$, for all $i\neq j$ is perfectly anti-distinguishable.
 \end{enumerate}
Let $N=nd$, meaning we have $nd+1$ number of states in $\mathbbm{C}^d$ that satisfy the two conditions mentioned above. Furthermore, let us divide the $nd$ number of states into $n$ different sets having $d$ states each, akin to the theorem statement. We then take the $n+1$ mixed preparations arising out of them and plug these in \eqref{th_5_eq} of Theorem \ref{th_bran}, noting that $m=n$. Each such set $\{\ket{\psi_0},\ket{\phi_{i_k|k}},\ket{\phi_{i'_{k'}|k'}}\}$ is likened to sub-part (ii) of Branciard's result as mentioned above, and thus they are perfectly anti-distinguishable. Then according to \eqref{th_5_eq}, $\sum_{k=1}^n\omega_E(\rho_0,\rho_k)\leq1$. The remainder of our task is to find a bound for $\omega_Q(\rho_0,\rho_k)$.

Let us recall that $\omega_Q(\rho_0,\rho_k)$ is related to $D_Q(\rho_0,\rho_k)$ by \eqref{Lqn2}, which in turn is expressed in terms of $T(\rho_0,\rho_k)$ by \eqref{distin}. If one splits the mixed preparations into their pure states, then we have a useful identity at hand,
  \begin{equation}
      T(\rho_0,\rho_k)\leq\frac{1}{d}\sum_{i_k=1}^dT\left(\psi_0,\phi_{i_k|k}\right).
    \end{equation}
      This in turn, implies that,
      \begin{equation}
       D_Q(\rho_0,\rho_k)\leq \frac{1}{d}\sum_{i_k=1}^d D_Q(\psi_0,\phi_{i_k|k}).
  \end{equation}
  We use this identity to construct the following lower bound of $\omega_Q(\rho_0,\rho_k)$,
  \bea \label{w_q_bnd}
          \omega_Q(\rho_0,\rho_k)&=& 2(1-D_Q(\rho_0,\rho_k))\nonumber \\
          & \geq & 2\left(1-\frac{1}{d}\sum_{i_k=1}^dD_Q(\psi_0,\phi_{i_k|k})\right).
      \eea 
  $D_Q(\psi_0,\phi_{i_k|k})=(1/2)\left(1+\sqrt{1-|\bra{\psi_0}\phi_{i_k|k}\rangle|^2}\right)$, as expressed in \eqref{distin_pure}. We substitute $\chi$ from sub-part (i) of Branciard's result in place of $|\bra{\psi_0}\phi_{i_k|k}\rangle|^2$ (keeping in mind that $N=nd$) and use \eqref{w_q_bnd} to arrive at the inequality,
  \begin{equation} \label{sum_wq}
      \begin{split}
          \omega_Q(\rho_0,\rho_k) \geq 2-\left(1+\sqrt{1-\chi}\right) \geq \frac{\chi}{2}=\frac{(nd)^\frac{-1}{d-2}}{8}.
      \end{split}
  \end{equation}
  The above inequality \eqref{sum_wq} when rearranged properly is nothing but \eqref{non_epis_th8}, which completes the proof.
  
\end{proof}

%To see how non-epistemic instances follow from this theorem, consider any Hilbert space of $d\geq4$, where one can always find $nd+1$ number of states such that \eqref{bran_d_4} and \eqref{non_epis_th8} are satisfied. 
%If we look at the average difference between quantum and epistemic overlaps \cite{leifer-video}, 
%\begin{equation}
%\begin{split}
%    \Delta \omega=&\frac{1}{n}\sum_{k=1}^n\omega_Q(\rho_0,\rho_k)-\frac{1}{n}\sum_{k=1}^n\omega_E(\rho_0,\rho_k)\\
%    &\geq\frac{1}{8(nd)^{\frac{1}{d-2}}}-\frac{1}{n}.
%    \end{split}
%\end{equation}
%For a fixed Hilbert space of dimension $d$, $\Delta \omega>0$ whenever $n>8^{d-3/d-2}d^{1/d-3}$ and it maximizes for $n\approx8^{d-3/d-2}(d-2)^{d-2/d-3}d^{1/d-3}$. 
%The maximum possible value of the lower bound of $\Delta\omega$ is $0.125$ which is achieved for $n\rightarrow9.92\times10^{14}$ and $d\rightarrow3.77\times10^{14}$. This implies that these preparations are non-maximally epistemic.
To see how such preparations are asymptotically non-epistemic, consider the fact that the average of $\omega_E(\rho_0,\rho_k)\leq 1/n$ which implies that at least for one value of $k$, $\omega_E(\rho_0,\rho_k)\leq 1/n$. For that $k$ the epistemic overlap compared to the quantum overlap $\omega_E(\rho_0,\rho_k)/\omega_Q(\rho_0,\rho_k)\leq8d^{1/d-2}/n^{d-3/d-2}$. Thus in any Hilbery space of $d\geq4$, for  larger sets of such preparations, that is, as $n\rightarrow\infty$ epistemic models become exceedingly poor at explaining their distinguishability.

\subsubsection{Non-maximally epistemic mixed preparations in the simplest preparation contextuality scenario} \label{sec5}

Let us now consider the expression of the simplest preparation noncontextuality inequality, which also appears as the success metric of parity oblivious multiplexing task \cite{Spekkens2009,PuseyPRA2018}. 
Below, we have presented a theorem that can used to witness non-maximally epistemic mixed preparations, in general, whose specific instances manifest violations of that same preparation noncontextuality inequality. This theorem is closely related to the Proposition 2 in \cite{cs}.

\begin{thm} \label{theorem_S}
    Consider the set of four states $\{\ket{\psi_{x_0x_1}}\}$ where $x_0x_1=00,01,10,11$, two 
    binary-outcome measurements $\{\mathcal{M}_0,\mathcal{M}_1\}$, and the following empirical quantity,
\be 
S = \frac{1}{8} \sum_{x_0x_1,y} p(x_y|\psi_{x_0x_1},\mathcal{M}_y).
\ee 
 Then, the following relation holds,
\be \label{contex_eq}
\omega_E(\rho_0,\rho_1) \leq 4(1-S),
\ee
where 
\bea
\rho_0 = \frac12 (\ket{\psi_{00}}\!\bra{\psi_{00}}+\ket{\psi_{11}}\!\bra{\psi_{11}}), \nonumber \\ \rho_1 = \frac12 (\ket{\psi_{01}}\!\bra{\psi_{01}}+\ket{\psi_{10}}\!\bra{\psi_{10}}).
\eea 

\end{thm}
\begin{proof}
    We begin by writing the probabilities in $S$ in terms of epistemic states and response functions, as given by \eqref{prob},
    \begin{equation}
        S=\frac{1}{8}\sum_{x_0,x_1,y}\int_{\Lambda}\mu(\lambda|x_0x_1)\xi(x_y|\lambda,\mathcal{M}_y)d\lambda.
    \end{equation}
    Next, we write the above equation explicitly and then collect the terms as follows,
    \begin{equation} \label{wewqS}
        \begin{split}
            S=\frac{1}{8}\int_{\Lambda}\Big(&\xi(0|\lambda,\mathcal{M}_0)\big(\mu(\lambda|00)+\mu(\lambda|01)\big)\\ +&\xi(1|\lambda,\mathcal{M}_0)\big(\mu(\lambda|10)+\mu(\lambda|11)\big)\\ +&\xi(0|\lambda,\mathcal{M}_1)\big(\mu(\lambda|00)+\mu(\lambda|10)\big)\\
            +&\xi(1|\lambda,\mathcal{M}_1)\big(\mu(\lambda|11)+\mu(\lambda|01)\big)\Big)d\lambda.
        \end{split}
    \end{equation}
    Given a measurement $\mathcal{M}_y$ for $y\in\{0,1\}$, the response functions  are normalized, $\xi(0|\lambda,\mathcal{M}_y)+\xi(1|\lambda,\mathcal{M}_y)=1$. Hence, one can write,
    \begin{equation}
    \begin{split}
        S\leq\frac{1}{8}&\int_{\Lambda}\Big(\max\big(\mu(\lambda|00)+\mu(\lambda|01),\mu(\lambda|10)+\mu(\lambda|11)\big)\\
        +&\max\big(\mu(\lambda|00)+\mu(\lambda|10),\mu(\lambda|11)+\mu(\lambda|01)\big)\Big)d\lambda.
        \end{split}
    \end{equation}
    Next we use the identity, $\max(a,b)=a+b-\min(a,b)$ which leads us to the inequality given below,
    \begin{equation}
        \begin{split}
            S \leq 1-\frac{1}{8}&\int\limits_{\Lambda}\Big(\min(\mu(\lambda|00)+\mu(\lambda|01),\mu(\lambda|10)+\mu(\lambda|11))\\
        +&\min(\mu(\lambda|00)+\mu(\lambda|10),\mu(\lambda|11)+\mu(\lambda|01))\Big)d\lambda.
        \end{split}
    \end{equation}
    For a set of positive numbers $\{a,b,c,d\}$, $\min(a+b,c+d)+\min(a+c,b+d)\geq\min(a+d,b+c)$ holds true. We use this to finally arrive at,
    \begin{equation}\label{S}
    \begin{split}
        S\leq 1-\frac{1}{4}\int\limits_{\Lambda}\frac{1}{2}\min&\Big(\mu(\lambda|00)+\mu(\lambda|11),\\
        &\mu(\lambda|01)+\mu(\lambda|10)\Big)d\lambda.
        \end{split}
    \end{equation}
    Notice that $(1/2)(\mu(\lambda|00)+\mu(\lambda|11))=\mu(\lambda|\rho_0)$ and $(1/2)(\mu(\lambda|01)+\mu(\lambda|10))=\mu(\lambda|\rho_1)$ and that the quantity sitting on the right-hand-side of \eqref{S} is $1-(1/4)\omega_E(\rho_0,\rho_1)$. Rearranging the terms leads us to \eqref{contex_eq}, which completes the proof.
    
\end{proof}

 The qubit states and measurements achieving the maximal value of $S = \frac{1}{2}\left(1+\frac{1}{\sqrt{2}}\right)$ are provided in \cite{Spekkens2009}, which makes $\omega_E(\rho_0,\rho_1)\leq 2-\sqrt{2}$. For these states, $\omega_Q(\rho_0,\rho_1)=1$, since both the preparations are maximally mixed. Here we recall that Theorem \ref{theorem_C2} has shown that the minimum epistemic overlap achievable for two maximally mixed qubit preparations is $2-\sqrt{2}$.  However, it is possible to identify instances where the upper bound on $\omega_E(\rho_0,\rho_1)$ is less than one while simultaneously also having $\omega_Q(\rho_0,\rho_1) < 1$ \cite{cs}. These instances represent non-maximally epistemic mixed preparations but do not demonstrate preparation contextuality.

\section{Discussions and Conclusions} \label{sec_conclusion} 

In a nutshell, this work addresses the fundamental question of whether any epistemic model can replace quantum theory and provides answers to several unresolved questions. First, it is shown that no epistemic model for mixed preparations, even for qubit systems, can reproduce the empirical predictions of quantum theory. Second, we present the most potent manifestation of refuting epistemic models, where the epistemic overlap for a set of preparations vanishes while the respective quantum overlap reaches its maximum value. Unlike all the previous results, these findings involve a finite number of preparations and do not require any nontrivial assumptions about the underlying epistemic models. Moreover, we demonstrate the existence of two indistinguishable mixed preparations such that their epistemic overlap goes to zero, signifying the most compelling form of preparation contextuality.

%In this final section, we summarize the main results of this article and address certain open questions.

In particular, we have shown that there exist sets of quantum mixed preparations belonging to Hilbert spaces of dimensions ranging from $2$ onwards with no epistemic explanation for their anti-distinguishability. After establishing an upper bound on the common epistemic overlap of mixed preparations in terms of the epistemic overlap of their decompositions (Theorem \ref{t1}), we have dwelt in Hilbert space of dimension $2$ extensively. We presented a set of sufficient conditions for the 'un-anti-distinguishability' of a set of pure qubit states (Lemma \ref{lem_bloch}), and then used that to write down a set of \textit{necessary and sufficient} conditions for the anti-distinguishability of three pure qubit states (Theorem \ref{bloch_anti_d_cond}). The conditions thus presented have a geometric interpretation and, hence, are easier to visualize. 
%Not only that, preexisting criteria for anti-distinguishability of a set of pure quantum states such as those mentioned in \cite{caves} and \cite{johnston2023tight} are merely sufficient and not necessary. Moreover, \cite{caves} only takes into account projective measurements while Theorem \ref{bloch_anti_d_cond} applies to general measurements as well. 
With the help of Theorem \ref{bloch_anti_d_cond}, we have presented a set of criteria to identify a large class of non-epistemic cases involving three preparations (Theorem \ref{cond_non_epis_thm}) and provided an explicit example. 
Next, we have shown that fully non-epistemic mixed preparations do not exist for qubit systems (Theorem \ref{thm_no_fully_epis}). They do exist, however, in higher dimensions, as has been shown by utilizing corollaries (\ref{n=4,d=3} and \ref{n=3,d=4}). Specifically, Corollary \ref{n=4,d=3} has been used to show that there exist sets of four preparations in $d=3$ and Corollary \ref{n=3,d=4} to show that there exist sets of three preparations in $d\geq4$ which are fully non-epistemic. We have also considered sets of two mixed preparations and looked at their distinguishability. We have shown instances of non-maximally epistemic cases (Corollary \ref{c2} and Theorem \ref{theorem_C2}) and then moved on in our search for non-epistemic ones. Theorems \ref{coro_3} and \ref{coro_bran}, which follow from Theorem \ref{th_bran} have been utilized to show non-epistemic cases in their respective asymptotic limits, meaning that Theorem \ref{coro_3} leads to fully non-epistemic cases in the limit where the dimension of the Hilbert space $d \rightarrow \infty$ and Theorem \ref{coro_bran} leads to non-epistemic cases in Hilbert spaces of $d\geq4$ but where the number of preparations $n \rightarrow \infty$.  Interestingly, any proof of preparation contextuality implies that the respective mixed preparations are \textit{non-maximally epistemic}, a weaker version of non-epistemic where the epistemic overlap is required to be less than the quantum overlap. Theorem \ref{theorem_S}, which provides a relation between the epistemic overlap and the success metric of the parity-oblivious multiplexing task, has been specifically used to demonstrate the implication explicitly.
%Thus, the theorems and corollaries leading to the most dramatic examples, such as those exhibiting fully non-epistemic explanations mentioned in Section \ref{sec3_1}, cannot be invoked to negate $\psi-$epistemic models. That task falls again upon Theorem \ref{th_bran} and its corollaries (\ref{coro_3} and \ref{coro_bran}).Apart from this, we have pointed out that preparation contextuality implicit in our article throughout. We emphasize it being implicit because even without naming it so, it led us to differentiate between mixed \textit{states} and mixed \textit{preparations} early on in this article. The fact that there is no necessity for the epistemic states of two different preparations of the same mixed state to be equal indicates preparation contextuality. 
However, a more profound consequence of it is the presence of fully non-epistemic cases for two mixed preparations or fully preparation contextuality, wherein the epistemic overlap is zero for two indistinguishable mixed preparations. The no-go results on the epistemic models are summarized in Table \ref{tab:sumup}. 

\begin{widetext}

\begin{table}[h!]
\centering
\begin{tabular}{|c|c|c|}
\hline
No-go Results & Quantum and Epistemic overlap & Dimension ($d$) \\
\hline \hline
Non-maximally epistemic 
& $\omega_Q(\rho_1,\rho_2) > \omega_E(\rho_1,\rho_2)$ 
& $d=2$ \\
\hline
Non-epistemic 
& $\omega^{[3]}_Q(\rho_1,\rho_2,\rho_3) = 0.116,\; \omega^{[3]}_E(\rho_1,\rho_2,\rho_3)=0$ 
& $d=2$ \\
\hline
Fully non-epistemic 
& $\omega^{[4]}_Q(\rho_1,\rho_2,\rho_3,\rho_4)=1,\; \omega^{[4]}_E(\rho_1,\rho_2,\rho_3,\rho_4)=0$ 
& $d=3$ \\
\hline
Fully non-epistemic 
& $\omega^{[3]}_Q(\rho_1,\rho_2,\rho_3)=1,\; \omega^{[3]}_E(\rho_1,\rho_2,\rho_3)=0$ 
& $d=4$ \\
\hline
Fully non-epistemic for two preparations
& $\omega_Q(\rho_1,\rho_2)=1,\; \omega_E(\rho_1,\rho_2)=0$ 
& $d \to \infty$ \\
(Fully preparation contextual) & & \\
\hline
\end{tabular}
\caption{The main no-go results for epistemic models of mixed-state preparations are summarized above. %Notably, the proofs of non–maximally epistemic behavior for mixed preparations listed in the first row have previously been discussed in different contexts, namely as preparation contextuality \cite{Spekkens2005} and as excess ontological distinctness \cite{cs}.
Beyond these no-go results, we also show that fully non-epistemic proofs are not possible for qubit systems, that is, there exists a model for qubit such that $\omega^{[n]}_Q(\rho_1, \cdots,\rho_n)=1$ implies $\omega^{[n]}_E(\rho_1, \cdots,\rho_n) > 0.$  }\label{tab:sumup}
\end{table}
\end{widetext}

\subsection{Refuting $\psi-$epistemic models} \label{sec4}

Interestingly, a lack of epistemic mixed preparations does not automatically negate the epistemic model of pure preparations, namely, $\psi-$epistemic models. %That requires the additional caveat of having the quantum overlap of at least one pair of pure states comprising the mixed preparations be non-zero while at the same time having the corresponding epistemic overlap be zero. 
However, we identify that some of the examples obtained here indeed fall into this category.
Consider the scenario where a set of two mixed preparations $\{\rho_1,\rho_2\}$ is shown to have,
\be \label{wen0}
\w_E(\rho_1,\rho_2)=0.
\ee 
 The question we address is about the applicability of $\psi-$epistemic models. In other words, is the non-existence of a set of two mixed preparations that are epistemic sufficient to infer conclusions about the effectiveness of the $\psi-$epistemic models? Let each mixed preparation be decomposed as $\rho_k = \sum_{i_k} p_{i_k|k} \ket{\psi_{i_k|k}}\!\bra{\psi_{i_k|k}}$. It follows from convexity that \eqref{wen0} implies 
\be \label{we-psi}
\w_E(\psi_{i_1|1},\psi_{i_2|2})=0
\ee 
for all values of $i_1$ and $i_2$. Simultaneously, if $\w_Q(\psi_{i_1|1},\psi_{i_2|2}) > 0$, 
%then we have
%\be \label{we/wq-psi}\frac{\w_E(\psi_{i_1|1},\psi_{i_2|2})}{\w_Q(\psi_{i_1|1},\psi_{i_2|2})}=0 ,\ee 
%I think the following holds\be \label{maxwq}
%\min_{i_1,i_2,\cdots,i_n} \w_Q(\psi_{i_1|1},\psi_{i_2|2}, \cdots, \psi_{i_n|n}) \leqslant \w_Q(\rho_1,\cdots,\rho_n) \leqslant \max_{i_1,i_2,\cdots,i_n} \w_Q(\psi_{i_1|1},\psi_{i_2|2}, \cdots, \psi_{i_n|n}) .
%\ee Imposing \eqref{maxwq} into \eqref{we/wq-psirho}, we can infer the following statement about $\psi-$epistemic models,where \be \max_{i_1,i_2,\cdots,i_n} \w_Q(\psi_{i_1|1}, \cdots, \psi_{i_n|n}) = \w_Q(\psi_{i^*_1|1}, \cdots, \psi_{i^*_n|n}) .\ee 
then we can conclude that no $\psi-$epistemic theory is capable of explaining the quantum overlap.
However, if one takes a look at the \textit{method} employed in Section \ref{sec3_1} to arrive at the non-epistemic and fully non-epistemic results, it becomes clear that method cannot be used here for this purpose. There, we have used various means and showed that $A_Q^{[n]}(\psi_{i_1|1},\cdots,\psi_{i_n|n})=1$ (where $n\geq3$), or equivalently, $\omega_Q^{[n]}(\psi_{i_1|1},\cdots,\psi_{i_n|n})=0$ to arrive at the fact that the set of mixed preparations is non-epistemic or fully non-epistemic. Here, we explicitly require the quantum overlap $\omega_Q(\psi_{i_1|1},\psi_{i_2|2})>0$. Theorem \ref{rhok} on the other hand can be employed to refute $\psi-$epistemic models.

As discussed in Theorem \ref{coro_3}, consider a Hilbert space of $d=p^n$. %and \eqref{cor_3_ineq} holds. 
%The preparations being considered here are all maximally mixed, which essentially implies that \eqref{cor_3_ineq} is equivalent to,
%\begin{equation}    \frac{1}{d}\sum_{k=1}^d\omega_E(\rho_0,\rho_k)\leq \frac{1}{d}.\end{equation}
Eq.~\eqref{cor_3_ineq} implies that there is at least one $k$ such that,
\begin{equation} \label{we_less_than_1/d}
    \omega_E(\rho_0,\rho_k)\leq\frac{1}{d}.
\end{equation}
Recall that $\rho_0$ is an equal mixture of $\{\ket{\psi_i}\}_{i=1}^d$ and $\rho_k$ is an equal mixture of $\{\ket{\phi_{j|k}}\}_{j=1}^d$, where both sets are orthonormal basis sets. Consequently, the orthogonality relations between $\{\ket{\psi_i}\}_{i=1}^d$ and $\{\ket{\phi_{j|k}}\}_{j=1}^d$ allow us to write $\omega_E(\rho_0,\rho_k)$ in \eqref{we_less_than_1/d} as,
\begin{equation} \label{we=sum_of_we}
    \omega_E(\rho_0,\rho_k)=\frac{1}{d}\sum_{i,j=1}^d\omega_E(\psi_i,\phi_{j|k}).
\end{equation}
Note that the above equation is a generalization of \eqref{th_6_eq_2} from the proof of Theorem \ref{theorem_C2}. From \eqref{we_less_than_1/d} we know that \eqref{we=sum_of_we} has to be less than $1/d$. Since there are $d^2$ number of terms on the right-hand-side of \eqref{we=sum_of_we}, $\omega_E(\psi_i,\phi_{j|k})\leq1/d^2$ for at least one value of $i$ and $j$. Furthermore, $\{\ket{\psi_i}\}_{i=1}^d$ and $\{\ket{\phi_{j|k}}\}_{j=1}^d$ are mutually unbiased, and hence from \eqref{Lqn2} and \eqref{distin_pure}, we see that $\omega_Q(\psi_i,\phi_{j|k})=1-\sqrt{1-\frac{1}{d}}\geq 1/2d$. Notice that $\omega_E(\psi_i,\phi_{j|k})$ is one order smaller compared to $\omega_Q(\psi_i,\phi_{j|k})$ and in the limit $d\rightarrow \infty$, $\omega_Q(\psi_i,\phi_{j|k})\gg\omega_E(\psi_i,\phi_{j|k})$, implying that $\psi-$epistemic models fail to account for indistinguishability of two mixed preparations.

\subsection{Open questions}

Among many open questions, let us point out a few. Recall that we have demonstrated the existence of fully non-epistemic cases in $d=3$ for $4$ mixed preparations. 
%That demands the question, can we demonstrate it with $3$ preparations? 
So far, we have been unable to identify instances of fully non-epistemic cases in $d=3$ for three preparations, and hence it remains an open question. Coming to fully non-epistemic cases for two mixed preparations, we have shown their existence using Theorem \ref{coro_3}, in the asymptotic limit where $d \rightarrow \infty$. Whether or not such phenomena occur in finite-dimensional Hilbert spaces or with a finite number of preparations is another interesting question. Let us shed some insight into this. Consider a mixed state $\varrho$ with two different preparations $\rho_1=\sum_i p_i\ket{\psi_i}\!\bra{\psi_i}$ and $\rho_2=\sum_j q_j \ket{\phi_j}\!\bra{\phi_j}$.
%Furthermore, let its spectral decomposition be $\tilde{\rho}=\sum_k\eta_k\ket{\eta_k}\!\bra{\eta_k}$, where $\eta_k>0$ for at least one $k$. Then, the trace of its square is,    \begin{equation}        \tr(\tilde{\rho}^2)=\sum_k\eta_k^2>0.    \end{equation}   The trace of its square can also be written in terms of $\rho_1$ and $\rho_2$ since $\tilde{\rho}^2=\rho_1\rho_2$. 
In this case, we have
\begin{equation}
 0 < \tr(\varrho^2) = \tr(\rho_1\rho_2)=\sum_{i,j}p_iq_j |\bra{\psi_i}\phi_j\ra|^2 .
\end{equation}
The above relation implies that there exists at least one such pair of states, denoted as $\ket{\psi_i}$ and $\ket{\phi_j}$, which are non-orthogonal. On the other hand, for the set 
$\{\rho_1,\rho_2\}$ to be fully non-epistemic $\omega_E(\rho_1,\rho_2)$ has to be $0$, which, according to Theorem \ref{t1} necessitates $\omega_E(\psi_i,\phi_j)=0$ for all pairs of $i$ and $j$. 
However, as per \cite{Lewis}, an epistemic model exists for any pair of states, $\ket{\psi_i}$ and $\ket{\phi_j}$, such that $|\bra{\psi_i}\phi_j\ra|^2>(d-1)/d$, where $d$ represents the dimension of the Hilbert space. Thus, $\{\rho_1,\rho_2\}$ cannot be fully non-epistemic whenever at least one pair $\ket{\psi_i}$ and $\ket{\phi_j}$ satisfying this condition exists. Nonetheless, exploring cases where this condition is violated would warrant further investigation. 

 One can also look into cases where the prior probabilities with which the preparations appear in distinguishability and anti-distinguishability are not uniform, but arbitrary. Shin \textit{et al.} reported in \cite{Shin2021QuantumContextual} the dependence of contextual advantage in state discrimination of mixed states on prior probabilities, it would be interesting to see if similar phenomena or other novel features can be observed in our case as well.  Another direction for future investigation involves determining whether a given set of quantum mixed states can be epistemic or not by considering all their possible realizations. Identifying the entire sets of mixed preparations that are non-epistemic and fully non-epistemic could be an interesting research endeavour. Given the implications of previous no-go results on epistemic models in various information-theoretic applications \cite{app1,app2,app3}, exploring the findings presented in this study for quantum advantage in information processing holds considerable interest.

%Apart from this, we would also like to look at the common epistemic overlap of pure states and set nontrivial bounds on them, which would demand the construction of theorems similar to those presented in this article, yet particularly focused on relating common epistemic overlap of \textit{pure states to pure states} instead of \textit{mixed preparations to pure states}. 
%\nocite{*}

\subsection*{Acknowledgment}
DS acknowledges the financial support from STARS (Grant no. STARS/STARS-2/2023-0809), Govt. of India.

%%%%%%%%%%%%%%%%%%%%%%%%%%%%%%%%%%%%%%%%%%%%%%%%%%%%%%%%%%%%%%%%%%%%%%%

\appendix

\section{Proof of Theorem \ref{t1}}\label{app:pot1}

The proof of Theorem \ref{t1} depends on an inequality involving non-negative numbers presented below.

\begin{lemma}\label{lemma1}
Consider a set of $nr$ many non-negative numbers, $a_{i_k|k}$ that are further divided into $n$ subsets, each having $r$ number of elements in them, denoted as $\left\{\{a_{i_k|k}\}_{i_k=1}^{r}\right\}_{k=1}^{n}$. Naturally, $r,n \in \mathbbm{N}$. Then the elements of these sets satisfy the following inequality,
\bea \label{eq:lemma}    
&& \min\left(\sum_{i_1=1}^{r}a_{i_1|1},\sum_{i_2=1}^{r}a_{i_2|2},\cdots,\sum_{i_n=1}^{r}a_{i_n|n}\right) \nonumber \\
&\leq & \sum_{i_1,i_2,\cdots,i_n=1}^{r}\min\left(a_{i_1|1},a_{i_2|2},\cdots,a_{i_n|n}\right).
\eea 
Moreover, if the left-hand-side of \eqref{eq:lemma} is zero, then the inequality becomes equality, that is, each term on the right-hand-side is zero.
\end{lemma}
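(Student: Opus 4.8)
The plan is to prove the inequality \eqref{eq:lemma} by induction on $n$, the number of subsets. For the base case $n=1$ the statement is trivially an equality. For the inductive step, the key idea is to separate the innermost minimum from the outer sum using a simple two-term identity: for non-negative $x$ and a sum $\sum_j y_j$ of non-negative terms, one has $\min\bigl(x,\sum_j y_j\bigr)\le \sum_j \min(x,y_j)$, which follows because if $x\le\sum_j y_j$ then distributing $x$ as $x=\sum_j x_j$ with $0\le x_j\le y_j$ gives $\min(x,\sum y_j)=x=\sum_j x_j\le\sum_j\min(x,y_j)$, and if $x\ge\sum_j y_j$ the right side is at least $\sum_j y_j=\min(x,\sum y_j)$.

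First I would apply this elementary inequality with $x=\sum_{i_1=1}^r a_{i_1|1}$ playing the role of the singled-out term is not quite the right move; instead the cleaner route is to peel off \emph{one} subset at a time. Concretely, write the left-hand side as $\min\bigl(\sum_{i_1}a_{i_1|1},\, M\bigr)$ where $M=\min(\sum_{i_2}a_{i_2|2},\dots,\sum_{i_n}a_{i_n|n})$. A first application of the two-term fact (treating $\sum_{i_1}a_{i_1|1}$ as the ``sum'' side) gives $\min(\sum_{i_1}a_{i_1|1},M)\le\sum_{i_1=1}^r\min(a_{i_1|1},M)$. Then for each fixed $i_1$ I would like to push $a_{i_1|1}$ inside $M$; this requires the companion fact that $\min(a,\min_k b_k)=\min_k\min(a,b_k)$ together with monotonicity, reducing $\min(a_{i_1|1},M)$ to a min over the remaining $n-1$ subsets with $a_{i_1|1}$ adjoined. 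Applying the inductive hypothesis to that $(n-1)$-fold expression and summing over $i_1$ yields the claimed bound. I would carry out these steps in the order: (i) establish the two elementary facts about $\min$; (ii) set up the induction; (iii) peel off the first subset; (iv) invoke the hypothesis; (v) reassemble.

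For the ``moreover'' clause, suppose the left-hand side of \eqref{eq:lemma} is zero. Then $\min_k\bigl(\sum_{i_k}a_{i_k|k}\bigr)=0$, so there is at least one index $k^\star$ with $\sum_{i_{k^\star}}a_{i_{k^\star}|k^\star}=0$; since all $a_{i_{k^\star}|k^\star}\ge0$, every $a_{i_{k^\star}|k^\star}=0$. Consequently each term $\min(a_{i_1|1},\dots,a_{i_n|n})$ on the right-hand side contains the factor-entry $a_{i_{k^\star}|k^\star}=0$ in its argument list, hence equals $0$, so the right-hand side is $0$ as well and the inequality is an equality. This part is essentially immediate once the main inequality is in place.

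The main obstacle, such as it is, will be bookkeeping rather than mathematical depth: keeping the multi-indices straight when peeling off a subset and reindexing the remaining $n-1$ subsets so that the inductive hypothesis applies verbatim, and being careful that the elementary $\min$ manipulations are stated in the generality actually needed (a min of a variable against a \emph{sum}, versus against another \emph{min}). I expect no genuine difficulty; the two-term lemma about $\min(x,\sum y_j)$ is the crux and everything else is assembling it.
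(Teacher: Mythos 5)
Your proposal is correct, but it follows a genuinely different route from the paper. The paper argues directly, without induction: it assumes (WLOG) that the minimum on the left is $\sum_{i_1}a_{i_1|1}$ and then shows, for each fixed $i_1$, that $\sum_{i_2,\dots,i_n}\min\left(a_{i_1|1},a_{i_2|2},\dots,a_{i_n|n}\right)\geq a_{i_1|1}$ via a case analysis backed by an explicit counting argument (if no term of the inner sum equals $a_{i_1|1}$, one counts how often each $a_{i|k}$ can occur, finds only $r^{n-1}-1$ terms can avoid completing a full sum $\sum_{i_k}a_{i_k|k}$, and derives a contradiction). Your induction on $n$ replaces that bookkeeping with the two-term fact $\min\bigl(x,\sum_j y_j\bigr)\leq\sum_j\min(x,y_j)$: one application peels off the first subset, and for each fixed $i_1$ a second application (coordinatewise, replacing each set $\{a_{i_k|k}\}_{i_k}$ by $\{\min(a_{i_1|1},a_{i_k|k})\}_{i_k}$ for $k\geq 2$) puts the expression in a form where the hypothesis for $n-1$ sets of $r$ numbers applies verbatim, since $\min\bigl(\min(a_{i_1|1},a_{i_2|2}),\dots,\min(a_{i_1|1},a_{i_n|n})\bigr)=\min(a_{i_1|1},\dots,a_{i_n|n})$. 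When you write this out, make that second application explicit — the identity $\min(a,\min_k b_k)=\min_k\min(a,b_k)$ alone does not produce sums of minima; you need the two-term lemma again before invoking the hypothesis. What each approach buys: yours is shorter, more modular, and the reduction to a single elementary inequality makes correctness easy to audit; the paper's is self-contained at one level (no recursion) and yields the slightly more explicit per-index bound $\sum_{i_2,\dots,i_n}\min(\cdot)\geq a_{i_1|1}$, though your per-index bound $\min(a_{i_1|1},M)\leq\sum_{i_2,\dots,i_n}\min(\cdot)$ is comparable. Your treatment of the equality case (some subset sums to zero, hence all its elements vanish, hence every term on the right vanishes) is essentially the same as the paper's and is fine.
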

\begin{proof}
Without loss of generality, let 
\be \min\left(\sum_{i_1=1}^{r}a_{i_1|1},\sum_{i_2=1}^{r}a_{i_2|2},\cdots,\sum_{i_n=1}^{r}a_{i_n|n}\right)=\sum_{i_1=1}^{r}a_{i_1|1}.
\ee 
We take $\sum_{i_1=1}^{r}a_{i_1|1}$, and subtract it from the quantity in the right-hand-side of \eqref{eq:lemma},
\begin{equation} \label{lemm1_sum}
\sum_{i_1=1}^{r}\left(\sum_{i_2,i_3,\cdots,i_n=1}^{r}\min\left(a_{i_1|1},a_{i_2|2},\cdots,a_{i_n|n}\right)-a_{i_1|1}\right).
\end{equation}
Since $\sum_{i_1=1}^{r}a_{i_1|1}\leq\sum_{i_k=1}^{r}a_{i_k|k}$ for every $k$, and every $a_{i_k|k}\geq0$, it is easy to see that   $a_{i_1|1}\leq\sum_{i_k=1}^{r}a_{i_k|k}$ for every $i_1$ and $k$. Given this fact, it can be shown that for every $i_1$, 
\begin{equation} \label{central_ineq_lem1}
\sum_{i_2,i_3,\cdots,i_n=1}^{r}\min\left(a_{i_1|1},a_{i_2|2},\cdots,a_{i_n|n}\right)-a_{i_1|1} \geq0.
\end{equation}
This would imply that \eqref{lemm1_sum} is greater than or equal to $0$, and thus, \eqref{eq:lemma} will be proved. Moreover if $\sum_{i_1=1}^{r}a_{i_1|1}=0$, it implies that $a_{i_1|1}=0$ for every $i_1$, and in such a scenario, \eqref{eq:lemma} becomes an equality. 

Why \eqref{central_ineq_lem1} holds is not immediately apparent. Consider the following scenarios. First, consider the case where there exists at least one term in the expression,
\be \label{exlem1}
\sum_{i_2,i_3,\cdots,i_n=1}^{r}\min\left(a_{i_1|1},a_{i_2|2},\cdots,a_{i_n|n}\right),
\ee 
such that $\min\left(a_{i_1|1}, a_{i_2|2},\cdots,a_{i_n|n}\right)=a_{i_1|1}$. In that case, all the terms in \eqref{central_ineq_lem1} are clearly greater than or equal to $0$. Next, we consider the case where none of the terms in \eqref{exlem1} yield $a_{i_1|1}$. In that scenario, \eqref{central_ineq_lem1} can  be said to be greater than or equal to $0$ if \eqref{exlem1} contains $\sum_{i_k=1}^{r}a_{i_k|k}$ for at least one $k$, since $\sum_{i_k=1}^{r}a_{i_k|k}\geq a_{i_1|1}$. The crux of the proof lies in showing that this always holds. 

We begin by assuming the contrary, which is to say that we assume \eqref{exlem1} does not contain $\sum_{i_k=1}^{r}a_{i_k|k}$ for all $k\neq1$. To be precise, this implies that for every $k\neq1$, \eqref{exlem1} must be devoid of at least one $a_{i_k|k}$. Without loss of generality, let the missing elements be $a_{r|k}$. To realize this, we divide the total number of terms in \eqref{exlem1} which is $r^{n-1}$, into smaller batches. Every term in each batch yields $a_{i|k}$ for a fixed value of $i$ and $k$. The first batch contains terms of the form, $\min\left(a_{i_1|1},a_{1|2},a_{i_3|3},\cdots,a_{i_n|n}\right)$, all of which are equal to $a_{1|2}$. Thus, the first batch contains $r^{n-2}$ number of terms. Once we have exhausted this batch, none of the terms left in \eqref{exlem1} can yield $a_{1|2}$. Since we aim to prevent forming a sum like $\sum_{i_k=1}^{r}a_{i_k|k}$, we maximize the number of times $a_{i|k}$ for a fixed $i$ and $k \in \{2,\cdots,n\}$ can appear in \eqref{exlem1}, before moving on to the $i+1$th term. In other words, we move through the batches such that each one of them yields $a_{1|k}$, before moving on to the batches which yield $a_{2|k}$, then $a_{3|k}$ and so on, up to $a_{r-1|k}$. At the end of this process, the total number of terms must be equal to $r^{n-1}$ to show that none of the $a_{r|k}$ have appeared. Note that the order in which we move through the batches is not unique but has been chosen for convenience. For the sake of clarity, we denote the number of times a particular element $a_{i|k}$ appears in \eqref{exlem1} with $\#a_{i|k}$. We have already seen that, $\#a_{1|2}=r^{n-2}$. Next, we see that $\#a_{1|3}=r^{n-2}-r^{n-3}=(r-1)r^{n-3}$. We have subtracted $r^{n-3}$ from $r^{n-2}$ to exclude the terms that are present in the first batch and have already yielded $a_{1|2}$ instead of $a_{1|3}$. Continuing in this fashion, it becomes apparent that the number of times $a_{i|k}$ occurs in \eqref{exlem1} is,
\begin{equation}
    \#a_{i|k}=(r-i)^{k-2}(r-i+1)^{n-k}.
\end{equation}
So, the total number of terms is,
\begin{equation}
    \sum_{i=1}^{r-1}\sum_{k=2}^{n}\#a_{i|k}=r^{n-1}-1.
\end{equation}
There is one term missing. Since we have exhausted every element $a_{i|k}$, for all $k\in\{2,\cdots,n\}$ and $i\in\{1,\cdots,r-1\}$, the missing term could only be $a_{r|k}$ for some $k\neq1$. This falsifies the assumption we began with because the presence of this last term completes the sum $\sum_{i_k=1}^{r}a_{i_k|k}$, and thus \eqref{central_ineq_lem1} is always greater than or equal to $0$.
\end{proof}

%    Let's say we have $4$ sets of non-negative numbers, $\{a_1,a_2\}$, $\{b_1,b_2\}$, $\{c_1,c_2\}$ and $\{d_1,d_2,d_3\}$. Let $\min\left(\sum_ia_i,\sum_jb_j,\sum_kc_k,\sum_ld_l\right)=\sum_ia_i=a_1+a_2$. This implies that $a_1$ and $a_2$ are both less than or equal to $b_1+b_2$, $c_1+c_2$ and $d_1+d_2+d_3$. \eqref{central_ineq_lem1} in this case is, 
    %\begin{equation} \label{specific_a4}
   % \begin{split}
        %&\min(a_i,b_1,c_1,d_1)+\min(a_i,b_1,c_1,d_2)+\min(a_i,b_1,c_1,d_3)+\\
        %&\min(a_i,b_1,c_2,d_1)+\min(a_i,b_1,c_2,d_2)+\min(a_i,b_1,c_2,d_3)+\\
        %&\min(a_i,b_2,c_1,d_1)+\min(a_i,b_2,c_1,d_2)+\min(a_i,b_2,c_1,d_3)+\\
        %&\min(a_i,b_2,c_2,d_1)+\min(a_i,b_2,c_2,d_2)+\min(a_i,b_2,c_2,d_3)-a_i.
       % \end{split}
   % \end{equation}
   % Let the first six terms be equal to $b_1$. Next, to avoid having $b_1+b_2$, let the next three terms be $c_1$. Again, to avoid constructing $c_1+c_2$ let the next two terms be $d_1$ and $d_2$ respectively. This leaves us with the last term, which is $\min(a_1,b_2,c_2,d_3)$. Whatever it's equal to, we will either have $b_1+b_2$ or $c_1+c_2$ or $d_1+d_2+d_3$ and consequently \eqref{specific_a4} will be greater than or equal to $0$.  

Lemma \ref{lemma1} is also applicable to sets of non-negative numbers with different number of elements in each set, $\left\{\{a_{i_k|k}\}_{i_k=1}^{r_k}\right\}_{k=1}^{n}$ which are equal to $r_k$ for $k \in \{1,\cdots,n\}$ instead of the same number of elements, $r$ as given in the lemma statement. This is so because we can always choose $r$ as the cardinal number of the set with the maximum number of elements, and add $r-r_k$ number of zeroes in the $k$th set to compensate for that. This does not make a difference since every term on the right-hand-side of \eqref{eq:lemma} which contains $0$ as one of the arguments of $\min(.)$ will not contribute to the sum.

\begin{widetext}
    \begin{proof}[Proof of Theorem \ref{t1}]
    Due to \eqref{fcm}, the epistemic states corresponding to the mixed preparations are given by  
    \be 
    \mu(\lambda|\rho_k)=\sum_{i_k=1}^{m}p_{i_k|k}\mu(\lambda|\psi_{i_k|k}), \quad \forall \lambda \in \Lambda, k \in \{1,\dots,n\}.
    \ee 
    Plugging these in \eqref{epis_overlap} we have,
    \begin{equation}\label{th1_1}
        \omega_E^{[n]}(\rho_1,\cdots,\rho_n)=\int_{\Lambda}\min\left(\sum_{i_1=1}^{m}p_{i_1|1}\mu(\lambda|\psi_{i_1|1}),\cdots,\sum_{i_n=1}^{m}p_{i_n|n}\mu(\lambda|\psi_{i_n|n})\right)d\lambda.
    \end{equation}
    Next, we apply Lemma \ref{lemma1} to (\ref{th1_1}). Once we have done that, we end up with,
    \begin{equation}\label{th1_2}
        \omega_E^{[n]}(\rho_1,\cdots,\rho_n)\leq\sum_{i_1,i_2,\cdots,i_n=1}^m\int_{\Lambda}\min\left(p_{i_1|1}\mu(\lambda|\psi_{i_1|1}),\cdots,p_{i_n|n}\mu(\lambda|\psi_{i_n|n})\right)d\lambda.
    \end{equation}
    Before proceeding further, consider two sets of non-negative numbers, $\{\alpha_1,\cdots,\alpha_n\}$ and $\{a_1,\cdots,a_n\}$. For two such sets, $\min(\alpha_1a_1,\cdots,\alpha_na_n)\leq\max(\alpha_1,\cdots,\alpha_n)\min(a_1,\cdots,a_n)$. The reason is quite straightforward. Let us consider that $\max(\alpha_1,\cdots,\alpha_n)=\alpha_i$ and $\min(a_1,\cdots,a_n)=a_j$ for some fixed value of $i$ and $j$. Now if $\min(\alpha_1a_1,\cdots,\alpha_na_n)=\alpha_ka_k$, then $\alpha_ka_k \leq\alpha_ja_j$. Since $\max(\alpha_1,\cdots,\alpha_n)=\alpha_i$, $\alpha_j \leq \alpha_i$ which implies that $\alpha_ja_j\leq \alpha_ia_j=\max(\alpha_1,\cdots,\alpha_n)\min(a_1,\cdots,a_n)$. Thus, $\min(\alpha_1a_1,\cdots,\alpha_na_n)\leq\max(\alpha_1,\cdots,\alpha_n)\min(a_1,\cdots,a_n)$ holds. Every term being summed over in \eqref{th1_2} can be cast in this inequality with the convex coefficients being identified with $\{\alpha_i\}_i$ and the epistemic states of the decompositions with $\{a_i\}_i$ which results in,
    \begin{equation}
        \omega_E^{[n]}\left(\rho_1,\cdots,\rho_n\right)\leq\sum_{i_1,i_2,\cdots,i_n=1}^m\max(p_{i_1|1},\cdots,p_{i_n|n})\omega_E^{[n]}(\psi_{i_1|1},\cdots,\psi_{i_n|n}),
    \end{equation}
    where we have simply replaced $\int_{\Lambda}\min(\mu(\lambda|\psi_{i_1|1}),\cdots,\mu(\lambda|\psi_{i_n|n}))d\lambda$ with $\omega_E^{[n]}(\psi_{i_1|1},\cdots,\psi_{i_n|n})$ using the definition of $\omega_E^{[n]}$ given in \eqref{epis_overlap}.
    This is the intermediate bound on $\omega_E^{[n]}(\rho_1,\cdots,\rho_n)$ as shown in \eqref{sub_eq1}. To obtain the upper bound, we simply use \eqref{we<wq} in relation to $\omega_E^{[n]}(\psi_{i_1|1},\cdots,\psi_{i_n|n})$ and replace $\omega_E^{[n]}(\psi_{i_1|1},\cdots,\psi_{i_n|n})$ with $\omega_Q^{[n]}(\psi_{i_1|1},\cdots,\psi_{i_n|n})$. Expressing $\omega_Q^{[n]}(\psi_{i_1|1},\cdots,\psi_{i_n|n})$ in terms of $A_Q^{[n]}(\psi_{i_1|1},\cdots,\psi_{i_n|n})$ according to \eqref{Q_overlap} results in the upper bound \eqref{sub_eq2}.
\end{proof} 
\end{widetext}

%%%%%%%%%%%%%%%%%%%%%%%%%%%%%%%%%%%%%%%%%%%%%%%%%%%%%%%%%%%%%%%%%%%%%%%%

\section{Proof of Theorem \ref{bloch_anti_d_cond}}\label{app:pot2}

Based on the lemma introduced below, we will present the proof of Theorem \ref{bloch_anti_d_cond}.

\begin{figure}[h]
    \centering
    \includegraphics[width=0.38\textwidth]{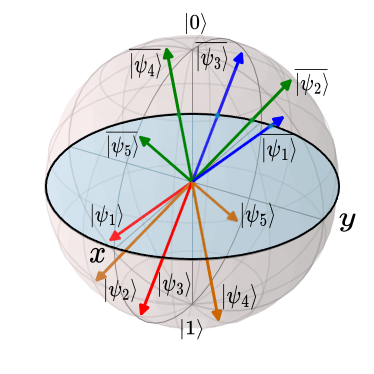}
    \caption{(Lemma \ref{lem_bloch}) Five pure qubit states are situated in the hemisphere $z<0$, and consequently their orthogonal states are situated in $z>0$. The hemispheres are separated by the shaded great circle. Since the $z-$component of every $\ket{\overline{\psi_i}}$ is greater than $0$, there exists no POVM which is capable of perfectly anti-distinguishing $\ket{\psi_1}, \ket{\psi_2}, \ket{\psi_3}, \ket{\psi_4}$ and $\ket{\psi_5}$.}
    \label{fig:bloch_anti_d}
\end{figure}
\begin{lemma} \label{lem_bloch}
   Consider a set of $n$ pure qubit states, $\{\ket{\psi_i}\}_{i=1}^n$. Let their Bloch representations be denoted by $\{\Vec{v'_i}\}_{i=1}^n$. If there exists a unitary operator $U$, whose Bloch representation is the rotation operator $R$ such that, $R\Vec{v'_i}=\Vec{v_i}$, and 
 the same coordinates of the transformed vectors have the same sign, say $(v_z)_i<0$ $\forall i \in \{1,\cdots,n\}$, then $A_Q^{[n]}(\psi_1,\cdots,\psi_n)<1$.
\end{lemma}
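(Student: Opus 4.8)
The plan is to unpack the definition of anti-distinguishability and show that the asserted rotation forces the minimal success probability $\min_{\{\mathcal{M}\}}\sum_x p(x|\psi_x,\mathcal{M})$ to be strictly positive, hence $A_Q^{[n]}<1$. Since unitaries do not change anti-distinguishability — a POVM $\{M_x\}$ anti-distinguishes $\{\ket{\psi_i}\}$ iff $\{U M_x U^\dagger\}$ anti-distinguishes $\{U\ket{\psi_i}\}$ — I may replace the states $\ket{\psi_i}$ by their rotated versions with Bloch vectors $\vec v_i$ satisfying $(v_z)_i<0$ for all $i$. So the problem reduces to: if all $n$ Bloch vectors lie strictly in the open hemisphere $z<0$, no POVM perfectly anti-distinguishes them.

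For the reduction, recall a perfect anti-distinguishing measurement would need a POVM $\{M_x\}_{x=1}^n$ with $\sum_x M_x=\mathbbm{1}$ and $\langle\psi_x|M_x|\psi_x\rangle=0$ for every $x$, i.e. $M_x$ annihilates $\ket{\psi_x}$. For a qubit, $M_x\geq0$ with $M_x\ket{\psi_x}=0$ forces $M_x = c_x\,\ket{\overline{\psi_x}}\!\bra{\overline{\psi_x}}$ with $c_x\geq 0$, where $\ket{\overline{\psi_x}}$ is the state orthogonal to $\ket{\psi_x}$; its Bloch vector is $-\vec v_x$, which lies in $z>0$. Writing each rank-one term in Bloch form, $\ket{\overline{\psi_x}}\!\bra{\overline{\psi_x}} = \tfrac12(\mathbbm{1}-\vec v_x\cdot\vec\sigma)$, the closure relation $\sum_x c_x\,\tfrac12(\mathbbm{1}-\vec v_x\cdot\vec\sigma)=\mathbbm{1}$ splits into the trace part $\sum_x c_x = 2$ and the traceless part $\sum_x c_x\,\vec v_x = \vec 0$. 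Now take the $z$-component: $\sum_x c_x (v_z)_x = 0$. But every $c_x\geq0$, not all zero (since they sum to $2$), and every $(v_z)_x<0$, so $\sum_x c_x (v_z)_x<0$, a contradiction. Hence no perfect anti-distinguishing POVM exists, and because the maximum in \eqref{anti-d} is over a compact set, $A_Q^{[n]}$ is strictly less than $1$.

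I expect the main obstacle to be handling the case where some $M_x$ could in principle be the zero operator or where several states coincide: if $M_x=\mathbbm{O}$ then $c_x=0$, which is harmless for the trace and $z$-component equations, but one must make sure that not all $c_x$ vanish — that is guaranteed by $\sum_x c_x=2$. A second subtlety is justifying that $M_x$ with $M_x\ket{\psi_x}=0$ must be proportional to $\ket{\overline{\psi_x}}\!\bra{\overline{\psi_x}}$: a positive semidefinite $2\times2$ matrix with a kernel vector is either zero or rank one with that exact range, which is elementary but should be stated. Finally, I would remark that strictness of the hemisphere ($(v_z)_i<0$ rather than $\le 0$) is exactly what makes the contradiction work; the boundary case $(v_z)_i=0$ for all $i$ corresponds to the great-circle configurations of Theorem \ref{bloch_anti_d_cond} where perfect anti-distinguishability can occur.
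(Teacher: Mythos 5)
Your proposal is correct and follows essentially the same route as the paper: reduce to a perfect anti-distinguishing POVM whose elements must be nonnegative multiples of the orthogonal projectors $\ket{\overline{\psi_x}}\!\bra{\overline{\psi_x}}$, then derive from the Bloch-form completeness relation the two conditions $\sum_x c_x=2$ and $\sum_x c_x (v_z)_x=0$, which are incompatible when every $(v_z)_x<0$. Your version is in fact slightly tighter where the paper says ``without loss of generality'': you justify that positivity plus $M_x\ket{\psi_x}=0$ forces the rank-one form, and you note the compactness argument needed to pass from ``no perfect anti-distinguisher'' to the strict inequality $A_Q^{[n]}<1$.
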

\begin{proof}
     For a set of pure states to be perfectly anti-distinguishable, \eqref{anti-d} dictates that there must exist a measurement $\mathcal{M} \equiv \{M_k\}_{k=1}^n$ such that $\sum_{x=1}^np(x|\psi_x,\mathcal{M})=\sum_{x=1}^n\tr(\ket{\psi_x}\!\langle\psi_x|M_x)=0$. Our task will be to show that given the conditions mentioned in the lemma statement, such a measurement does not exist.

        The existence of $U$ indicates that the qubit states $\{\ket{\psi_i}\}_{i=1}^n$ are strictly present in one hemisphere (excluding the base of the hemisphere) of the Bloch sphere since they can be rotated and made to lie in the region $z<0$. Thus, without loss of generality, we consider the states to lie in $z<0$ to surpass the unnecessary task of finding a suitable $U$ (see FIG: \ref{fig:bloch_anti_d}). Their orthogonal states, $\{\ket{\overline{\psi_i}}\}_{i=1}^n$, consequently lie in the region $z>0$. To have perfect anti-distinguishability, without loss of generality, we consider a POVM set such that $M_k=\gamma_k\ket{\overline{\psi_k}}\!\langle\overline{\psi_k}|$ where $\gamma_k\geq0$ (positive semi-definiteness) and $\sum_{k=1}^n\gamma_k\ket{\overline{\psi_k}}\!\langle\overline{\psi_k}|=\mathbbm{1}$ (completeness). $M_k=\frac{\gamma_k}{2}(\mathbbm{1}+\Vec{\sigma}.\Vec{u_k})$, where $\Vec{\sigma}=(\sigma_x,\sigma_y,\sigma_z)$ and $\Vec{u_k}=(u_x,u_y,u_z)_k$ is the Bloch representation of $\ket{\overline{\psi_k}}$. To satisfy the positive-semi-definiteness condition and completeness of the POVM set, we must have,
        \begin{enumerate}[label=(\roman*)]
            \item $\sum_{k=1}^n\gamma_k=2$
            \item $\sum_{k=1}^n\gamma_k(u_z)_k=0.$
        \end{enumerate}
        However since the $z-$component of $\vec{u_k}$ $\forall k \in \{1,\cdots,n\}$ is strictly greater than $0$, (see FIG: \ref{fig:bloch_anti_d}) i.e. $(u_z)_k>0$ the above conditions are incompatible with each other. Thus, there exists no such POVM $\mathcal{M}$ such that $A_Q^{[n]}(\psi_1,\cdots,\psi_n)=1$ implying that $\{\ket{\psi_i}\}_{i=1}^n$ is not perfectly anti-distinguishable. 
\end{proof}

\begin{figure}[h]
\centering
\begin{subfigure}{.20\textwidth}
  \centering
  \includegraphics[width=1\linewidth]{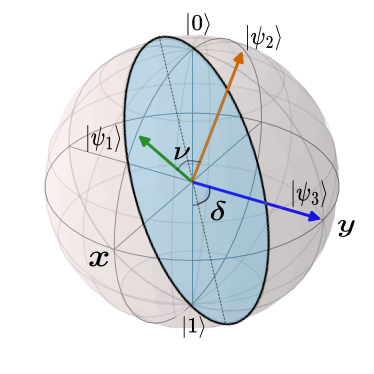}
  \caption{$\ket{\psi_1}$ and $\ket{\psi_2}$ lie on the $Z-X$ plane. $\ket{\psi_3}$ lies on the $X-Y$ plane.}
  \label{fig:sub1}
\end{subfigure}%
%\begin{subfigure}{.20\textwidth}
  %\centering
  %\includegraphics[width=1\linewidth]{bloch1_prepared_labelled.png}
  %\caption{$\nu=\pi/2$, $\delta=3\pi/4$}
  %\label{fig:sub2}
%\end{subfigure}
\begin{subfigure}{.20\textwidth}
  \centering
  \includegraphics[width=1\linewidth]{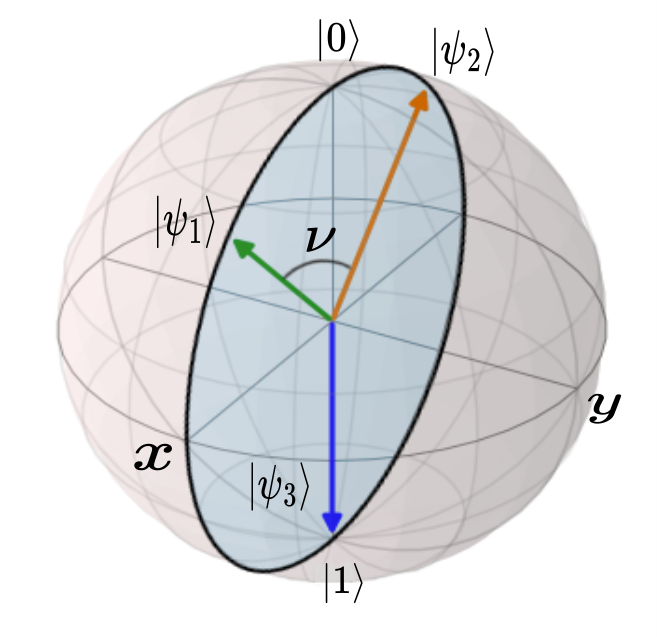}
  \caption{All the states, $\ket{\psi_1}$, $\ket{\psi_2}$ and $\ket{\psi_3}$ lie on the $Z-X$ plane.}
  \label{fig:sub3}
\end{subfigure}%
\caption{(Theorem \ref{bloch_anti_d_cond}) Three pure qubit states $\ket{\psi_1}$, $\ket{\psi_2}$ and $\ket{\psi_3}$ are situated such that in the first diagram, (a) they belong to the same hemisphere. The shaded great circle divides the sphere into two halves. Since they are situated in the same hemisphere, they are not perfectly anti-distinguishable. In (b), they all belong to a single great circle (shaded), which serves as the first step towards the states being perfectly anti-distinguishable.}
\label{fig:bloch}
\end{figure}

\begin{proof}[Proof of Theorem \ref{bloch_anti_d_cond}]
    Consider a great circle of the Bloch sphere that contains the states $|\psi_1\rangle$ and $|\psi_2\rangle$ with an angle $\nu<\pi$ between them. $|\psi_3\rangle$ makes an angle $\delta$ with this great circle. As long as $0<\delta< \pi$, all three states lie in the same hemisphere (excluding the hemisphere's base). One way to visualize this is to imagine an appropriate great circle lying just below $|\psi_3\rangle$ slicing the sphere into two halves (FIG: \ref{fig:sub1}). Lemma \ref{lem_bloch} shows that as long as the three states lie in one hemisphere, they cannot be perfectly anti-distinguished. So, the only recourse is when all the states lie on a great circle (FIG: \ref{fig:sub3}).
    \begin{figure}[h]
\centering
\begin{subfigure}{.20\textwidth}
  \centering
  \includegraphics[width=1\linewidth]{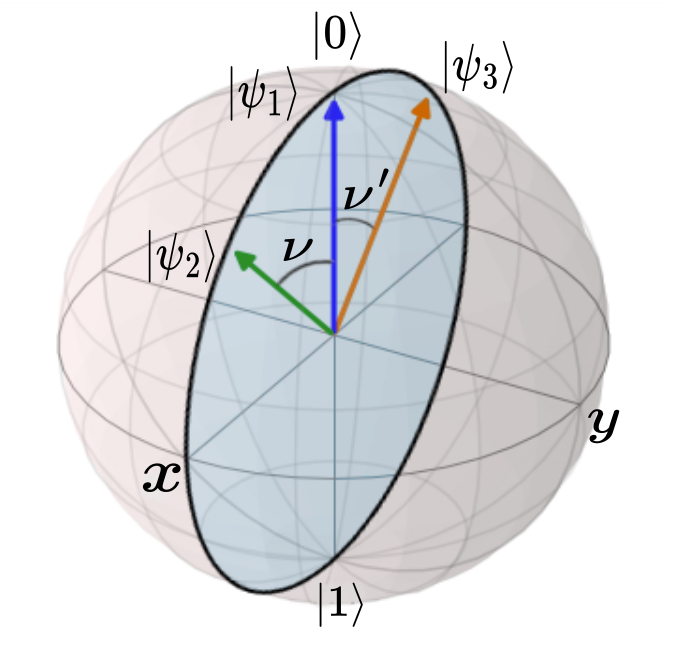}
  \caption{configuration with $\nu+\nu'<\pi$}
  \label{fig:sub_a}
\end{subfigure}%
\begin{subfigure}{.20\textwidth}
  \centering
  \includegraphics[width=1\linewidth]{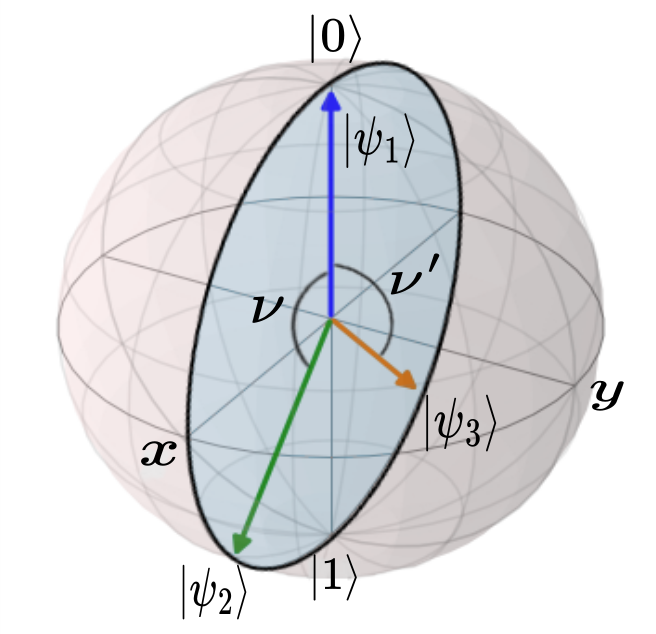}
  \caption{configuration with $\nu +\nu'\geq\pi$}
  \label{fig:sub_b}
\end{subfigure}
\caption{(Theorem \ref{bloch_anti_d_cond}) The three pure qubit states $\ket{\psi_1}$, $\ket{\psi_2}$ and $\ket{\psi_3}$ are situated on the $Z$-$X$ plane (shaded) with $\ket{\psi_1}=\ket{0}$. In the first diagram, (a), their orientation is such that all three states belong to the same semi-circle and hence are not perfectly anti-distinguishable. In the second diagram, (b), the orientation is such that the states are perfectly anti-distinguishable.}
\label{fig:bloch1}
\end{figure}

Without loss of generality, we can consider the great circle containing the states to be the $Z-X$ plane of the Bloch sphere, and $\ket{\psi_1}$ to be $\ket{0}$. 
Let us denote the smaller angle between $|\psi_1\rangle,|\psi_2\rangle$ by $\nu,$ the smaller angle between $|\psi_1\rangle,|\psi_3\rangle$ by $\nu'$, and the smaller angle between $|\psi_2\rangle,|\psi_3\rangle$ by $\nu''$. As long as they all lie on the same semi-circle (by extension, part of the same hemisphere), and hence cannot be perfectly anti-distinguished (FIG: \ref{fig:sub_a}). Thus, to be anti-distinguishable, they should not lie in a semi-circle, which is equivalent to the condition that the sum of every pair of angles between the vectors is greater than or equal to $\pi$, that is, $\nu + \nu' \geq \pi,$ $\nu+\nu'' \geq \pi,$ $\nu'+\nu'' 
\geq \pi.$ By noting that $a=|\langle\psi_1|\psi_2\rangle|=\cos(\nu/2)$, $b=|\langle\psi_1|\psi_3\rangle|=\cos(\nu'/2)$ and $c=|\langle\psi_2|\psi_3\rangle|=\cos(\nu''/2)$, we can arrive at \eqref{bloch_cond} as the necessary condition for anti-distinguishability.

%Thus it is necessary for $\nu+\nu'\geq\pi$, that is for $\ket{\psi_1}$, $\ket{\psi_2}$ and $\ket{\psi_3}$ to not lie on the same semi-circle to have 
Finally, in order to show that \eqref{bloch_cond} is also sufficient condition, it suffices to provide a POVM of the form, $\mathcal{M} \equiv \{M_1=\gamma_1\ket{\overline{\psi_1}}\!\bra{\overline{\psi_1}},M_2=\gamma_2\ket{\overline{\psi_2}}\!\bra{\overline{\psi_2}},M_3=\gamma_3\ket{\overline{\psi_3}}\!\bra{\overline{\psi_3}}\}$, which can perfectly anti-distinguish them (FIG: \ref{fig:sub_b}). Here,$\ket{\overline{\psi_i}}$ represents the orthonormal state to $\ket{\psi_i}.$
By solving the completeness condition to $\mathcal{M}$ we obtain,
\begin{equation}
    \begin{split}
        &\gamma_1=\frac{-2\sin(\nu+\nu')}{\sin\nu+\sin\nu'-\sin(\nu+\nu')},\\
        &\gamma_2=\frac{2\sin\nu'}{\sin\nu+\sin\nu'-\sin(\nu+\nu')},\\
        &\gamma_3=\frac{2\sin\nu}{\sin\nu+\sin\nu'-\sin(\nu+\nu')},
    \end{split}
\end{equation}
which imply the positive semi-definite condition since $\nu+\nu'\geq \pi$. This completes the proof.
%Note that $\nu+\nu'\geq\pi$ is equivalent to \eqref{bloch_cond}. To see why, notice that since the angles between $\ket{\psi_1},\ket{\psi_2}$ and $\ket{\psi_1},\ket{\psi_3}$ are $\nu$ and $\nu'$ respectively, the angle between $\ket{\psi_2}$ and $\ket{\psi_3}$ is $\nu''=2\pi-(\nu+\nu')$. $\nu+\nu'\geq\pi$ implies that the sums $\nu+\nu''$ and $\nu'+\nu''$ are also greater than or equal to $\pi.$ Then, $a=|\langle\psi_1|\psi_2\rangle|=\cos(\nu/2)$, $b=|\langle\psi_1|\psi_3\rangle|=\cos(\nu'/2)$ and $c=|\langle\psi_2|\psi_3\rangle|=\cos(\nu''/2)$. $\nu+\nu' \geq \pi$ implies $\cos^{-1}\left(a\right)+\cos^{-1}\left(b\right)\geq \pi/2$. Repeating this calculation for $\nu+\nu''$ and $\nu'+\nu''$ generates the rest of the inequalities.
\end{proof}

%%%%%%%%%%%%%%%%%%%%%%%%%%%%%%%%%%%%%%%%%%%%
\section{Proof of Theorem \ref{thm_no_fully_epis}}\label{app:pot3}

To begin, we will introduce a compelling aspect of the  RTQ model that will play a pivotal role in proving Theorem \ref{thm_no_fully_epis}.

\begin{lemma} \label{lemm_KS}
         Consider a set of $n$ pure qubit states, $\{\ket{\psi_i}\}_{i=1}^n$. Let their Bloch representations be denoted by $\{\Vec{v_i}\}_{i=1}^n$ and the angle between any two Bloch vectors $\vec{v_i}$ and $\vec{v_j}$ be $\theta(i,j)$. If $\theta_{max}=\max_{\{i,j\}}(\theta(i,j))<\pi$ then in RTQ model $\omega_E^{[n]}(\psi_1,\cdots,\psi_n)>0$.
    \end{lemma}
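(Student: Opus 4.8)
The plan is to unpack everything directly in the Kochen--Specker model. Write $\vec v_i$ for the Bloch vector of $\ket{\psi_i}$, so that $\mu(\vec\lambda|\psi_i)=\tfrac1\pi\,\Theta(\vec v_i\cdot\vec\lambda)\,(\vec v_i\cdot\vec\lambda)$ is strictly positive precisely on the open hemisphere $H_i=\{\vec\lambda\in\Lambda:\vec v_i\cdot\vec\lambda>0\}$. Then the integrand $\min_i\mu(\vec\lambda|\psi_i)$ in \eqref{epis_overlap} is strictly positive on $R:=\bigcap_{i=1}^{n}H_i$ and vanishes off the closure of $R$, so $\omega_E^{[n]}(\psi_1,\dots,\psi_n)=\int_{R}\min_i\mu(\vec\lambda|\psi_i)\,d\vec\lambda$. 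Since $R$ is a finite intersection of open sets it is open, and a nonempty open subset of the sphere has positive measure; moreover a strictly positive continuous integrand over a set of positive measure integrates to something strictly positive (concretely, take any $\vec\lambda_0\in R$, a spherical cap $C\subset R$ about it on which $\vec v_i\cdot\vec\lambda\ge c>0$ for all $i$, and bound $\omega_E^{[n]}\ge (c/\pi)\,\mathrm{area}(C)>0$). Hence the statement reduces to the single assertion that $R\neq\varnothing$.

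To produce a point of $R$ --- equivalently, a unit vector $\vec\lambda_0$ with $\vec v_i\cdot\vec\lambda_0>0$ for every $i$ --- I would argue by convex geometry. The key equivalence is that $R\neq\varnothing$ if and only if $0\notin\mathrm{conv}\{\vec v_1,\dots,\vec v_n\}$: one direction is immediate, since $0=\sum_i c_i\vec v_i$ with $c_i\ge 0$, $\sum_i c_i=1$, forces $0=\sum_i c_i(\vec v_i\cdot\vec\lambda)$ for every $\vec\lambda$, which rules out all the $\vec v_i\cdot\vec\lambda$ being positive; the converse is the separating-hyperplane theorem applied to the compact convex set $\mathrm{conv}\{\vec v_i\}$ and the point $0$, whose strictly separating plane through the origin has a unit normal $\vec\lambda_0$ that works. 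It then remains to deduce from the hypothesis $\theta_{\max}<\pi$, i.e.\ that no two $\vec v_i$ are antipodal, that the origin is not a convex combination of the $\vec v_i$.

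I expect this last deduction to be the main obstacle, because a bound on the pairwise angles does not by itself control the global configuration of the set of vectors. My plan for it is induction on $n$: the base case $n=2$ holds because two non-antipodal unit vectors both have strictly positive inner product with $\vec v_1+\vec v_2\neq 0$; for the inductive step I would take a common axis $\vec\lambda_0$ of $\vec v_1,\dots,\vec v_{n-1}$, which lies in the interior of their common region, and tilt it slightly toward $\vec v_n$, staying strictly on the positive side of the first $n-1$ vectors while crossing to the positive side of $\vec v_n$ --- the tilt being possible precisely because $\vec v_n$ is not antipodal to the common region of the others. An equivalent way to package the needed fact is the Bloch-sphere statement that pairwise non-orthogonal qubit states can be simultaneously rotated into one open hemisphere, whose axis is then the desired $\vec\lambda_0$. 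With $\vec\lambda_0$ in hand, the first paragraph yields $\omega_E^{[n]}(\psi_1,\dots,\psi_n)>0$ and Lemma~\ref{lemm_KS} follows.
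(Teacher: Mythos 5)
Your first two paragraphs are sound: in the Kochen--Specker model the overlap is strictly positive exactly when the open hemispheres $H_i=\{\vec\lambda:\vec v_i\cdot\vec\lambda>0\}$ have a common point, and by strict separation this is equivalent to $0\notin\mathrm{conv}\{\vec v_1,\dots,\vec v_n\}$. The gap is exactly where you anticipated it, and it cannot be closed, because the implication you need --- pairwise angles $<\pi$ implies a common open hemisphere --- is false for $n\geq 3$. Take the trine $\vec v_1=(0,0,1)$, $\vec v_2=(\tfrac{\sqrt3}{2},0,-\tfrac12)$, $\vec v_3=(-\tfrac{\sqrt3}{2},0,-\tfrac12)$: all pairwise angles equal $2\pi/3<\pi$, yet $\vec v_1+\vec v_2+\vec v_3=0$, so the origin lies in the convex hull, the three dot products $\vec v_i\cdot\vec\lambda$ can never be simultaneously positive, and $\min_i\mu(\vec\lambda|\psi_i)=0$ for every $\vec\lambda$, giving $\omega_E^{[3]}=0$ in the Kochen--Specker model (consistently, these three states satisfy the conditions of Theorem~\ref{bloch_anti_d_cond} and are perfectly anti-distinguishable). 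The same example breaks your inductive step concretely: on the closed lune where $\vec v_1\cdot\vec\lambda\geq0$ and $\vec v_2\cdot\vec\lambda\geq0$ one has $\vec v_3\cdot\vec\lambda=-(\vec v_1+\vec v_2)\cdot\vec\lambda\leq0$, so no tilt toward $\vec v_3$ keeps you on the positive side of the first two; likewise the packaging ``pairwise non-orthogonal qubit states fit in one open hemisphere'' is not a usable fact.

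For comparison, the paper's own proof opens with precisely the same unsupported assertion --- that $\theta_{\max}<\pi$ forces all the Bloch vectors into a single hemisphere --- and then reasons about wedges inside that hemisphere; so the lemma is really established (and, in the proof of Theorem~\ref{thm_no_fully_epis}, only invoked) under the stronger hypothesis that the states already lie in one open hemisphere, a condition which there is derived separately from convexity of the mixed state's Bloch vector. Under that hypothesis your argument closes immediately and more cleanly than the paper's wedge construction: the hemisphere's axis $\vec n$ satisfies $\vec v_i\cdot\vec n>0$ for all $i$, hence $\vec n\in R$, and your cap estimate in the first paragraph yields $\omega_E^{[n]}(\psi_1,\dots,\psi_n)>0$. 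So the fix is not to try to deduce the hemisphere condition from $\theta_{\max}<\pi$ (it does not follow), but to replace the hypothesis of the lemma by the hemisphere condition that is actually available where it is used.
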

    \begin{proof}
       The condition, $\theta_{max}=\max_{\{i,j\}}(\theta(i,j))<\pi$ implies that the states lie in one hemisphere of the Bloch sphere (excluding the hemisphere's base). Let's label this hemisphere as '$I$'. Let there be $m<n/2$ pairs of states, $\{\vec{v'_i},\vec{v'_j}\}_{i,j}$, (they are primed to distinguish them from the rest of the states) such that the angle between each pair is equal to $\theta_{max}$ (see FIG: \ref{fig:full_non_epis}). For each such pair $(\vec{v'_i},\vec{v'_j})$ there exists $\Lambda_{\vec{v'_i},\vec{v'_j}}\subset\mathcal{S}^2$ such that $\Theta(\vec{v}'_i.\vec{r}_1)=\Theta(\vec{v}'_i.\vec{r}_2)=\Theta(\vec{v}'_j.\vec{r}_1)=\Theta(\vec{v}'_j.\vec{r}_2)=1$ for all $\vec{r}_1,\vec{r}_2 \in \Lambda_{\vec{v'_i},\vec{v'_j}}$. Each of these $\Lambda_{\vec{v'_i},\vec{v'_j}}$ corresponds to a spherical wedge situated in hemisphere $I$ with a dihedral angle of $\pi-\theta_{max}$, and are rotated versions of each other about the hemisphere $I$'s axis. Consequently, they have a common overlap region, $C$ (see FIG: \ref{fig:full_non_epis}). That region corresponds to the non-null intersection of $\Lambda_{\vec{v'_i},\vec{v'_j}}$, $\Lambda_C=\bigcap_{\vec{v'_i},\vec{v'_j}}\Lambda_{\vec{v'_i},\vec{v'_j}}$, and we see that,
       \begin{equation}
       \begin{split}
           &\Theta(\vec{v}_i'.\vec{r}_1)=1, \textrm{ and}\\
          &\Theta(\vec{v}_i'.\vec{r}_2)=1 \textrm{ }\ \forall \textrm{ } \vec{r}_1, \vec{r}_2 \in \Lambda_C \textrm{ and }\forall \textrm{ }\vec{v}_i'.
           \end{split}
       \end{equation}

We are halfway there. The remaining task is to show that $\Lambda_C$ has a non-null intersection with $\bigcap_{i=1}^n\Lambda_{\vec{v_i}}$ where every $\Lambda_{\vec{v}_i}\subset \mathcal{S}^2$ is a hemisphere such that its axis is $\vec{v}_i$. Notice that there exists a region $R$ in hemisphere $I$ which contains all the states, and every $\vec{v'_i}$ lies on this region's boundary. The region $C$ (or a portion of it) also lies in $R$. Since each one of these states $\vec{v_i}$ has a corresponding $\Lambda_{\vec{v_i}}$ which is a hemisphere with $\vec{v_i}$ as its axis, all these hemispheres overlap in the region $R$, which in turn overlaps with $C$ (see FIG: \ref{fig:full_non_epis}). Hence $\Lambda_C\cap\left(\bigcap_{i=1}^n\Lambda_{\vec{v_i}}\right)\neq \varnothing$ and,
\begin{equation}
\prod_{i=1}^n\Theta(\vec{v}_i.\vec{r}_1)\Theta(\vec{v}_i.\vec{r}_2)=1 \textrm{ } \forall \textrm{ } \vec{r}_1, \vec{r}_2 \in \Lambda_C\cap\left(\bigcap_{i=1}^n\Lambda_{\vec{v_i}}\right),
\end{equation}
which implies that for $\lambda=(c_1=1,c_2=1;\vec{r}_1,\vec{r}_2)$,
\begin{equation}
\min(\mu(\lambda|\vec{v_1}),\cdots,\mu(\lambda|\vec{v_n}))>0 \textrm{ }  \forall \vec{r}_1, \vec{r}_2  \in \Lambda_C\cap\left(\bigcap_{i=1}^n\Lambda_{\vec{v_i}}\right),
\end{equation}
which leads to $\omega_E^{[n]}(\psi_1,\cdots,\psi_n)>0$. \end{proof}

\begin{figure}[h]
        \centering
        \includegraphics[width=0.38\textwidth]{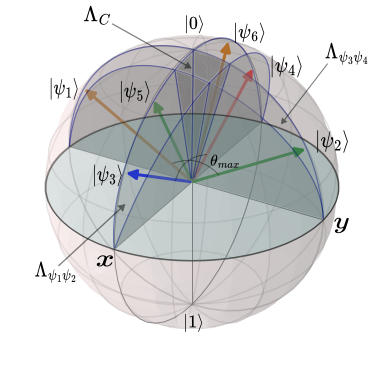}
        \caption{(Lemma \ref{lemm_KS}) Six pure qubit states are located in the hemisphere $z>0$, which is denoted as '$I$' in the proof. Out of them, $\ket{\psi_1}$ and $\ket{\psi_2}$ are situated in the $Z-Y$ plane while $\ket{\psi_3}$ and $\ket{\psi_4}$ are situated in the $Z-X$ plane. $\ket{\psi_1}, \ket{\psi_2}$ and $\ket{\psi_3}, \ket{\psi_4}$ have the largest angle, that is, $\theta(1,2)=\theta(3,4)=\theta_{max}$. $\Lambda_{\psi_1\psi_2}$ denotes the $\mathcal{S}^2$ subspace for which $\Theta(\vec{r}_1.\vec{v}_{\psi_1})\Theta(\vec{r}_2.\vec{v}_{\psi_1})\Theta(\vec{r}_1.\vec{v}_{\psi_2})\Theta(\vec{r}_2.\vec{v}_{\psi_2})=1$ for all $\vec{r}_1,\vec{r}_2 \in \Lambda_{\psi_1\psi_2}$ and likewise for $\Lambda_{\psi_3\psi_4}$. $\Lambda_C$ is the intersection of $\Lambda_{\psi_1\psi_2}$ and $\Lambda_{\psi_3\psi_4}$ and hence, $\prod_{i=1}^4\Theta(\vec{v}_{\psi_{i}}.\vec{r}_1)\Theta(\vec{v}_{\psi_i}.\vec{r}_2)=1$ $\forall \vec{r}_1, \vec{r}_2 \in \Lambda_C$. Note that the region $R$, as mentioned in the proof, has not been shown here to prevent overcrowding in the figure. It can be identified by looking at the tips of $\ket{\psi_1}, \ket{\psi_3}, \ket{\psi_2}$ and $\ket{\psi_4}$ since they lie on the region's circumference. It is evident that $\ket{\psi_5}$ and $\ket{\psi_6}$ lie in this region. If we take $c_1=c_2=1$ then $\Lambda_C$ is a subset of the overlap of $\ket{\psi_5}$ and $\ket{\psi_6}$'s supports and hence $\prod_{i=1}^6\mu(\lambda|\psi_i)>0$ $\forall \vec{r}_1, \vec{r}_2 \in \Lambda_C$ and $c_1=c_2=1$, which implies non-zero epistemic overlap.}
        \label{fig:full_non_epis}
    \end{figure}
This result can be considered as the RTQ model version of Lemma \ref{lem_bloch}. Recall that Lemma \ref{lem_bloch} describes how qubit states $\{\ket{\psi_i}\}_{i=1}^n$ belonging to the same hemisphere of the Bloch sphere are not perfectly anti-distinguishable, meaning that $\omega_Q^{[n]}(\psi_1,\cdots,\psi_n)>0$. Similarly Lemma \ref{lemm_KS} shows that in RTQ model $\omega_E^{[n]}(\psi_1,\cdots,\psi_n)>0$. In a way, Lemma \ref{lem_bloch} follows from Lemma \ref{lemm_KS} since, once we have shown that $\omega_E^{[n]}(\psi_1,\cdots,\psi_n)>0$, from \eqref{we<wq} $\omega_Q^{[n]}(\psi_1,\cdots,\psi_n)$ has to be greater than zero as well.

  \begin{proof}[Proof of Theorem \ref{thm_no_fully_epis}]
    Let us consider that we have a set of $n$ mixed qubit preparations $\{\rho_k\}_{k=1}^n$, of the same mixed state $\varrho$ whose Bloch vector is $\vec{v}$, $|\vec{v}|<1$. Each one is a convex mixture of pure states $\{\ket{\psi_{i_k|k}}\}_{i_k=1}^m$ with convex coefficients $\{p_{i_k|k}\}_{i_k=1}^m$ such that $\rho_k=\sum_{i_k=1}^mp_{i_k|k}\ket{\psi_{i_k|k}}\!\langle\psi_{i|k}|$. The epistemic overlap $\omega_E^{[n]}(\rho_1,\cdots,\rho_n)$ has an upper-bound as described in \eqref{sub_eq1}. Let us divide the Bloch sphere into two hemispheres, labeled $I$ and $II$, and let $\vec{v}$ be in $I$. Due to convexity, at least one pure state from the decomposition of each mixed preparation will lie in hemisphere $I$. Thus, there exists at least one such set $\{\ket{\psi_{i_1|1}},\ket{\psi_{i_2|2}},\cdots,\ket{\psi_{i_n|n}}\}$ which lies in $I$. According to Theorem \ref{t1}, if $\omega_E^{[n]}(\rho_1,\cdots,\rho_n)=0$, \eqref{sub_eq1} becomes an equality, and every $\omega_E^{[n]}(\psi_{i_1|1},\cdots,\psi_{i_n|n})=0$ for all values of $\{i_1,\cdots,i_n\}$. However, since we have at least one set  $\{\ket{\psi_{i_1|1}},\ket{\psi_{i_2|2}},\cdots,\ket{\psi_{i_n|n}}\}$ which lies in hemisphere $I$, according to Lemma \ref{lemm_KS}, there exists an epistemic model, which is the RTQ model where the epistemic overlap of that set $\omega_E^{[n]}(\psi_{i_1|1},\cdots,\psi_{i_n|n})>0$. Thus, $\omega_E^{[n]}(\rho_1,\cdots,\rho_n)>0$ as well. 

Now suppose $\vec{v}$ belongs to neither hemisphere of the Bloch sphere, whichever way we may divide the sphere. In that case, it necessarily lies at the origin, meaning we have a set of $n$ maximally mixed qubit preparations. The preparations are of the form, $\rho_k=1/2(\ket{\psi_k}\!\la\psi_k|+\ket{\overline{\psi_k}}\!\la\overline{\psi_k|})$, where $\ket{\psi_k}$ and $\ket{\overline{\psi_k}}$ are orthogonal. The epistemic state corresponding to $\rho_k$ is,
    \begin{equation}
        \mu(\lambda|\rho_k)=\frac{1}{2}\left(\mu(\lambda|\psi_k)+\mu(\lambda|\overline{\psi_k})\right).
    \end{equation}
    In RTQ model, the supports of $\mu(\lambda|\psi_k)$ and $\mu(\lambda|\overline{\psi_k})$, where $\lambda=(c_1=1,c_2=1;\vec{r}_1,\vec{r}_2)$ namely $\Lambda_{\psi_k}\subset \mathcal{S}^2$ and $\Lambda_{\overline{\psi_k}}\subset \mathcal{S}^2$ corresponds to two hemispheres with no overlap. In other words, the support of $\mu(\lambda|\rho_k)$ for $c_1=c_2=1$ is the entire $\mathcal{S}^2-$space. Consequently there exists some $\vec{r}_1, \vec{r}_2 \in \mathcal{S}^2$ such that $\prod_{k=1}^n\mu(\lambda|\rho_k)>0$, implying that $\omega_E^{[n]}(\rho_1,\cdots,\rho_n)>0$.
    
    Thus, no matter what mixed qubit state we have, whether it be maximally mixed or not, in the RTQ model, its decompositions have a non-zero epistemic overlap, implying that fully non-epistemic case is not possible.
    \end{proof}

\bibliography{ref} 

@article{Barrett,
  title = {No $\ensuremath{\psi}$-Epistemic Model Can Fully Explain the Indistinguishability of Quantum States},
  author = {Barrett, Jonathan and Cavalcanti, Eric G. and Lal, Raymond and Maroney, Owen J. E.},
  journal = {Phys. Rev. Lett.},
  volume = {112},
  issue = {25},
  pages = {250403},
  numpages = {6},
  year = {2014},
  month = {Jun},
  publisher = {American Physical Society},
  doi = {10.1103/PhysRevLett.112.250403},
  url = {https://link.aps.org/doi/10.1103/PhysRevLett.112.250403}
}

@article{Leifer,
  title = {$\ensuremath{\psi}$-Epistemic Models are Exponentially Bad at Explaining the Distinguishability of Quantum States},
  author = {Leifer, M. S.},
  journal = {Phys. Rev. Lett.},
  volume = {112},
  issue = {16},
  pages = {160404},
  numpages = {4},
  year = {2014},
  month = {Apr},
  publisher = {American Physical Society},
  doi = {10.1103/PhysRevLett.112.160404},
  url = {https://link.aps.org/doi/10.1103/PhysRevLett.112.160404}
}

@article{cs,
  doi = {10.22331/q-2020-10-21-345},
  url = {https://doi.org/10.22331/q-2020-10-21-345},
  title = {Quantum prescriptions are more ontologically distinct than they are operationally distinguishable},
  author = {Chaturvedi, Anubhav and Saha, Debashis},
  journal = {{Quantum}},
  issn = {2521-327X},
  publisher = {{Verein zur F{\"{o}}rderung des Open Access Publizierens in den Quantenwissenschaften}},
  volume = {4},
  pages = {345},
  month = oct,
  year = {2020}
}

@article{Branciard,
  title = {How $\ensuremath{\psi}$-Epistemic Models Fail at Explaining the Indistinguishability of Quantum States},
  author = {Branciard, Cyril},
  journal = {Phys. Rev. Lett.},
  volume = {113},
  issue = {2},
  pages = {020409},
  numpages = {5},
  year = {2014},
  month = {Jul},
  publisher = {American Physical Society},
  doi = {10.1103/PhysRevLett.113.020409},
  url = {https://link.aps.org/doi/10.1103/PhysRevLett.113.020409}
}

@article{caves,
  title = {Conditions for compatibility of quantum-state assignments},
  author = {Caves, Carlton M. and Fuchs, Christopher A. and Schack, R\"udiger},
  journal = {Phys. Rev. A},
  volume = {66},
  issue = {6},
  pages = {062111},
  numpages = {11},
  year = {2002},
  month = {Dec},
  publisher = {American Physical Society},
  doi = {10.1103/PhysRevA.66.062111},
  url = {https://link.aps.org/doi/10.1103/PhysRevA.66.062111}
}

@article{johnston2023tight,
  doi = {10.22331/q-2025-02-04-1622},
  url = {https://doi.org/10.22331/q-2025-02-04-1622},
  title = {Tight bounds for antidistinguishability and circulant sets of pure quantum states},
  author = {Johnston, Nathaniel and Russo, Vincent and Sikora, Jamie},
  journal = {{Quantum}},
  issn = {2521-327X},
  publisher = {{Verein zur F{\"{o}}rderung des Open Access Publizierens in den Quantenwissenschaften}},
  volume = {9},
  pages = {1622},
  month = feb,
  year = {2025}
}

@article{PuseyPRA2018,
  title = {Robust preparation noncontextuality inequalities in the simplest scenario},
  author = {Pusey, Matthew F.},
  journal = {Phys. Rev. A},
  volume = {98},
  issue = {2},
  pages = {022112},
  numpages = {8},
  year = {2018},
  month = {Aug},
  publisher = {American Physical Society},
  doi = {10.1103/PhysRevA.98.022112},
  url = {https://link.aps.org/doi/10.1103/PhysRevA.98.022112}
}

@article{PBR,
  title={On the reality of the quantum state},
  author={Pusey, Matthew F and Barrett, Jonathan and Rudolph, Terry},
  journal={Nature Physics},
  volume={8},
  number={6},
  pages={475--478},
  year={2012},
  publisher={Nature Publishing Group UK London}
}

@article{Lewis,
  title = {Distinct Quantum States Can Be Compatible with a Single State of Reality},
  author = {Lewis, Peter G. and Jennings, David and Barrett, Jonathan and Rudolph, Terry},
  journal = {Phys. Rev. Lett.},
  volume = {109},
  issue = {15},
  pages = {150404},
  numpages = {5},
  year = {2012},
  month = {Oct},
  publisher = {American Physical Society},
  doi = {10.1103/PhysRevLett.109.150404},
  url = {https://link.aps.org/doi/10.1103/PhysRevLett.109.150404}
}

@article{Aaronson,
  title = {$\ensuremath{\psi}$-epistemic theories: The role of symmetry},
  author = {Aaronson, Scott and Bouland, Adam and Chua, Lynn and Lowther, George},
  journal = {Phys. Rev. A},
  volume = {88},
  issue = {3},
  pages = {032111},
  numpages = {12},
  year = {2013},
  month = {Sep},
  publisher = {American Physical Society},
  doi = {10.1103/PhysRevA.88.032111},
  url = {https://link.aps.org/doi/10.1103/PhysRevA.88.032111}
}

@article{harrigan2010einstein,
  title={Einstein, incompleteness, and the epistemic view of quantum states},
  author={Harrigan, Nicholas and Spekkens, Robert W},
  journal={Foundations of Physics},
  volume={40},
  pages={125--157},
  year={2010},
  publisher={Springer}
}

@article{Spekkens-toy,
  title = {Evidence for the epistemic view of quantum states: A toy theory},
  author = {Spekkens, Robert W.},
  journal = {Phys. Rev. A},
  volume = {75},
  issue = {3},
  pages = {032110},
  numpages = {30},
  year = {2007},
  month = {Mar},
  publisher = {American Physical Society},
  doi = {10.1103/PhysRevA.75.032110},
  url = {https://link.aps.org/doi/10.1103/PhysRevA.75.032110}
}

@article{Spekkens2005,
  title = {Contextuality for preparations, transformations, and unsharp measurements},
  author = {Spekkens, R. W.},
  journal = {Phys. Rev. A},
  volume = {71},
  issue = {5},
  pages = {052108},
  numpages = {17},
  year = {2005},
  month = {May},
  publisher = {American Physical Society},
  doi = {10.1103/PhysRevA.71.052108},
  url = {https://link.aps.org/doi/10.1103/PhysRevA.71.052108}
}

@article{Spekkens2009,
  title = {Preparation Contextuality Powers Parity-Oblivious Multiplexing},
  author = {Spekkens, Robert W. and Buzacott, D. H. and Keehn, A. J. and Toner, Ben and Pryde, G. J.},
  journal = {Phys. Rev. Lett.},
  volume = {102},
  issue = {1},
  pages = {010401},
  numpages = {4},
  year = {2009},
  month = {Jan},
  publisher = {American Physical Society},
  doi = {10.1103/PhysRevLett.102.010401},
  url = {https://link.aps.org/doi/10.1103/PhysRevLett.102.010401}
}

@article{Leifer-review,
  title={Is the Quantum State Real? An Extended Review of $\psi$-ontology Theorems},
  author={Leifer, Matthew S},
  journal={Quanta},
  volume={3},
  number={1},
  pages={67--155},
  year={2014},
  doi = {10.12743/quanta.v3i1.22}
}

@article{kochen1967problem,
 ISSN = {00959057, 19435274},
 URL = {http://www.jstor.org/stable/24902153},
 author = {Simon Kochen and E. P. Specker},
 journal = {Journal of Mathematics and Mechanics},
 number = {1},
 pages = {59--87},
 publisher = {Indiana University Mathematics Department},
 title = {The Problem of Hidden Variables in Quantum Mechanics},
 volume = {17},
 year = {1967},
 doi = {10.2307/24902153}
}

@article{RTQ,
  title = {Classical Cost of Transmitting a Qubit},
  author = {Renner, Martin J. and Tavakoli, Armin and Quintino, Marco T\'ulio},
  journal = {Phys. Rev. Lett.},
  volume = {130},
  issue = {12},
  pages = {120801},
  numpages = {7},
  year = {2023},
  month = {Mar},
  publisher = {American Physical Society},
  doi = {10.1103/PhysRevLett.130.120801},
  url = {https://link.aps.org/doi/10.1103/PhysRevLett.130.120801}
}

@article{hance2022wave,
  title={The wave function as a true ensemble},
  author={Hance, Jonte R and Hossenfelder, Sabine},
  journal={Proceedings of the Royal Society A},
  volume={478},
  number={2262},
  pages={20210705},
  year={2022},
  publisher={The Royal Society},
url = {https://royalsocietypublishing.org/doi/10.1098/rspa.2021.0705}
}

@article{maroney2012statistical,
  title={How statistical are quantum states?},
  author={Maroney, Owen JE},
  journal={arXiv preprint arXiv:1207.6906},
  year={2012}
}

@article{Leifer-Maroney,
  title = {Maximally Epistemic Interpretations of the Quantum State and Contextuality},
  author = {Leifer, M. S. and Maroney, O. J. E.},
  journal = {Phys. Rev. Lett.},
  volume = {110},
  issue = {12},
  pages = {120401},
  numpages = {5},
  year = {2013},
  month = {Mar},
  publisher = {American Physical Society},
  doi = {10.1103/PhysRevLett.110.120401},
  url = {https://link.aps.org/doi/10.1103/PhysRevLett.110.120401}
}

@article{Renner,
  title = {Is a System's Wave Function in One-to-One Correspondence with Its Elements of Reality?},
  author = {Colbeck, Roger and Renner, Renato},
  journal = {Phys. Rev. Lett.},
  volume = {108},
  issue = {15},
  pages = {150402},
  numpages = {4},
  year = {2012},
  month = {Apr},
  publisher = {American Physical Society},
  doi = {10.1103/PhysRevLett.108.150402},
  url = {https://link.aps.org/doi/10.1103/PhysRevLett.108.150402}
}

@article{app1,
  title = {Communication Tasks with Infinite Quantum-Classical Separation},
  author = {Perry, Christopher and Jain, Rahul and Oppenheim, Jonathan},
  journal = {Phys. Rev. Lett.},
  volume = {115},
  issue = {3},
  pages = {030504},
  numpages = {5},
  year = {2015},
  month = {Jul},
  publisher = {American Physical Society},
  doi = {10.1103/PhysRevLett.115.030504},
  url = {https://link.aps.org/doi/10.1103/PhysRevLett.115.030504}
}

@article{app3,
  title = {Epistemic View of Quantum States and Communication Complexity of Quantum Channels},
  author = {Montina, Alberto},
  journal = {Phys. Rev. Lett.},
  volume = {109},
  issue = {11},
  pages = {110501},
  numpages = {4},
  year = {2012},
  month = {Sep},
  publisher = {American Physical Society},
  doi = {10.1103/PhysRevLett.109.110501},
  url = {https://link.aps.org/doi/10.1103/PhysRevLett.109.110501}
}

@article{app2,
  title = {Simple communication complexity separation from quantum state antidistinguishability},
  author = {Havl\'{\i}\ifmmode \check{c}\else \v{c}\fi{}ek, Vojtech and Barrett, Jonathan},
  journal = {Phys. Rev. Res.},
  volume = {2},
  issue = {1},
  pages = {013326},
  numpages = {8},
  year = {2020},
  month = {Mar},
  publisher = {American Physical Society},
  doi = {10.1103/PhysRevResearch.2.013326},
  url = {https://link.aps.org/doi/10.1103/PhysRevResearch.2.013326}
}

@article{sahaPRA,
  title = {Preparation contextuality as an essential feature underlying quantum communication advantage},
  author = {Saha, Debashis and Chaturvedi, Anubhav},
  journal = {Phys. Rev. A},
  volume = {100},
  issue = {2},
  pages = {022108},
  numpages = {13},
  year = {2019},
  month = {Aug},
  publisher = {American Physical Society},
  doi = {10.1103/PhysRevA.100.022108},
  url = {https://link.aps.org/doi/10.1103/PhysRevA.100.022108}
}

@article{SahaNJP,
	doi = {10.1088/1367-2630/ab4149},
	url = {https://doi.org/10.1088/1367-2630/ab4149},
	year = 2019,
	month = {sep},
	publisher = {{IOP} Publishing},
	volume = {21},
	number = {9},
	pages = {093057},
	author = {Debashis Saha and Pawe{\l} Horodecki and Marcin Paw{\l}owski},
	title = {State independent contextuality advances one-way communication},
	journal = {New J. Phys.},
	abstract = {Although ‘quantum contextuality’ is one of the most fundamental non-classical feature, its generic role in information processing and computation is an open quest. In this article, we present a family of distributed computing tasks pertaining to every logical proof of Kochen–Specker (KS) contextuality in two different one-way communication scenarios: (I) communication of bounded dimensional system, (II) communication of unbounded dimensional system while keeping certain information oblivious, namely, oblivious communication (OC). As the later remains largely unexplored, we introduce a general framework for OC tasks and provide a methodology for obtaining an upper bound on the success of OC tasks in classical communication. We show that quantum communication comprised of every KS set of vectors outperforms classical communication and perfectly accomplish the task in both the aforementioned scenarios. We explicitly discuss the communication tasks pertaining to the simplest state independent contextuality sets of dimension three and four. Our results establish an operational significance to single system contextuality and open up the possibility of semi-device independent quantum information processing based on that. Alongside, we identify any advantage in OC tasks as a witness of preparation contextuality.}
}

@article{XuPRA,
  title = {Reformulating noncontextuality inequalities in an operational approach},
  author = {Xu, Zhen-Peng and Saha, Debashis and Su, Hong-Yi and Paw\l{}owski, Marcin and Chen, Jing-Ling},
  journal = {Phys. Rev. A},
  volume = {94},
  issue = {6},
  pages = {062103},
  numpages = {6},
  year = {2016},
  month = {Dec},
  publisher = {American Physical Society},
  doi = {10.1103/PhysRevA.94.062103},
  url = {https://link.aps.org/doi/10.1103/PhysRevA.94.062103}
}

@article{PanPRA,
  title = {Optimal quantum preparation contextuality in an $n$-bit parity-oblivious multiplexing task},
  author = {Ghorai, Shouvik and Pan, A. K.},
  journal = {Phys. Rev. A},
  volume = {98},
  issue = {3},
  pages = {032110},
  numpages = {8},
  year = {2018},
  month = {Sep},
  publisher = {American Physical Society},
  doi = {10.1103/PhysRevA.98.032110},
  url = {https://link.aps.org/doi/10.1103/PhysRevA.98.032110}
}

@article{SikoraNJP,
doi = {10.1088/1367-2630/18/4/045003},
url = {https://dx.doi.org/10.1088/1367-2630/18/4/045003},
year = {2016},
month = {apr},
publisher = {IOP Publishing},
volume = {18},
number = {4},
pages = {045003},
author = {André Chailloux and Iordanis Kerenidis and Srijita Kundu and Jamie Sikora},
title = {Optimal bounds for parity-oblivious random access codes},
journal = {New Journal of Physics},
abstract = {Random access coding is an information task that has been extensively studied and found many applications in quantum information. In this scenario, Alice receives an n-bit string x, and wishes to encode x into a quantum state , such that Bob, when receiving the state , can choose any bit  and recover the input bit xi with high probability. Here we study two variants: parity-oblivious random access codes (RACs), where we impose the cryptographic property that Bob cannot infer any information about the parity of any subset of bits of the input apart from the single bits xi; and even-parity-oblivious RACs, where Bob cannot infer any information about the parity of any even-size subset of bits of the input. In this paper, we provide the optimal bounds for parity-oblivious quantum RACs and show that they are asymptotically better than the optimal classical ones. Our results provide a large non-contextuality inequality violation and resolve the main open problem in a work of Spekkens et al (2009 Phys. Rev. Lett.102 010401). Second, we provide the optimal bounds for even-parity-oblivious RACs by proving their equivalence to a non-local game and by providing tight bounds for the success probability of the non-local game via semidefinite programming. In the case of even-parity-oblivious RACs, the cryptographic property holds also in the device independent model.}
}

@article{interference,
  doi = {10.22331/q-2023-09-25-1119},
  url = {https://doi.org/10.22331/q-2023-09-25-1119},
  title = {Why interference phenomena do not capture the essence of quantum theory},
  author = {Catani, Lorenzo and Leifer, Matthew and Schmid, David and Spekkens, Robert W.},
  journal = {{Quantum}},
  issn = {2521-327X},
  publisher = {{Verein zur F{\"{o}}rderung des Open Access Publizierens in den Quantenwissenschaften}},
  volume = {7},
  pages = {1119},
  month = sep,
  year = {2023}
}

@article{SchmidPRL22,
  title = {Uniqueness of Noncontextual Models for Stabilizer Subtheories},
  author = {Schmid, David and Du, Haoxing and Selby, John H. and Pusey, Matthew F.},
  journal = {Phys. Rev. Lett.},
  volume = {129},
  issue = {12},
  pages = {120403},
  numpages = {6},
  year = {2022},
  month = {Sep},
  publisher = {American Physical Society},
  doi = {10.1103/PhysRevLett.129.120403},
  url = {https://link.aps.org/doi/10.1103/PhysRevLett.129.120403}
}

@article{SchmidPRX,
  title = {Contextual Advantage for State Discrimination},
  author = {Schmid, David and Spekkens, Robert W.},
  journal = {Phys. Rev. X},
  volume = {8},
  issue = {1},
  pages = {011015},
  numpages = {20},
  year = {2018},
  month = {Feb},
  publisher = {American Physical Society},
  doi = {10.1103/PhysRevX.8.011015},
  url = {https://link.aps.org/doi/10.1103/PhysRevX.8.011015}
}

@article{Review,
title = {Specker’s parable of the overprotective seer: A road to contextuality, nonlocality and complementarity},
journal = {Physics Reports},
volume = {506},
number = {1},
pages = {1-39},
year = {2011},
issn = {0370-1573},
doi = {https://doi.org/10.1016/j.physrep.2011.05.001},
url = {https://www.sciencedirect.com/science/article/pii/S0370157311001517},
author = {Yeong-Cherng Liang and Robert W. Spekkens and Howard M. Wiseman},
keywords = {Quantum foundations, Contextuality, Nonlocality, Complementarity, Joint measurability, Bell inequalities},
abstract = {In 1960, the mathematician Ernst Specker described a simple example of nonclassical correlations, the counter-intuitive features of which he dramatized using a parable about a seer, who sets an impossible prediction task to his daughter’s suitors. We revisit this example here, using it as an entrée to three central concepts in quantum foundations: contextuality, Bell-nonlocality, and complementarity. Specifically, we show that Specker’s parable offers a narrative thread that weaves together a large number of results, including the following: the impossibility of measurement-noncontextual and outcome-deterministic ontological models of quantum theory (the 1967 Kochen–Specker theorem), in particular, the recent state-specific pentagram proof of Klyachko; the impossibility of Bell-local models of quantum theory (Bell’s theorem), especially the proofs by Mermin and Hardy and extensions thereof; the impossibility of a preparation-noncontextual ontological model of quantum theory; the existence of triples of positive operator valued measures (POVMs) that can be measured jointly pairwise but not triplewise. Along the way, several novel results are presented: a generalization of a theorem by Fine connecting the existence of a joint distribution over outcomes of counterfactual measurements to the existence of a measurement-noncontextual and outcome-deterministic ontological model; a generalization of Klyachko’s proof of the Kochen–Specker theorem from pentagrams to a family of star polygons; a proof of the Kochen–Specker theorem in the style of Hardy’s proof of Bell’s theorem (i.e., one that makes use of the failure of the transitivity of implication for counterfactual statements); a categorization of contextual and Bell-nonlocal correlations in terms of frustrated networks; a derivation of a new inequality testing preparation noncontextuality; some novel results on the joint measurability of POVMs and the question of whether these can be modeled noncontextually. Finally, we emphasize that Specker’s parable of the overprotective seer provides a novel type of foil to quantum theory, challenging us to explain why the particular sort of contextuality and complementarity embodied therein does not arise in a quantum world.}
}

@article{LostaglioPRL,
  title = {Quantum Fluctuation Theorems, Contextuality, and Work Quasiprobabilities},
  author = {Lostaglio, Matteo},
  journal = {Phys. Rev. Lett.},
  volume = {120},
  issue = {4},
  pages = {040602},
  numpages = {6},
  year = {2018},
  month = {Jan},
  publisher = {American Physical Society},
  doi = {10.1103/PhysRevLett.120.040602},
  url = {https://link.aps.org/doi/10.1103/PhysRevLett.120.040602}
}

@article{Spekkens2008prl,
  title = {Negativity and Contextuality are Equivalent Notions of Nonclassicality},
  author = {Spekkens, Robert W.},
  journal = {Phys. Rev. Lett.},
  volume = {101},
  issue = {2},
  pages = {020401},
  numpages = {4},
  year = {2008},
  month = {Jul},
  publisher = {American Physical Society},
  doi = {10.1103/PhysRevLett.101.020401},
  url = {https://link.aps.org/doi/10.1103/PhysRevLett.101.020401}
}

@article{PuseyPRL,
  title = {Anomalous Weak Values Are Proofs of Contextuality},
  author = {Pusey, Matthew F.},
  journal = {Phys. Rev. Lett.},
  volume = {113},
  issue = {20},
  pages = {200401},
  numpages = {5},
  year = {2014},
  month = {Nov},
  publisher = {American Physical Society},
  doi = {10.1103/PhysRevLett.113.200401},
  url = {https://link.aps.org/doi/10.1103/PhysRevLett.113.200401}
}

@article{mate-prl-2023,
  title = {Invertible Map between Bell Nonlocal and Contextuality Scenarios},
  author = {Wright, Victoria J. and Farkas, M\'at\'e},
  journal = {Phys. Rev. Lett.},
  volume = {131},
  issue = {22},
  pages = {220202},
  numpages = {6},
  year = {2023},
  month = {Nov},
  publisher = {American Physical Society},
  doi = {10.1103/PhysRevLett.131.220202},
  url = {https://link.aps.org/doi/10.1103/PhysRevLett.131.220202}
}

@article{UR,
  title = {What is Nonclassical about Uncertainty Relations?},
  author = {Catani, Lorenzo and Leifer, Matthew and Scala, Giovanni and Schmid, David and Spekkens, Robert W.},
  journal = {Phys. Rev. Lett.},
  volume = {129},
  issue = {24},
  pages = {240401},
  numpages = {6},
  year = {2022},
  month = {Dec},
  publisher = {American Physical Society},
  doi = {10.1103/PhysRevLett.129.240401},
  url = {https://link.aps.org/doi/10.1103/PhysRevLett.129.240401}
}

@article{Randomness_certify,
  title = {Quantum vs Noncontextual Semi-Device-Independent Randomness Certification},
  author = {Roch i Carceller, Carles and Flatt, Kieran and Lee, Hanwool and Bae, Joonwoo and Brask, Jonatan Bohr},
  journal = {Phys. Rev. Lett.},
  volume = {129},
  issue = {5},
  pages = {050501},
  numpages = {6},
  year = {2022},
  month = {Jul},
  publisher = {American Physical Society},
  doi = {10.1103/PhysRevLett.129.050501},
  url = {https://link.aps.org/doi/10.1103/PhysRevLett.129.050501}
}

@article{State_discri_2,
  title = {Contextual Advantages and Certification for Maximum-Confidence Discrimination},
  author = {Flatt, Kieran and Lee, Hanwool and Carceller, Carles Roch I and Brask, Jonatan Bohr and Bae, Joonwoo},
  journal = {PRX Quantum},
  volume = {3},
  issue = {3},
  pages = {030337},
  numpages = {19},
  year = {2022},
  month = {Sep},
  publisher = {American Physical Society},
  doi = {10.1103/PRXQuantum.3.030337},
  url = {https://link.aps.org/doi/10.1103/PRXQuantum.3.030337}
}

@article{Chaturvedi2021characterising,
  doi = {10.22331/q-2021-06-29-484},
  url = {https://doi.org/10.22331/q-2021-06-29-484},
  title = {Characterising and bounding the set of quantum behaviours in contextuality scenarios},
  author = {Chaturvedi, Anubhav and Farkas, M{\'{a}}t{\'{e}} and Wright, Victoria J},
  journal = {{Quantum}},
  issn = {2521-327X},
  publisher = {{Verein zur F{\"{o}}rderung des Open Access Publizierens in den Quantenwissenschaften}},
  volume = {5},
  pages = {484},
  month = jun,
  year = {2021}
}

@article{Plavala-prl,
  title = {Contextuality as a Precondition for Quantum Entanglement},
  author = {Pl\'avala, Martin and G\"uhne, Otfried},
  journal = {Phys. Rev. Lett.},
  volume = {132},
  issue = {10},
  pages = {100201},
  numpages = {7},
  year = {2024},
  month = {Mar},
  publisher = {American Physical Society},
  doi = {10.1103/PhysRevLett.132.100201},
  url = {https://link.aps.org/doi/10.1103/PhysRevLett.132.100201}
}

@article{Shin2021QuantumContextual,
  author  = {Shin, J. and Ha, D. and Kwon, Y.},
  title   = {Quantum Contextual Advantage Depending on Nonzero Prior Probabilities in State Discrimination of Mixed Qubit States},
  journal = {Entropy},
  year    = {2021},
  volume  = {23},
  number  = {12},
  pages   = {1583},
  doi     = {10.3390/e23121583},
}
%\nocite{*}

\end{document}